\documentclass[12pt,fleqn]{article}
\usepackage[T1]{fontenc}
\usepackage[utf8]{inputenc}
\usepackage[english]{babel}
\usepackage{amssymb,amsmath}
\usepackage{amsthm}
\usepackage{cancel}
\usepackage{palatino}
\usepackage{color}
\usepackage{longtable}
\usepackage{mathdots}
\usepackage{enumerate}

\allowdisplaybreaks

\numberwithin{equation}{section}

\newtheorem{thm}{Theorem}[]
\newtheorem{coro}[thm]{Corollary}
\newtheorem{prop}[thm]{Proposition}

\theoremstyle{definition}
\newtheorem*{rmk}{Remark}

\theoremstyle{definition}

\def\d{\partial}
\def\f{\frac}

\def\p{\partial}
\newcommand{\eqa}{\begin{eqnarray}}
\newcommand{\eeqa}{\end{eqnarray}}
\newcommand{\beq}{\begin{equation}}
\newcommand{\eeq}{\end{equation}}
\def\proof{\begin{center} {\bf Proof:} \end{center}\vspace{0.5pt}}
\def\QEDclosed{\mbox{\rule[0pt]{1.3ex}{1.3ex}}} 
\def\QED{\QEDclosed} 
\def\endproof{\hspace*{\fill}~\QED\par\endtrivlist\unskip}

\setlength{\textwidth}{6.0in}
\setlength{\topmargin}{-0.2in}
\setlength{\textheight}{9.0 in} 
\setlength{\oddsidemargin}{.3in}
\setlength{\baselineskip}{2.5\baselineskip}    


 
\begin{document}
\title{Deformations of non semisimple Poisson pencils of hydrodynamic type}
\author{Alberto Della Vedova ${}^{*}$, Paolo Lorenzoni ${}^{*}$ and Andrea Savoldi ${}^{**}$}
    \date{}
    \maketitle
    \vspace{-7mm}
\begin{center}
${}^{*}$ Dipartimento di Matematica e Applicazioni, University of Milano-Bicocca\\
via Roberto Cozzi 53 I-20125 Milano, Italy\\
${}^{**}$ Department of Mathematical Sciences, Loughborough University \\
Leicestershire LE11 3TU, Loughborough, United Kingdom \\
e-mails: \\[1ex] 
\texttt{alberto.dellavedova@unimib.it}\\
\texttt{paolo.lorenzoni@unimib.it}\\
\texttt{A.Savoldi@lboro.ac.uk}\\
\end{center}

\bigskip

\begin{abstract}
We study deformations of two-component non semisimple Poisson pencils of hydrodynamic type associated with  Balinski\v{\i}-Novikov algebras. We show that in most cases the second order deformations are parametrized by two functions of a single variable. It turns out that one function is invariant with respect to the subgroup of Miura transformations preserving the dispersionless limit and another function is related to a one-parameter family of truncated structures. In two expectional cases the second order deformations are parametrized by four functions. Among them two are invariants and two are related to a two-parameter family of truncated structures. We also study the lift of deformations of n-component semisimple structures. This example suggests that deformations of non semisimple pencils corresponding to the lifted invariant parameters are unobstructed.

\bigskip

\noindent MSC:  37K05, 37K10, 37K25,  53D45.

\bigskip

Keywords:  bi-Hamiltonian structures, integrable hierarchies.
\end{abstract}

\tableofcontents

\section{Introduction}
Poisson pencils of hydrodynamic type and their deformations play an important role in the modern theory of integrable PDEs.
 Originally the study of such structures was motivated  by questions arising in the theory of Frobenius manifolds, Gromov-Witten invariants and topological field theory \cite{D,DZ}. In this setting, the deformations satisfy some additional constraints ($\tau$-structure, Virasoro constraints) and the undeformed pencil is related to a Frobenius manifold \cite{D}.

A perturbative approach to the study of these deformations was developed by Dubrovin and Zhang in \cite{DZ}. In their approach, the full pencil
\begin{equation}\label{pencil}
\begin{array}{l}
\displaystyle
\Pi^{ij}_{\lambda}=\omega^{ij}_{2}+\sum_{k\ge 1}\epsilon^k\sum_{l=0}^{k+1}A^{ij}_{2;k,l}(u,u_x,\dots,u_{(l)})\d_x^{(k-l+1)}\\
\displaystyle
\qquad
-\lambda\left(\omega^{ij}_{1}+\sum_{k\ge 1}\epsilon^k\sum_{l=0}^{k+1}A^{ij}_{1;k,l}(u,u_x,\dots,u_{(l)})\d_x^{(k-l+1)}\right),
\end{array}
\end{equation}
($A^{ij}_{\theta;k,l}$ are homogeneous differential polynomials of degree $l$) is obtained via a bi-Hamiltonian deformation procedure from the dispersionless limit $\epsilon\to 0$:
\beq\label{eq::hdpencil}
\omega^{ij}_{2}-\lambda\omega^{ij}_{1}=g_{2}^{ij} \d_x+b^{ij}_{2;k} u^k_{x}-\lambda\left(g_{1}^{ij} \d_x+b^{ij}_{1;k} u^k_{x}\right).
\eeq
The pencil of metrics \cite{D,F} $g_{\lambda}=g_{2}-\lambda g_{1}$ defining this limit is assumed to be semisimple, meaning that  
there exists a special set of coordinates, the roots $(r^1,...,r^n)$ of the equation
 $\det g_{\lambda}=0$, such that both metrics
 of the pencil $g_{\lambda}$ take diagonal form.

Two deformations $\Pi_{\lambda}$ and $\tilde\Pi_{\lambda}$ of the same pencil are considered equivalent if they are related by
a  Miura transformation of the form
 \beq\label{Miura}
\tilde{u}^i= u^i+\sum_{k\ge1}\epsilon^k F^i_k(u,u_x,\dots,u_{(k)}),
\eeq
where  $F^i_k(u,u_x,\dots,u_{(k)})$ are differential polynomials of degree $k$. This means that two pencils  belonging to the same class are related by
$$
\tilde\Pi_{\lambda}^{ij}=L^{*i}_k\Pi_{\lambda}^{kl}L^j_l,
$$
where
$$L^i_k=\sum_s(-\d_x)^s\f{\d\tilde{u}^i}{\d u^{(k,s)}},\qquad L^{*i}_k=\sum_s\f{\d\tilde{u}^i}{\d u^{(k,s)}}\d_x^s.$$
Dubrovin, Liu and Zhang proved that the equivalence classes are labelled by $n$ functions $c^i(r^i)$ called \emph{central invariants} \cite{LZ,DLZ}. These functions are obtained by expanding the roots $\lambda^i$ of the equation
$$
{\rm det}\left(g^{ij}_{2}-\lambda g^{ij}_{1}+\sum_{k\ge 1}\left(A^{ij}_{2;k,0}(u)-\lambda A^{ij}_{1;k,0}(u)\right)p^{k}\right)=0,
$$
 near $\lambda^i=r^i$:
\begin{equation}\label{expansion}
\lambda^i=r^i+\sum_{k=1}^{\infty}\lambda^i_{2k} p^{2k},
\end{equation}
and selecting the coefficient of $p^2$. The central invariants are then defined as \cite{DLZ2,LZ}:
$$c^i=\f{1}{3}\f{\lambda^i_2}{g_1^{ii}}=\frac{1}{(f^i)^2} \left(Q^{ii}_2-r^i Q^{ii}_1+\sum_{k\neq i} \frac{(P^{ki}_2-r^i P^{ki}_1)^2}{f^k (r^k-r^i)} \right), \quad i=1,\ldots, n,$$
where $f^i$ are the diagonal components of the contravariant metric $g_1$ in canonical coordinates and
$$
P^{ij}_{\theta}(u)=A^{ij}_{\theta;1,2}(u), \quad Q^{ij}_{\theta}(u)=A^{ij}_{\theta;2,3}(u),\quad i,j=1,\ldots, n, \quad \theta=1,2. 
$$
They can also be defined by (see \cite{FL})
\begin{equation*}
c_i=-\frac{1}{3 f^i} \mathrm{Res}_{\lambda=r^i} \mathrm{Tr} \left[g_{\lambda}^{-1}(Q^{ij}_{\lambda}+(g_{\lambda}^{-1})_{lk} P^{li}_\lambda P^{kj}_{\lambda})\right],
\end{equation*}
where $Q^{ij}_{\lambda}=Q^{ij}_{2}-\lambda Q^{ij}_{1}$ and $ P^{ij}_{\lambda}=P^{ij}_{2}-\lambda P^{ij}_{1}.$
\newline
\newline
In this framework the following facts should be mentioned:
\begin{itemize}
\item Each function $c^i$ depends only on the corresponding canonical coordinate $r^i$ and it is invariant with respect to Miura transformations \eqref{Miura} \cite{LZ}. 
\item Two deformations (of the same pencil) belong to the same class of equivalence if and only if they have the same central invariants \cite{DLZ}.
\item For any choice of the dispersionless limit and of the central invariants the equivalence classes are not empty. This fact, suggested by some computations (for the scalar case see \cite{L,AL}),
   has been proved only recently: by Liu and Zhang  in the scalar case \cite{LZcoho} and  by Carlet, Posthuma and Shadrin  in the general semisimple case \cite{CPS}. The proof is based on the vanishing of certain cohomology groups introduced in \cite{LZ}.
\item Fixed the dispersionless limit $\omega_{\lambda}$ and the central invariants $c^i(r^i)$ there exists a Miura transformation
 \eqref{Miura} reducing the pencil to the standard form \cite{LZ}
\begin{eqnarray*}
\Pi_{\lambda}&=&\omega_{2}-\lambda\omega_{1}+\epsilon^2 {\rm Lie}_{X_{(c_1,..,c_n)}} \omega_{1}+\epsilon^4 \Pi_4+\epsilon^6 \Pi_6+...\\
&=&\omega_{2}-\lambda\omega_{1}+\epsilon^2 {\rm Lie}_{Y_{(c_1,..,c_n)}} \omega_{2}+\epsilon^4 \Pi_4+\epsilon^6 \Pi_6+...
\end{eqnarray*}
where the \emph{polynomial} vector fields $X_{(c_1,...,c_n)}$ and $Y_{(c_1,...,c_n)}$  can be written as difference of two Hamiltonian vector fields 
$$X_{(c_1,...,c_n)}=\omega_{2}\,\delta H-\omega_{1}\,\delta K,\qquad Y_{(c_1,...,c_n)}=\omega_{2}\,\delta H'-\omega_{1}\,\delta K'$$ 
with \emph{non polynomial} hamiltonian densities:
\begin{eqnarray}\label{H}
H[r]&=&\sum_{i=1}^n\int c^i(r^i)r^i_x{\rm log}r^i_x\,dx,\quad K[r]=\sum_{i=1}^n\int r^i c^i(r^i)r^i_x{\rm log}r^i_x\,dx.\\
H'[r]&=&\sum_{i=1}^n\int \f{c^i(r^i)}{r^i}r^i_x{\rm log}r^i_x\,dx,\quad  K'[r]=\sum_{i=1}^n\int c^i(r^i)r^i_x{\rm log}r^i_x\,dx.
\end{eqnarray}
\item The coefficients $F_k(u,u_x,\dots,u_{(k)})$ of the Miura transformation \eqref{Miura} are assumed to depend  \emph{polynomially} on the derivatives of $u^i$.
 Removing this assumption the classification problem becomes "trivial": all deformations turn out to be equivalent to their dispersionless limit.
 This remarkable property of the deformations was discovered in \cite{DLZ} and it is called
 \emph{quasitriviality}. For instance, it is easy to check that the \emph{canonical quasi-Miura transformation} generated by the Hamiltonian $H$ defined in the formula \eqref{H} reduces the pencil $\Pi^{ij}_\lambda$ to the form $\omega^{ij}_{2}-\lambda\omega^{ij}_{1}+\mathcal{O}(\epsilon^4).$
\end{itemize}

In the present paper we start the study of the non semisimple case. 
Whereas the semisimple case is fairly understood, the non semisimple case is widely open.
Beside computational difficulties, the lack of canonical coordinates, or at least of a normal form theorem for non semisimple pencils, makes very hard a unified approach to the problem. For this reason  in this paper
 we try and get some information on the general case focusing on two special subcases where
  computations are feasible:
\newline
\newline
\emph{The deformations of Poisson pencils related to two-dimensional Balinski\v{\i}-Novikov algebras \cite{BN} and the associated invariant bilinear forms}.
 These are two component Poisson pencils that can be reduced to the form 
$$
\omega^{ij}_{2}-\lambda\omega^{ij}_{1}=g^{ij} \d_x+b^{ij}_{k} u^k_{x}-\lambda\eta^{ij} \d_x
$$
where $g^{ij}$ depends linearly on the variable $(u^1,...,u^n)$ and the coefficients $b^{ij}_{k}$ and $\eta^{ij}$ are constant.  Special deformations associated with second and third order cocycles of Balinski\v{\i}-Novikov algebras naturally arise in the study of multi-component generalizations of the Camassa-Holm equation \cite{SS}. We will consider deformations of two component non degenerate structures related to Balinski\v{\i}-Novikov algebras, that is the cases T3, N3, N4 (for $\eta^{11}=0$), N5 and N6 (for $\kappa\neq -1$) of the Bai-Meng's list \cite{BM} (which is recalled afterwards in Section \ref{sect_BN}, Table \ref{tab1}). The cases N1 and N4 with $\eta^{11}\ne 0$ are semisimple and then they are covered by Dubrovin-Liu-Zhang theory.
The non semisimple structures we focus on are summarized on the next table, where we also write down the corresponding affinor $L=g \eta^{-1}$.

\newpage

\LTcapwidth=\textwidth
\arraycolsep=4.5pt
\setlength{\tabcolsep}{3.5pt}
{\footnotesize
\begin{longtable}{ c  c  c  c  c  c }
\caption{Pair of metrics of bi-Hamiltonian structures.}\label{tab4}\\
 \hline
Type & \begin{tabular}{c} Linear metrics \\ $g$ \end{tabular}&  \begin{tabular}{c}Constant metrics\\ $\eta$ \end{tabular} & \begin{tabular}{c}Affinors\\ $L$ \end{tabular} \\\hline\\[-10pt]
(T3)
&
$
\begin{pmatrix}
 0 &-u^1 \\
 -u^1 & 0 \\
\end{pmatrix}
$ &
$
\begin{pmatrix}
 0 & \eta^{12} \\
 \eta^{12} & \eta^{22} \\
\end{pmatrix}
$
&
$
\begin{pmatrix}
-\frac{u^1}{\eta^{12}} &0 \\
\frac{\eta^{22}u^1}{(\eta^{12})^2} &  -\frac{u^1}{\eta^{12}} \\
\end{pmatrix}
$
\smallskip\\
(N5) 
&
$
\begin{pmatrix}
 0 &  u^1 \\
  u^1 & 2 (u^1+u^2) \\
\end{pmatrix}
$ &
$
\begin{pmatrix}
  0 &  \eta^{12}\\
  \eta^{12} & \eta^{22} \\
\end{pmatrix}
$
&
$
\begin{pmatrix}
\frac{u^1}{\eta^{12}} &0 \\
\frac{2 (u^1+u^2)}{\eta^{12}}-\frac{\eta^{22}u^1}{(\eta^{12})^2} &  \frac{u^1}{\eta^{12}} \\
\end{pmatrix}
$
\smallskip\\
(N3,N4,N6) 
&
$
\begin{array}{c}
\begin{pmatrix}
 0 &  (1+\kappa) u^1 \\
 (1+\kappa) u^1 & 2 u^2 \\
\end{pmatrix}\\
\end{array}
$ &
$
\begin{pmatrix}
  0 &  \eta^{12}\\
  \eta^{12} & \eta^{22} \\
\end{pmatrix}
$
&
$
\begin{pmatrix}
\frac{(1+\kappa)u^1}{\eta^{12}} &0 \\
\frac{2 u^2}{\eta^{12}}-\frac{(1+\kappa)\eta^{22}u^1}{(\eta^{12})^2} &  \frac{(1+\kappa)u^1}{\eta^{12}} \\
\end{pmatrix}
$
\smallskip\\

\hline
\end{longtable}
}

We prove that in the cases T3, N3 (corresponding to $\kappa=1$), N5 and N6 with  $\kappa\neq 0,-1, -2$ the deformations are quasi-trivial and can be reduced to the form
$$
\Pi_{\lambda}=\omega_{2}-\lambda\omega_{1}+\epsilon^2 {\rm Lie}_{X_{(F_1,F_2)}} \omega_{2}+\mathcal{O}(\epsilon^3)\\
$$
with $X_{(F_1,F_2)}=\omega_1\,\delta H-\omega_2\,\delta K$ where
$$
H[u]=\int \sum_{i,j} \left( h_{ij} u^i_x \log u^j_x  \right) \ dx,\qquad K[u]=\int \sum_{i,j} \left( f_{ij} u^i_x \log u^j_x  \right) \ dx,
$$
and the functions $h_{ij}$ and $f_{ij}$ are uniquely determined by two arbitrary functions $F_1,F_2$. Moreover both functions $F_1$ and $F_2$ depend only on the eigenvalue of the affinor $L$.

The cases N4  (corresponding to $\kappa=0$) and N6  with  $\kappa=-2$ are more involved and the functions labelling non Miura equivalent deformations are 4 (still depending on the eigenvalue of the affinor $L$). 

In all cases one half of the arbitrary functions parametrizing the deformations (one in the two-parameter case, two in the four-parameter case)  is related to a family of truncated structures and one half  is invariant with respect to the Miura transformations that preserve the dispersionless limit. The invariant functions are related to the first coefficients of the expansion \eqref{expansion}
 (in the second case  also the odd powers of $p$ appear in this expansion):
 the coefficients of $p^2$ in the case of the algebras T3, N3, N5 and N6 with  $\kappa\neq 0,-1, -2$ and the coefficients of $p$ and $p^2$ in the case of the algebras N4 and N6  with  $\kappa=-2$. Moreover our computations suggest that in the exceptional cases generic deformations are not quasi-trivial. This fact is rather unexepcted and deserves a deeper investigation.
\newline
\newline
\emph{The lift of deformations of semisimple structures}. These are obtained using an infinite dimensional version of the complete lift 
 introduced by Yano and Kobayashi in \cite{YK}. Whereas elementary, this case is important for it provides examples of full deformations of non semisimple structures depending on functional 
 parameters. By construction all deformations of a $n$-component semisimple structure can be lifted to deformations of a $2n$-component non semisimple structure.
 This means that the deformations of the lifted Poisson pencils contain $n$ functional parameters \emph{at least}.
This example suggests that also in the non semisimple case
 the deformations are unobstructed.

\section{Linear Poisson bivectors of hydrodynamic type}\label{sect_BN}
Let us introduce Poisson bivector of hydrodynamic type on the loop space $\mathcal L(M)$. 
The tangent space to $\mathcal L(M)$ at a loop $\gamma : S^1 \to M$ is naturally identified with the space $\Gamma(S^1,\gamma^*TM)$ of vector fields along $\gamma$.
On the other hand (a subspace of) the cotangent space to $\mathcal L(M)$ at $\gamma$ is identified with the space $\Gamma(S^1,\gamma^*T^*M)$ of covector fields along $\gamma$, and the pairing between a tangent vector $X$ and a covector $\xi$ is just $\int_{S^1} \xi(X)\,dx $.

Let $g$ be a pseudo-metric on $M$ with Levi-Civita connection $\nabla$. For any covector $\xi \in \Gamma(S^1,\gamma^*T^*M)$, let $X_\xi \in \Gamma(S^1,\gamma^*TM)$ be the pointwise metric dual of $\xi$.
Given two covectors $\xi,\eta \in \Gamma(S^1,\gamma^*T^*M)$, letting
\begin{equation*}
P(\xi,\eta)
= \int_{S^1} \xi(\nabla_{\dot \gamma} X_\eta)\,dx
\end{equation*}
defines a bivector on $\mathcal L(M)$.
As shown by Dubrovin and Novikov, $P$ is a Poisson structure on $\mathcal L(M)$ if and only if $\nabla$ is flat \cite{DN}.
In local coordinates $u^i$ on $M$ and $x$ on $S^1$ the Poisson tensor $P$ is represented by a differential operator of the form
\beq\label{pbht}
P^{ij}=g^{ij}(u)\d_x-g^{il}\Gamma^{j}_{lk}(u)u^k_x,
\eeq
where $\Gamma^{j}_{lk}$ are the  Christoffel symbols correponding to $g$.

Dubrovin-Novikov operators naturally appear in the study of Hamiltonian quasilinear systems of PDEs
$$
u^i_t=V^i_j(u)u^j_x,\qquad i=1,...,n,
$$
and their dispersive Hamiltonian deformations
$$
u^i_t=V^i_j(u)u^j_x+\epsilon\left(A^i_j(u)u^j_{xx}+B^i_{jk}(u)u^j_xu^k_x\right)+\mathcal{O}(\epsilon^2).
$$
In this paper we will study linear Hamiltonian operators. As proved by  Balinski\v{\i} and Novikov in \cite{BN} these operators
 have the form
$$
P^{ij}=(b^{ij}_k+b^{ji}_k)u^k\d_x+b^{ij}_{k}u^k_x,
$$
where the numbers $b^{ij}_k$ are the structure constants  of an algebra $B$ satisfying the following properties
\begin{gather*}
a \cdot ( b\cdot c)= b \cdot (a \cdot c), \label{lie1} \\
(a \cdot b) \cdot c-a \cdot( b \cdot c) = (a \cdot c) \cdot b-a \cdot( c \cdot b).  \label{lie2}
\end{gather*}
We refer to them as \emph{Balinski\v{\i}-Novikov algebras}, even if in the literature they are often called Novikov algebras (following \cite{O}).

A first approach to the study of such algebras was made by Zelmanov \cite{Z}. In low dimensions the problem of classification was addressed by Bai and Meng \cite{BM,BM1} and recently by Burde and de Graaf \cite{BdG}, resulting in a complete description of one-, two- and three-dimensional Balinski\v{\i}-Novikov algebras. Unfortunately, a full classification of these structures of dimension four and higher is far from being complete.

\subsection{Invariant bilinear forms and bi-Hamiltonian structures}
Given a Balinski\v{\i}-Novikov algebra $B$, as observed in \cite{SS}, any invariant bilinear symmetric form on it give rise to a bi-Hamiltonian structure in a canonical way.
For convenience of the reader let us briefly recall how they are defined.
Let $e^1, \ldots, e^n$ be a basis of $B$, and let $b^{ij}_{k}$ be the corresponding structure constants.
A bilinear form $\eta: B \times B \to F$ is called \emph{invariant} if and only if
\begin{equation*}\label{eq1}
\eta(e^i\cdot e^j, e^k)=\eta(e^i, e^k\cdot e^j).
\end{equation*}

Bai and Meng classified these invariant bilinear forms on two- and three-dimensional Balinski\v{\i}-Novikov algebras in \cite{BM,BMa}.
For future reference we recall the two-dimensional classification in the following table.

\LTcapwidth=\textwidth
\arraycolsep=4.5pt
\setlength{\tabcolsep}{3.5pt}
{\footnotesize
\begin{longtable}{ c  c  c  c }
\caption{Two-dimensional Balinski\v{\i}-Novikov algebras and invariant bilinear forms.}\label{tab1}\\
\hline
Type & \begin{tabular}{c} Characteristic \\matrix $e^i\cdot e^j$\end{tabular} &  \begin{tabular}{c}Linear Poisson \\structure \end{tabular} &  \begin{tabular}{c}Invariant \\bilinear forms \end{tabular} \\
\hline\\[-10pt]
(T1) & $
\begin{pmatrix}
 0 & 0 \\
 0 & 0 \\
\end{pmatrix}
$ &
 $
\begin{pmatrix}
 0 & 0 \\
 0 & 0 \\
\end{pmatrix}
$
&
$
\begin{pmatrix}
 \eta^{11} & \eta^{12} \\
 \eta^{21} & \eta^{22} \\
\end{pmatrix}
$
\smallskip\\
(T2) & $
\begin{pmatrix}
 e^2 & 0 \\
 0 & 0 \\
\end{pmatrix}
$ &
 $
\begin{pmatrix}
 2 u^2 \partial_x + u^2_x& 0 \\
 0 & 0 \\
\end{pmatrix}
$
&
$
\begin{pmatrix}
 \eta^{11} & \eta^{12} \\
 \eta^{12} & 0 \\
\end{pmatrix}
$
\smallskip\\
(T3) & $
\begin{pmatrix}
 0 & 0 \\
 -e^1 & 0 \\
\end{pmatrix}
$ &
$
\begin{pmatrix}
 0 &-u^1 \partial_x\\
 -u^1  \partial_x -u^1_x& 0 \\
\end{pmatrix}
$
&
$
\begin{pmatrix}
 0 & \eta^{12} \\
 \eta^{12} & \eta^{22} \\
\end{pmatrix}
$ 
\smallskip\\
(N1) & $
\begin{pmatrix}
 e^1 & 0 \\
 0 & e^2 \\
\end{pmatrix}
$ &
$
\begin{pmatrix}
 2 u^1 \partial_x  +u^1_x& 0 \\
 0 & 2 u^2\partial_x + u^2_x \\
\end{pmatrix}
$
&
$
\begin{pmatrix}
 \eta^{11} & 0 \\
 0 & \eta^{22} \\
\end{pmatrix}
$ 
\smallskip\\
(N2) & $
\begin{pmatrix}
 e^1 & 0 \\
 0 & 0 \\
\end{pmatrix}
$ &
$
\begin{pmatrix}
2 u^1 \partial_x + u^1_x & 0 \\
 0 & 0\\
\end{pmatrix}
$
&
$
\begin{pmatrix}
 \eta^{11} & 0 \\
 0 & \eta^{22} \\
\end{pmatrix}
$
\smallskip\\
(N3) & $
\begin{pmatrix}
 e^1 & e^2 \\
 e^2 & 0 \\
\end{pmatrix}
$ &
$
\begin{pmatrix}
 2 u^1 \partial_x + u^1_x& 2 u^2\partial_x + u^2_x \\
 2 u^2\partial_x + u^2_x & 0 \\
\end{pmatrix}
$
&
$
\begin{pmatrix}
 \eta^{11} & \eta^{12} \\
 \eta^{12} & 0 \\
\end{pmatrix}
$
\smallskip\\
(N4) & $
\begin{pmatrix}
 0 & e^1 \\
 0 & e^2 \\
\end{pmatrix}
$ &
$
\begin{pmatrix}
 0 & u^1 \partial_x + u^1_x\\
 u^1\partial_x  & 2 u^2 \partial_x + u^2_x\\
\end{pmatrix}
$
&
$
\begin{pmatrix}
 \eta^{11} & \eta^{12} \\
 \eta^{21} & \eta^{22} \\
\end{pmatrix}
$ 
\smallskip\\
(N5) & $
\begin{pmatrix}
 0 & e^1 \\
 0 & e^1+e^2 \\
\end{pmatrix}
$ &
$
\begin{pmatrix}
0 &  u^1 \partial_x + u^1_x \\
u^1 \partial_x  & 2 (u^1+u^2) \partial_x + u^2_x + u^1_x \\
\end{pmatrix}
$
&
$
\begin{pmatrix}
 0 &  \eta^{12} \\
  \eta^{12} & \eta^{22} \\
\end{pmatrix}
$
\smallskip\\
(N6) &
$
\begin{array}{c}
\begin{pmatrix}
 0 & e^1 \\
 \kappa e^1 & e^2 \\
\end{pmatrix}
\\
\kappa\neq0,1
\end{array}
$
&
$
\begin{pmatrix}
 0 &  (1+\kappa) u^1 \partial_x + u^1_x\\
 (1+\kappa) u^1\partial_x + \kappa u^1_x & 2 u^2 \partial_x + u^2_x \\
\end{pmatrix}
$
&
$
\begin{pmatrix}
0 & \eta^{12} \\
\eta^{12} & \eta^{22} \\
\end{pmatrix}
$ \smallskip\\
\hline
\end{longtable}
}

\begin{rmk}
Notice that  the case N4 with $\eta^{11}\ne 0$ is semisimple. For this reason we will consider only the case $\eta^{11}= 0$.
 The cases N3 and N4 can be considered as subcases of N6, if we remove the constraints $\kappa\neq0,1$. Indeed, for $\kappa=0$ we easily get N4 (with $\eta^{11}= 0$)  while N3 is equivalent to the case $\kappa=1$, up to swapping the local coordinates $u^1,u^2$. According to \cite{BM}, this distinction is due to different algebraic properties: the cases N3 and N4 are characterized by the associativity of the algebra, while this is not the case of N6 with $\kappa\neq0,1$. However, for our purposes, we do not need to distinguish these cases. 
\end{rmk}

Let us point out that adding the constraint $\eta^{21}=\eta^{12}$ in T1 and N4, the bilinear invariant forms associated with two-dimensional Balinski\v{\i}-Novikov algebra become symmetric. As observed by Strachan and Szablikowski in \cite{SS} the associated Hamiltonian operator $\eta^{ij}\d_x$ is compatible with the linear Hamiltonian operator defining the Balinski\v{\i}-Novikov algebra.

\begin{rmk}
A pair of compatible flat metrics defines a $(2+1)$-Poisson structure of hydrodynamic type under some additional conditions. Among the structures coming from two component Balinski\v{\i}-Novikov algebras, such additional conditions are satisfied just by $N6$ with $\kappa = -2$ \cite{DN1,M1,M,FLS}.
\end{rmk}

\subsection{Classification results}
In this section we provide a classification of second order deformations of Poisson pencils coming from Balinski\v{\i}-Novikov algebras.

By definition, a k-th deformation of a Poisson pencil of hydrodynamic type \eqref{eq::hdpencil} is a deformation \eqref{pencil} such that $[\tilde\Pi_\lambda,\tilde\Pi_\lambda]=\mathcal{O}(\epsilon^{k+1})$.
Here
where $\tilde \Pi_\lambda^{ij}$ denotes the distribution
\begin{equation*}
\begin{array}{l}
\displaystyle
\tilde\Pi^{ij}=\omega^{ij}_{2}+\sum_{k\ge 1}\epsilon^k\sum_{l=0}^{k+1}A^{ij}_{2;k,l}(u,u_x,\dots,u_{(l)})\delta^{(k-l+1)}(x-y)\\
\displaystyle
\qquad
-\lambda\left(\omega^{ij}_{1}+\sum_{k\ge 1}\epsilon^k\sum_{l=0}^{k+1}A^{ij}_{1;k,l}(u,u_x,\dots,u_{(l)})\delta^{(k-l+1)}(x-y)\right),
\end{array}
\end{equation*}
and the Schouten bracket is defined as follows \cite{DZ}:
\begin{multline*}
 [\tilde \Pi_\lambda,\tilde \Pi_\lambda]^{ijk}(x,y,z) =\\
2 \frac{\partial \tilde \Pi^{ij}_\lambda(x,y)}{\partial u^{l}_{(s)}(x)} \partial_x^s \tilde \Pi^{lk}_\lambda(x,z)
+ 2 \frac{\partial \tilde \Pi^{ki}_\lambda(z,x)}{\partial u^{l}_{(s)}(z)} \partial_z^s \tilde \Pi^{lj}_\lambda(z,y)
+ 2 \frac{\partial \tilde \Pi^{jk}_\lambda(y,z)}{\partial u^{l}_{(s)}(y)} \partial_y^s \tilde \Pi^{li}_\lambda(y,x),
\end{multline*}

We have to distinguish two cases:
\begin{enumerate}
\item The cases T3, N3, N5 and N6 with  $\kappa\neq 0,-1, -2$ where second order deformed structures depend on two functions.
\item The remaining cases N4 (which corresponds to $\kappa=0$) and N6 with $\kappa=-2$, namely
$$
g_1=
\begin{pmatrix}
0 & \eta^{12}\\
\eta^{12} & \eta^{22}
\end{pmatrix},
\quad
g_2=
\begin{pmatrix}
0 & \pm u^1\\
\pm u^1 & 2 u^2
\end{pmatrix},
$$
where second order deformed structures depend on four functions.
\end{enumerate}

\begin{thm}\label{thm_quasi}
In the cases T3, N3, N5 and N6 with $\kappa\neq 0,-1, -2$, second order deformations can be reduced by a Miura transformation to the form
$$
\Pi_{\lambda}=\omega_2-\lambda\omega_1+\epsilon^2 {\rm Lie}_{X_{(F_1,F_2)}} \omega_2+\mathcal{O}(\epsilon^3)\\
$$
with $X_{(F_1,F_2)}= \omega_1\,\delta H-\omega_2\delta K$ where
$$
H[u]=\int \sum_{i,j} \left( h_{ij} u^i_x \log u^j_x  \right) \ dx,\qquad K[u]=\int \sum_{i,j} \left( k_{ij} u^i_x \log u^j_x  \right) \ dx,
$$
and the functions $h_{ij}$ and $k_{ij}$ are uniquely determined in terms of two arbitrary functions $F_1,F_2$ depending only on the eigenvalue of the affinor $L=g_2g_1^{-1}$. Calling $\mathcal{K}=(k_{ij})$ and $\mathcal{H}=(h_{ij})$, we have $\mathcal{K}=L^T \mathcal{H}$, where $L^T$ means the transpose of $L$, and $\mathcal{H}$ is given respectively for each case by 
\begin{itemize}
\item T3: $h_{12}=h_{22}=0$ and
$$
h_{11}=
\frac{e^{-\frac{ \eta^{12} u^2}{ \eta^{22} u^1}}}{3  \eta^{12}}
\left(
\eta^{22} u^1 F_2' + \frac{\eta^{12} u^2 +   \eta^{22} u^1}{u^1} F_2
\right)
-F_1,
\quad
h_{21}=-\frac{e^{-\frac{ \eta^{12} u^2}{ \eta^{22} u^1}} }{3 } F_2.
$$
\item N5: $h_{12}=h_{22}=0$ and
$$
h_{11}=\frac{\sqrt{2 \eta^{12} (u^1+u^2) - \eta^{22} u^1} F_2'}{3 \eta^{12}}
+\frac{(2 \eta^{12} - \eta^{22})F_2}{6 \eta^{12} \sqrt{2 \eta^{12} (u^1+u^2) - \eta^{22} u^1}}
+\frac{ F_1}{2 \eta^{12}},
$$
$$
h_{21}=\frac{1}{3 \sqrt{2 \eta^{12} (u^1+u^2) - \eta^{22} u^1}} F_2.
$$
\item N3, N6 $(\kappa\neq 0,-1, -2)$: $h_{12}=h_{22}=0$ and
\begin{gather*}
h_{11}=
\frac{ (2 \eta^{12} u^2 -(\kappa+1) \eta^{22} u^1)^{\frac{\kappa+1}{2}} F_2'}{3 (\kappa+1)^2 \eta^{12}}
-
\frac{\eta^{22}  (2 \eta^{12} u^2 -(\kappa+1) \eta^{22} u^1)^{\frac{\kappa-1}{2}} F_2}{6 \eta^{12}}\\
\quad
+
\frac{F_1}{\eta^{12} \kappa (\kappa+2)},
\end{gather*}
$$
h_{21}=\frac{(2 \eta^{12} u^2 -(\kappa+1) \eta^{22} u^1)^{\frac{k-1}{2}}}{3 (\kappa+1)} F_2.
$$
\end{itemize}
Here $F_i=F_i(u^1)$, $i=1,2$.
\newline
\newline
In the case N4, namely
$$ 
g_2=
\begin{pmatrix}
0 & \eta^{12}\\
\eta^{12} & \eta^{22}
\end{pmatrix},
\quad
g_1=
\begin{pmatrix}
0 &  u^1\\
 u^1 & 2 u^2
\end{pmatrix},
$$
the second order deformations can be reduced by a Miura transformation to the form
$$
\Pi_{\lambda}=\omega_2-\lambda\omega_1+\epsilon^2 {\rm Lie}_{X} \omega_2+\mathcal{O}(\epsilon^3)\\
$$
where 
$$
X^i=X^i_1u^1_{xx} + X^i_2 (u^1_x)^2+X^i_3 u^1_x u^2_x + X^i_4(u^2_x)^2+X^i_5 u^2_{xx},  
$$
with
\begin{eqnarray*}
X^1_1&=& 0,\\
X^1_2&=& \theta F_1, \\
X^1_3&=& \partial_1 (\theta F_2) \\
X^1_4&=& \partial_2 (\theta F_2), \\
X^1_5&=& \theta F_2, \\
X^2_1&=& 0,\\
X^2_2&=& \theta F_3, \\
X^2_3&=&  \partial_1\left(\theta^{\frac{1}{2}} F_4-\frac{\partial_1 F_2}{\eta^{12} }\right),\\
X^2_4&=&  \partial_2\left(\theta^{\frac{1}{2}} F_4-\frac{\partial_1 F_2}{\eta^{12} }\right),\\
X^2_5&=&  \theta^{\frac{1}{2}} F_4- \frac{\partial_1 F_2}{\eta^{12} }.
\end{eqnarray*}
In the above formulas $F_i$ are $4$ arbitrary functions of $u^1$ and $\theta=(\eta^{22}  u^1-2 \eta^{12}  u^2)^{-1}$.
\newline
\newline
In the case N6 with $\kappa=-2$, namely
$$
g_1=
\begin{pmatrix}
0 & \eta^{12}\\
\eta^{12} & \eta^{22}
\end{pmatrix},
\quad
g_2=
\begin{pmatrix}
0 & - u^1\\
- u^1 & 2 u^2
\end{pmatrix},
$$
\newline
\newline
the second order deformations can be reduced by a Miura transformation to the form
\beq
\Pi_{\lambda}=\omega_2-\lambda\omega_1+\epsilon^2 {\rm Lie}_{X} \omega_2+\mathcal{O}(\epsilon^3)\\
\eeq
where
$$
X^i=X^i_1u^1_{xx} + X^i_2 (u^1_x)^2+X^i_3 u^1_x u^2_x + X^i_4(u^2_x)^2+X^i_5 u^2_{xx}, 
$$
with
\begin{eqnarray*}
X^1_1&=& 0,\\
X^1_2&=& 2 \eta^{22} \theta \left(\theta^{\frac{3}{2}} F_4 -\frac{\partial_1 (\theta^2 F_2)}{\eta^{12}} \right)+ \theta F_1,\\
X^1_3&=& 2  \eta^{12} \theta^{\frac{5}{2}} F_4 - \partial_1 (\theta^3 F_2),\\
X^1_4&=& -4  \eta^{12} \theta^4 F_2,\\
X^1_5&=& \theta^3 F_2,\\
X^2_1&=& 0,\\
X^2_2&=& F_3,\\
X^2_3&=&\partial_1 ( \theta^{\frac{3}{2}} F_4)-\frac{\partial_1^2(\theta^2 F_2)}{\eta^{12}},\\
X^2_4&=& 4 \partial_1(\theta^3 F_2)+ \partial_2 ( \theta^{\frac{3}{2}} F_4),\\
X^2_5&=& \theta^{\frac{3}{2}} F_4 -\frac{ \partial_1(\theta^2 F_2)}{\eta^{12}}.
\end{eqnarray*}
In the above formulas $F_i$ are $4$ arbitrary functions of $u^1$ and $\theta=(2 \eta^{12}  u^2 + \eta^{22}  u^1)^{-1}$.
\end{thm}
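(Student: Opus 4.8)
The plan is to determine the full space of second-order bi-Hamiltonian deformations by a direct cocycle computation and then quotient by the action of Miura transformations, exploiting the fact that in every case $g_1$ is constant and $g_2$ is linear in $u$, which keeps all the resulting systems finite and explicitly integrable.

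First I would write the most general second-order deformation
$$\Pi_\lambda=\omega_2-\lambda\omega_1+\epsilon\,\Pi^{(1)}_\lambda+\epsilon^2\,\Pi^{(2)}_\lambda+\mathcal O(\epsilon^3),$$
where each $\Pi^{(k)}_\lambda$ carries coefficients $A^{ij}_{\theta;k,l}(u,\dots,u_{(l)})$ that are homogeneous differential polynomials of the degrees prescribed in \eqref{pencil}. Imposing $[\Pi_\lambda,\Pi_\lambda]=\mathcal O(\epsilon^3)$ and collecting powers of the spectral parameter $\lambda$ splits the constraint into the requirement that $\omega_1$ and $\omega_2$ each deform to a Poisson bivector and that the two deformations remain compatible. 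In cohomological form with $d=[\omega^{(0)},\cdot]$ this reads $d\Pi^{(1)}_\lambda=0$ at first order and $d\Pi^{(2)}_\lambda=-\tfrac12[\Pi^{(1)}_\lambda,\Pi^{(1)}_\lambda]$ at second order; since the first cohomology of a hydrodynamic pencil vanishes, $\Pi^{(1)}_\lambda$ can be gauged away and the problem reduces to classifying the $d$-cocycle $\Pi^{(2)}_\lambda$ modulo coboundaries.

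Next I would solve the cocycle equations. Because $g_2$ is linear and $g_1$ constant, the Schouten bracket yields an overdetermined \emph{linear} system of PDEs for the finitely many functions $A^{ij}_{\theta;2,l}(u)$; passing to the variable given by the eigenvalue of the affinor $L=g_2g_1^{-1}$ decouples it, and its general solution is expressed through free functions of that eigenvalue plus terms absorbable by coboundaries. In parallel I would compute the coboundary space: a Miura transformation generated at order $\epsilon^2$ by a vector field $X$ acts by $\Pi_\lambda\mapsto\Pi_\lambda+\epsilon^2\,\mathrm{Lie}_X(\omega_2-\lambda\omega_1)$, so subtracting these images isolates the nontrivial classes. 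To reach the stated normal form I use that for $X=\omega_1\,\delta H-\omega_2\,\delta K$ one has $\mathrm{Lie}_X\omega_2=[\omega_1\delta H,\omega_2]$, because a Hamiltonian vector field preserves its own Poisson structure and $\mathrm{Lie}_X\omega_2=-d_2X$ is automatically $d_2$-exact. Choosing $H,K$ with the logarithmic densities in the statement makes $\mathrm{Lie}_X\omega_2$ polynomial (the quasitriviality mechanism) and matches every class not killed by a polynomial coboundary; comparing coefficients pins down $\mathcal H=(h_{ij})$, while the remaining pencil-compatibility constraint $[\omega_1,\mathrm{Lie}_X\omega_2]=0$ relates the two Hamiltonians and forces $\mathcal K=L^T\mathcal H$. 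The explicit integrations in the variable $2\eta^{12}u^2-(\kappa+1)\eta^{22}u^1$ (and its analogues for T3 and N5) then produce the closed forms for $h_{11},h_{21}$, with $h_{12}=h_{22}=0$ a gauge choice. Of the two surviving functions, one is identified as invariant by matching it to the $p^2$-coefficient of the expansion \eqref{expansion} of the roots of $L$, and the other parametrizes the one-parameter family of truncated structures.

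The main obstacle is the exceptional pair N4 ($\kappa=0$) and N6 ($\kappa=-2$). The generic reduction is resonant precisely there: the coefficient $\tfrac{F_1}{\eta^{12}\kappa(\kappa+2)}$ in $h_{11}$ and the exponents $\tfrac{\kappa\pm1}{2}$ degenerate exactly at $\kappa=0,-2$, so the cocycle system acquires additional solutions that no polynomial Miura transformation can remove. Correspondingly the expansion \eqref{expansion} now carries an odd power $p$ besides $p^2$, doubling the invariants, and a single pair $(H,K)$ no longer suffices; the deformation must instead be carried as an explicit field $X$ with components $X^i_1,\dots,X^i_5$, and the residual Schouten constraints solved directly. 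I expect the delicate point to be showing that the four functions $F_1,\dots,F_4$ are genuinely independent and that no further reduction exists — in particular confirming that generic deformations here fail to be quasi-trivial — which will require an order-by-order analysis of the bracket rather than the cohomological shortcut that settles the generic cases.
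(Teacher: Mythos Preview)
Your overall plan matches the paper's strategy---classify cocycles, quotient by Miura---but the paper organises the computation more efficiently, and you gloss over two steps that are not automatic.

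First, rather than work with the full coefficient system $A^{ij}_{\theta;k,l}$ of the pencil, the paper immediately invokes Getzler's triviality theorem for a single hydrodynamic Poisson structure to reduce the deformation to the form $\Pi_\lambda=\omega_\lambda+\epsilon Q_1+\epsilon^2 Q_2+\ldots$ in which $\omega_1$ is \emph{undeformed}. Since $H^2$ vanishes for $\omega_2$ as well, one may write $Q_k=\mathrm{Lie}_X\omega_2=d_{\omega_2}X$ for a vector field $X$ of degree $k$, and the pencil condition becomes the single equation $d_{\omega_1}d_{\omega_2}X=0$. The trivial classes are exactly those with $X=\omega_1\delta H+\omega_2\delta K$ for \emph{polynomial} $H,K$. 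Solving for $X$ directly (ten unknown functions $X^i_j(u^1,u^2)$ at second order) is considerably lighter than the full $A^{ij}$ system you propose; the paper carries this out case by case in the original coordinates.

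Second, your assertion that ``$\Pi^{(1)}_\lambda$ can be gauged away'' because ``the first cohomology of a hydrodynamic pencil vanishes'' is not a known general fact in the non-semisimple setting. Getzler's result applies to a single structure, not to the bi-Hamiltonian complex; the paper therefore verifies triviality of first-order deformations by explicit computation for each algebra (solving $d_{\omega_1}d_{\omega_2}X=0$ at degree~1 and exhibiting the absorbing Hamiltonians). You should not skip this.

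Finally, your remark that ``passing to the eigenvalue of $L$ decouples'' the system is misleading here: $L$ has a single eigenvalue with a $2\times2$ Jordan block, so there is no diagonalisation. The paper works in the given coordinates and integrates in the auxiliary variable $\theta$ (a linear combination of $u^1,u^2$); the eigenvalue enters only a posteriori, as the variable on which the surviving functions $F_i$ depend. Your observations about the resonances at $\kappa=0,-2$ and the resulting four-parameter families are correct and align with the paper's separate treatment of those cases.
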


Due to its technical nature, we postpone the proof to Appendix \ref{appA}.

\begin{coro}
In the cases T3, N3, N5 and N6 with  $\kappa\neq 0,-1, -2$, all second order deformations are quasi-trivial.
\end{coro}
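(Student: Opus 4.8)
The plan is to read the corollary off directly from Theorem \ref{thm_quasi}. By that theorem, in the cases T3, N3, N5 and N6 with $\kappa\neq 0,-1,-2$, every second order deformation is Miura-equivalent to the normal form
$$
\Pi_\lambda=\omega_2-\lambda\omega_1+\epsilon^2\,{\rm Lie}_{X_{(F_1,F_2)}}\omega_2+\mathcal{O}(\epsilon^3),\qquad X_{(F_1,F_2)}=\omega_1\,\delta H-\omega_2\,\delta K,
$$
where $H$ and $K$ have the explicit densities $\sum_{ij}h_{ij}u^i_x\log u^j_x$ and $\sum_{ij}k_{ij}u^i_x\log u^j_x$. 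Since these densities are non-polynomial in the $x$-derivatives (they contain $\log u^j_x$), the associated Hamiltonian vector fields are quasi-local, and the flows they generate are genuine quasi-Miura transformations of the type admitted in the quasi-triviality framework of \cite{DLZ}. Thus it suffices to produce a quasi-Miura transformation reducing $\Pi_\lambda$ to its dispersionless limit modulo $\mathcal{O}(\epsilon^3)$.

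First I would simplify the deformation term. Because $\omega_2\,\delta K$ is the Hamiltonian vector field of $K$ with respect to $\omega_2$, its flow preserves $\omega_2$, so ${\rm Lie}_{\omega_2\delta K}\omega_2=0$. Hence the $K$-contribution drops out of the Lie derivative and
$$
{\rm Lie}_{X_{(F_1,F_2)}}\omega_2={\rm Lie}_{\omega_1\delta H}\omega_2.
$$
This identifies the second order term as the Lie derivative of $\omega_2$ along the single quasi-local vector field $W:=\omega_1\,\delta H$, which is precisely the $\omega_1$-Hamiltonian vector field of $H$.

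Next I would apply the quasi-Miura transformation $u\mapsto u-\epsilon^2 W+\mathcal{O}(\epsilon^4)$. Under such an infinitesimal generator every bivector $P$ of the pencil transforms as $P\mapsto P+\epsilon^2\,{\rm Lie}_{W}P+\mathcal{O}(\epsilon^4)$, so the two halves become $\omega_2\mapsto\omega_2+\epsilon^2\,{\rm Lie}_{W}\omega_2+\mathcal{O}(\epsilon^4)$ and $\omega_1\mapsto\omega_1+\epsilon^2\,{\rm Lie}_{W}\omega_1+\mathcal{O}(\epsilon^4)$. The crucial point is that $W$ is a Hamiltonian vector field for $\omega_1$, so its flow preserves $\omega_1$ and ${\rm Lie}_{W}\omega_1=0$. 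Choosing the sign of the transformation so as to cancel the deformation, the $\omega_2$-part loses its $\epsilon^2$ term while the $\omega_1$-part is left untouched; in particular no residual $\lambda$-dependent contribution is produced. Therefore $\Pi_\lambda\mapsto\omega_2-\lambda\omega_1+\mathcal{O}(\epsilon^3)$, which is exactly the statement of quasi-triviality.

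The only point requiring care — and the main, rather mild, obstacle — is justifying that these manipulations are legitimate in the quasi-local setting: namely that the formal flow of a non-polynomial vector field such as $\omega_1\,\delta H$ is a well-defined quasi-Miura transformation, and that the Poisson-geometric identities ${\rm Lie}_{\omega_\theta\delta f}\omega_\theta=0$ persist when $f$ has a $\log u_x$ density. This is precisely the content of the quasi-triviality formalism of \cite{DLZ,LZ}: the relevant vector fields lie in the complex of quasi-local functionals, where the graded Jacobi identity for the Schouten bracket still holds and hence a Hamiltonian flow preserves its own Poisson structure. Granting this, the cancellation above is rigorous and the corollary follows at once from Theorem \ref{thm_quasi}.
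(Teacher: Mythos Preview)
Your argument is correct and follows essentially the same route as the paper's proof: use the quasi-Miura transformation generated by $H$ to reduce the pencil to its dispersionless limit. The paper simply states ``by construction the canonical quasi-Miura transformation generated by $H[u]$ reduces the pencil to its dispersionless limit up to terms of order $\mathcal{O}(\epsilon^4)$,'' while you unpack the mechanism behind that sentence --- namely, that ${\rm Lie}_{\omega_2\delta K}\omega_2=0$ kills the $K$-contribution and ${\rm Lie}_{\omega_1\delta H}\omega_1=0$ guarantees that no $\lambda$-dependent term survives. Your more explicit version is a useful clarification but not a different approach.
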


\proof
By construction the canonical quasi-Miura transformation generated by $H[u]$ reduces the pencil to its dispersionless limit up to terms of order $\mathcal{O}(\epsilon^4)$.

\endproof

\begin{rmk}
General Miura transformations have the form
\begin{equation*}\label{Miura_gen}
u^i \to \tilde{u}^i=f^i(u) + \sum_{k\ge1} \epsilon^k F^i_k(u,u_x,\ldots,u_{(k)}).
\end{equation*}
where ${\rm det}\frac{\d f^i}{\d u^j}\ne 0$. In this paper we are interested in  Miura transformations preserving the 
 disperionless limit and for this reason we consider the subgroup
\begin{equation*}\label{Miura_kind2}
u^i \to \tilde{u}^i=u^i + \sum_{k\ge1} \epsilon^k F^i_k(u,u_x,\ldots,u_{(k)}).
\end{equation*}
Indeed, the only diffeomorphism preserving both metrics of the pencil is the identity.
\end{rmk}

\subsection{Invariants of bi-Hamiltonian structures}
As already mentioned in the Introduction, the central invariants for deformations of  semisimple Poisson pencils of hydrodynamic type
 \eqref{pencil} are related to the roots of the equation
$$
{\rm det}\left(g^{ij}_{2}-\lambda g^{ij}_{1}+\sum_{k\ge 1}\left(A^{ij}_{2;k,0}(u)-\lambda A^{ij}_{1;k,0}(u)\right)p^{k}\right)=0.
$$
Expanding these roots  near $\lambda^i=r^i$ one obtains a series:
\begin{equation}\label{expansion2}
\lambda^i=r^i+\sum_{k=1}^{\infty}\lambda^i_{k} p^{k},
\end{equation}
whose coefficients are invariants (up to permutations) with respect to Miura transformations as shown by Dubrovin, Liu and Zhang in \cite{DLZ2}.

Due to the skew-symmetry of the pencil, the sum and product of the roots contain only even powers of $p$. In the semisimple case   also the expansions \eqref{expansion2} of the roots contain only even powers of $p$, while in the non semisimple case, in general
 also odd powers are allowed. For instance, in the case of deformations of non semisimple pencils associated with {Balinski\v{\i}-Novikov algebras one obtains the expansions 
\begin{eqnarray}\label{expansion3}
\lambda^1=u^1+\sum_{k=1}^{\infty}\lambda^1_{k} p^{k},\qquad\lambda^2=u^1+\sum_{k=1}^{\infty}\lambda^2_{k} p^{k}.
\end{eqnarray}
where, due to skew-symmetry:
\begin{equation}\label{sks}
\lambda^1_{2k+1}+\lambda^2_{2k+1}=0,\qquad \lambda^1_{2k}-\lambda^2_{2k}=0.
\end{equation}
Thus it is natural to divide Poisson  pencils associated with {Balinski\v{\i}-Novikov algebras in two classes: those admitting 
 as invariants $\lambda^1_{1}=-\lambda^2_{1}$ and  $\lambda^1_{2}=\lambda^2_{2}$  (and eventually higher order coefficients of the expansions \eqref{expansion3}) and those admitting as invariants
 only $\lambda^1_{2}-\lambda^2_{2}$ (and eventually higher order coefficients of the expansions \eqref{expansion3}).

\subsubsection{The cases T3, N3, N5 and N6 with  $\kappa\neq 0,-1, -2$.}
In the cases T3, N3, N5 and N6 with  $\kappa\neq 0,-1, -2$, the expansions of $\lambda^i$ do not contain
 the linear term in $p$ and the coefficients of the quadratic terms $\lambda^1_2=\lambda^2_2$ are related to the functional parameter $F_2$.

\begin{thm}\label{thm_main}
Let $\omega_{\lambda}=\omega_2-\lambda \omega_1$ bi-Hamiltonian structure corresponding to one of the Balinski\v{\i}-Novikov algebras T3, N3, N5 and N6 with  $\kappa\neq 0,-1, -2$ and the associated symmetric bilinear invariant form $\eta$. Let us consider a bi-Hamiltonian structures $\Pi_{\lambda}$ of the form \eqref{pencil} with leading term $\omega^{ij}_\lambda$. Then
 the coefficients $\lambda^1_2$ and $\lambda^2_2$ of the expansion \eqref{expansion2} coincide and they are related to the functional parameter $F_2$ by the formulas:
\begin{itemize}
\item T3: $\displaystyle\lambda^i_2=\frac{u^1}{\eta^{12}}e^{-\frac{\eta^{12} u^2}{\eta^{22} u^1}}F_2 (u^1)$.
\item N5: $\displaystyle\lambda^i_2= -\frac{u^1 F_2(u^1)}{\eta^{12} \sqrt{2 \eta^{12}(u^1+u^2)-\eta^{22} u^1}}$.
\item N3, N6 with  $\kappa\neq 0,-1, -2$: 
$\displaystyle\lambda^i_2= - \frac{(\kappa+1) u^1 (2 \eta^{12} u^2 -(\kappa+1) \eta^{22} u^1)^{\frac{\kappa-1}{2}} }{\eta^{12}} F_2(u^1).$
\end{itemize}
\end{thm}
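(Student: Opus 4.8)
The plan is to use the Miura-invariance of the coefficients of the expansion \eqref{expansion2}, established in \cite{DLZ2}, so as to carry out the whole computation on the canonical representative provided by Theorem \ref{thm_quasi} rather than on a generic deformation. In that normal form there is no $\epsilon^1$ term and the $\epsilon^2$ term is $\mathrm{Lie}_{X_{(F_1,F_2)}}\omega_2$; hence the only datum feeding the characteristic determinant is the principal symbol of $\mathrm{Lie}_X\omega_2$, namely its coefficient of $\partial_x^3$ depending on $u$ alone, which I write as the symmetric matrix $A_{2;2,0}(u)$ (while $A_{1;2,0}=0$ and $A_{\theta;1,0}=0$). I would begin by recording that for a vector field $X^i$ of differential degree two this symbol is produced solely by the top part $S^i_l=\partial X^i/\partial u^l_{xx}$, through $A_{2;2,0}^{ij}=-(S^i_l\,\omega_2^{lj}+\omega_2^{il}S^j_l)$, so that the problem reduces to reading off the $u_{xx}$-coefficients of $X=\omega_1\,\delta H-\omega_2\,\delta K$.

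The second step is the evaluation of $S^i_l$. Since the densities are of the logarithmic type $\sum_{i,j}h_{ij}u^i_x\log u^j_x$ (with $\mathcal K=L^T\mathcal H$), I would take variational derivatives, apply the hydrodynamic operators $\omega_1,\omega_2$, and keep the coefficients of $u^l_{xx}$. Here I expect the structural point of the statement to surface: the parameter $F_1$ enters only $h_{11}$ and corresponds to the ``truncation'' direction, that is to a Miura-trivial piece of the deformation; its contribution drops out of the invariant combination below, leaving $S^i_l$, and therefore $\lambda^i_2$, a function of $F_2$ only.

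I would then solve the characteristic equation perturbatively. By skew-symmetry the symbol determinant $P(\lambda,p)=\det(\omega_2-\lambda\omega_1+p^2A_{2;2,0})$ is even in $p$, and expanding the $2\times2$ determinant gives $P(\lambda,p)=\det(\omega_2-\lambda\omega_1)+p^2\beta(\lambda)+p^4\det A_{2;2,0}$ with $\beta(\lambda)=\mathrm{tr}(\mathrm{adj}(\omega_2-\lambda\omega_1)A_{2;2,0})$, a function that is \emph{linear} in $\lambda$. Writing $\det(\omega_2-\lambda\omega_1)=\det(\omega_1)(\lambda-r)^2$, where $r$ is the repeated root (the eigenvalue of $L=g_2g_1^{-1}$), the half-sum $\tfrac12(\lambda^1+\lambda^2)$ equals $r-p^2\beta'(r)/(2\det\omega_1)+O(p^4)$; combined with the relation $\lambda^1_2=\lambda^2_2$ from \eqref{sks} this yields at once the common value
\begin{equation*}
\lambda^1_2=\lambda^2_2=-\frac{\beta'(r)}{2\det\omega_1}.
\end{equation*}
The same expansion shows that the linear coefficient $\lambda^i_1$ is governed by $\beta(r)$, and I would check that $\beta(r)=0$ in precisely these cases (this is what excludes the odd term here and distinguishes them from N4 and N6 with $\kappa=-2$). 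Substituting the explicit $A_{2;2,0}$ for each algebra then produces the three displayed formulas.

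The main obstacle is the second step: computing $A_{2;2,0}$ explicitly requires pushing the variational derivatives of the non-polynomial logarithmic densities through $\omega_1$ and $\omega_2$ and isolating the $u_{xx}$-terms, and then verifying both that the $F_1$-contribution cancels and that $\beta(r)=0$. These are lengthy but mechanical $2\times2$ contractions depending on the structure constants and on $\eta$; it is presumably why the argument is relegated to the appendix. I would organise the cases by treating N3 and N6 jointly (they differ only through $\kappa$), with T3 and N5 as the corresponding degenerate and limit forms, and conclude by verifying that each $\lambda^i_2$ carries the $u^1$-dependence through $F_2(u^1)$ and the quoted $u^2$-prefactor.
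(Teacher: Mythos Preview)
Your approach is correct and essentially the same as the paper's: both use the Miura-invariance from \cite{DLZ2} to reduce to the canonical representative of Theorem~\ref{thm_quasi}, then read off the invariant from the leading symbol $Q_2=A_{2;2,0}$ of $\mathrm{Lie}_X\omega_2$. The paper simply writes down the explicit formula for $\lambda^i_2$ in terms of $P^{ij}_\theta,\,Q^{ij}_\theta$ (its equation \eqref{inv2} for T3) and substitutes $P_1=P_2=Q_1=0$ together with the known $Q_2^{12}$; your determinant expansion $\lambda^i_2=-\beta'(r)/(2\det g_1)$ with $\beta(\lambda)=\mathrm{tr}\bigl(\mathrm{adj}(g_2-\lambda g_1)\,Q_2\bigr)$ yields exactly the same value and makes the structure of the argument more transparent.

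Two small remarks. First, you need not go through the logarithmic Hamiltonians to obtain $S^i_l$: the appendix computation already supplies the reduced vector field $X$ explicitly, so the $u_{xx}$-coefficients $S^i_l=X^i_1,\,X^i_5$ can be read off directly, after which your symbol formula $Q_2^{ij}=-(S^i_l g_2^{lj}+g_2^{il}S^j_l)$ gives $Q_2$ with no further variational calculus. Second, in the regime $\beta(r)=0$ the relation $\lambda^1_2=\lambda^2_2$ is not an automatic consequence of evenness of the determinant in $p$ alone (both branches are then even to leading order, so the swap argument behind \eqref{sks} does not apply straightforwardly); what actually happens is that the discriminant $\beta'(r)^2-4\det(g_1)\det Q_2$ vanishes identically in these cases, as one checks immediately from the explicit $Q_2$. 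This is a one-line verification once you have $Q_2$, but it is worth flagging since your half-sum formula only gives $\tfrac12(\lambda^1_2+\lambda^2_2)$ without it.
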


\proof
We are going to prove this statement in the case T3 with $\eta^{22}\neq0$. In this case the dispersionless limit is given by
$$
\omega^{ij}_1=
\begin{pmatrix}
0 & \eta^{12}\\
\eta^{12} & \eta^{22} 
\end{pmatrix}
\partial_x,
\quad
\omega^{ij}_2=
\begin{pmatrix}
0 & - u^1\\
- u^1 & 0
\end{pmatrix}
\partial_x
+
\begin{pmatrix}
0 & 0\\
-u^1_x& 0
\end{pmatrix}.
$$
If we write the pencil in the standard form
\begin{eqnarray*}
\label{generic}
\Pi^{ij}_{\lambda}
&=&\omega^{ij}_{\lambda}+\sum_{k=1}^2\epsilon^k\sum_{l=0}^{k+1}\left(A^{ij}_{2;k,l}(u,\dots,u_{(l)})-\lambda A^{ij}_{1;k,l}(u,\dots,u_{(l)})\right)\d_x^{(k-l+1)}+ \mathcal{O}(\epsilon^3)
\end{eqnarray*}
the first two terms of the expansion \eqref{expansion} are
\begin{eqnarray}
\label{inv1}
\lambda^i_2&=&0\\
\label{inv2}
\lambda^i_2&=&\frac{1}{\eta^{12}}\left(Q^{12}_2+\frac{(P^{12}_2)^2}{u^1} + \frac{\eta^{22} Q^{11}_2}{2 \eta^{12}}+ \frac{u^1 Q^{12}_1 + P^{12}_1 P^{12}_2}{\eta^{12}}\right),
\end{eqnarray}
where
\begin{equation*}\label{PQ}
P^{ij}_{\theta}(u)=A^{ij}_{\theta;1,2}(u), \quad Q^{ij}_{\theta}(u)=A^{ij}_{\theta;2,3}(u),\quad i,j=1,\ldots, n, \quad \theta=1,2.
\end{equation*}
We know from general theory that these coefficients are invariant up to permutations. The condition $\lambda^1_{2n}=\lambda^2_{2n}$ implies that are genuine invariants.

Using this the proof is a straightforward computation: substituting the relations
\begin{equation*}\label{N3_Omega}
P_1=P_2=Q_1=\begin{pmatrix}
0 & 0\\
0 & 0
\end{pmatrix}
,\qquad Q_2=
\begin{pmatrix}
0 & u^1 e^{-\frac{\eta^{12} u^2}{\eta^{22} u^1}} F_2 (u^1)\\
u^1 e^{-\frac{\eta^{12} u^2}{\eta^{22} u^1}} F_2 (u^1) & *
\end{pmatrix},
\end{equation*}
in the formula \eqref{inv2} we get the result. Remaining cases can be proved following the same procedure. 

\endproof

\begin{rmk}
The invariant $\lambda^i_2$ can be also written as
$$\lambda^i_2=-\frac{1}{2} \mathrm{Res}_{\lambda=\hat{\lambda}} \mathrm{Tr} (g_{\lambda}^{-1} \Lambda_{\lambda})$$
where $\hat{\lambda}$ is the eigenvalue of the affinor $L=g_{2} g_{1}^{-1}$ and $\Lambda^{ij}_{\lambda}=Q^{ij}_{\lambda}+\frac{1}{2}(g_{\lambda}^{-1})_{lk} P^{li}_\lambda P^{kj}_{\lambda}.$ 
\end{rmk}

\subsubsection{The cases N4  and N6 with $\kappa= -2$}
In the remaining cases  the expansion of $\lambda^i$ contains also
 the linear term in $p$ and the invariants $\lambda^1_1=-\lambda^2_1$ and  $\lambda^1_2=\lambda^2_2$ are related  to the functional parameters $F_2$ and $F_4$ respectively.

\begin{thm}\label{thm_2}
Let $\omega_{\lambda}=\omega_2-\lambda \omega_1$ bi-Hamiltonian structure corresponding to one of the Balinski\v{\i}-Novikov algebras  N4  and N6 with $\kappa= -2$ and the associated symmetric bilinear invariant form $\eta$. Let us consider a bi-Hamiltonian structures $\Pi_{\lambda}$ of the form \eqref{pencil} with leading term $\omega^{ij}_\lambda$. Then, the invariants $(\lambda^i_1)^2$ and $\lambda^i_2$  are related to the functional parameters $F_2$ and $F_4$ through the formulas:
\begin{itemize}
\item N4:
\begin{eqnarray*}
(\lambda^i_1)^2&=&\frac{2 u^1 F_2}{(\eta^{12})^3},\\
\lambda^i_2&=&\frac{\partial_1( u^1 F_2)}{(\eta^{12})^2}-\frac{u^1 F_4}{\eta^{12} \sqrt{-2 \eta^{12} u^2 +\eta^{22} u^1}}.
\end{eqnarray*}
\item N6, $\kappa=-2$:
\begin{eqnarray*}
(\lambda^i_1)^2&=&\frac{2 u^1 F_2}{(\eta^{12})^3 (2 \eta^{12} u^2 +\eta^{22} u^1)^{2}},\\
\lambda^i_2&=&\frac{u^1 F_4}{\eta^{12} (2 \eta^{12} u^2 +\eta^{22} u^1)^{3/2}}
-\frac{(2 \eta^{12} u^2 -\eta^{22} u^1) F_2+u^1 F_2'}{ (\eta^{12})^2 (2 \eta^{12} u^2 +\eta^{22} u^1)^{3}}.
\end{eqnarray*}
\end{itemize}
\end{thm}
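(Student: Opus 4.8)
The plan is to follow the strategy of the proof of Theorem~\ref{thm_main}, the only new feature being that the term linear in $p$ no longer vanishes. Since the coefficients $\lambda^i_k$ of \eqref{expansion2} are invariant (up to permutation) under Miura transformations, and since by Theorem~\ref{thm_quasi} every second order deformation is Miura-equivalent to the normal form $\Pi_\lambda=\omega_2-\lambda\omega_1+\epsilon^2\,\mathrm{Lie}_X\omega_2+\mathcal{O}(\epsilon^3)$, it is enough to evaluate the invariants on this normal form. The deformation is $\lambda$-independent and purely of order $\epsilon^2$, so the only leading symbol entering the characteristic equation is $A:=(A^{ij}_{2;2,0})$, the coefficient of $\partial_x^3$ in $\mathrm{Lie}_X\omega_2$: indeed $A_{1;k,0}=0$ because $\omega_1$ is undeformed, and $A_{2;1,0}=0$ because there is no first order term. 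First I would compute $A$. Since $\omega_2$ is a first-order operator and only the coefficients of $u_{xx}$ in $X$ feed the principal symbol of $\mathrm{Lie}_X\omega_2$, one gets $A=-(\sigma g_2+g_2\sigma^{T})$ with $\sigma^i_k=\partial X^i/\partial u^k_{xx}$, i.e. $\sigma^i_1=X^i_1$ and $\sigma^i_2=X^i_5$; in particular $A$ involves only $F_2$ and $F_4$.

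With $A$ at hand the characteristic equation collapses to $\det(g_2+Ap^2-\lambda g_1)=0$, a quadratic in $\lambda$ whose coefficients are functions of $p^2$. Rather than solving it, I would read the invariants off the elementary symmetric functions of its two roots, $\lambda^1+\lambda^2=\mathrm{Tr}\,M$ and $\lambda^1\lambda^2=\det(g_1^{-1})\det(g_2+Ap^2)$, where $M=g_1^{-1}(g_2+Ap^2)$. The decisive point, absent in the semisimple theory, is that at $p=0$ the matrix $g_1^{-1}g_2$ is the non-diagonalizable affinor $L$ with a single eigenvalue $\hat\lambda$; hence the discriminant of the quadratic vanishes at $p=0$ and the perturbation $Ap^2$ splits the double root through a square root,
$$
\lambda^{1,2}=\hat\lambda\pm\frac{\sqrt{D_1}}{2\det g_1}\,p+\mu_2\,p^2+\mathcal{O}(p^3),
$$
where $D_1$ is the coefficient of $p^2$ in the discriminant. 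This is exactly the mechanism producing the odd power predicted by \eqref{sks}: one has $\lambda^1_1=-\lambda^2_1$, so $\lambda^i_1$ itself is not permutation-invariant and only $(\lambda^i_1)^2$ is a genuine invariant, as in the statement. Because the square root carries only odd powers, the coefficient of $p^2$ is untouched by it and comes entirely from $\mathrm{Tr}\,M=2\hat\lambda+p^2\,\mathrm{Tr}(g_1^{-1}A)+\cdots$, giving at once $\lambda^1_2=\lambda^2_2=\frac{1}{2}\mathrm{Tr}(g_1^{-1}A)$. Matching the $p^2$ coefficient of $\lambda^1\lambda^2=\det(g_1^{-1})\det(g_2+Ap^2)$ against its expansion $\hat\lambda^2+\big(2\hat\lambda\lambda^i_2-(\lambda^i_1)^2\big)p^2+\cdots$ then isolates $(\lambda^i_1)^2$ in closed form.

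It remains to substitute the explicit $g_1$, $g_2$ and $A$ — with $\theta=(\eta^{22}u^1-2\eta^{12}u^2)^{-1}$ for N4 and $\theta=(2\eta^{12}u^2+\eta^{22}u^1)^{-1}$ for N6 with $\kappa=-2$ — and to simplify the two $2\times2$ trace/determinant expressions; the case N6 is identical in structure, only the entries changing. A built-in consistency check is that the $F_4$ contributions cancel in $(\lambda^i_1)^2$, leaving a dependence on $F_2$ alone (the split measures only the nilpotent-direction coupling of the perturbation), while $\lambda^i_2$ keeps both $F_2$ and $F_4$; this reproduces the stated formulas. I expect the main obstacle to be twofold: carrying the symbol $A$ and the ensuing algebra through the rather bulky expressions for $X$ without slips, and — conceptually — justifying the Puiseux expansion at the Jordan block, in particular verifying that here $D_1\neq0$. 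This last point is precisely what distinguishes these two exceptional algebras from the cases of Theorem~\ref{thm_main}, where the same procedure gives $D_1=0$ and hence no linear term.
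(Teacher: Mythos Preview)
Your proposal is correct and follows essentially the same approach as the paper: reduce to the normal form of Theorem~\ref{thm_quasi} by Miura invariance, compute the leading symbol $A=\Theta_{(3)}$ of $\mathrm{Lie}_X\omega_2$, and expand the roots of the characteristic equation. Your organization via the symmetric functions $\mathrm{Tr}\,M$ and $\det M$ and the explicit Puiseux mechanism at the Jordan block is a bit more structured than the paper's ``straightforward computation'', but the content is the same; in particular your formula $\lambda^i_2=\tfrac{1}{2}\mathrm{Tr}(g_1^{-1}A)$ and your identification of $(\lambda^i_1)^2$ with the $p^2$-coefficient of the discriminant reproduce exactly the paper's stated results.
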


\proof
We outline the proof in the case N4 (corresponding to $\kappa=0$). In this case, the standard form of the pencil is
$$
\tilde\Pi^{ij}_{\lambda}=\omega^{ij}_\lambda +  \epsilon^2 \Theta^{ij}+ \mathcal{O}(\epsilon^3)=
\omega^{ij}_\lambda +  \epsilon^2\left(\Theta^{ij}_{(3)}\d_x^3+\Theta^{ij}_{(2)}\d_x^2+\Theta^{ij}_{(1)}\d_x+\Theta^{ij}_{(0)}\right)+ \mathcal{O}(\epsilon^3),
$$
where
$$
\omega^{ij}_\lambda=\begin{pmatrix}
0 & u^1\\
u^1 & 2 u^2
\end{pmatrix}
\d_x
+
\begin{pmatrix}
0 & u^1_x\\
0 &  u^2_x
\end{pmatrix}
-\lambda
\begin{pmatrix}
0 & \eta^{12}\\
\eta^{12} & \eta^{22} 
\end{pmatrix}
\d_x.
$$
and
\begin{equation*}
\Theta_{(3)}=
\begin{pmatrix}
\frac{2 u^1 F_2}{2 \eta^{12} u^2 -\eta^{22} u^1} & \frac{u^1 F_2'}{\eta^{12}} -\frac{u^1 F_4}{\sqrt{-2 \eta^{12} u^2 +\eta^{22} u^1}}+\frac{2 u^2 F_2}{2 \eta^{12} u^2 -\eta^{22} u^1}\\
\frac{u^1 F_2'}{\eta^{12}} -\frac{u^1 F_4}{\sqrt{-2 \eta^{12} u^2 +\eta^{22} u^1}}+\frac{2 u^2 F_2}{2 \eta^{12} u^2 -\eta^{22} u^1} & \frac{4u^2 F_2'}{\eta^{12}} -\frac{4 u^2 F_4}{\sqrt{-2 \eta^{12} u^2 +\eta^{22} u^1}}
\end{pmatrix},
\end{equation*}
From the general theory and from relations \eqref{sks} we know that $(\lambda^i_1)^2$ and $\lambda^i_2$ are invariants. Using the invariance the proof is a straightforward computation. The case N6 with $\kappa=-2$ can be treated in a similar way.
\endproof

\begin{rmk}
The function $\Theta^{12}_{(3)}$ can be also written as
$$\Theta^{12}_{(3)}=-\frac{\eta^{12}}{2} \mathrm{Res}_{\lambda=\hat{\lambda}} \mathrm{Tr} (g_{\lambda}^{-1} \Lambda_{\lambda})$$
where $\hat{\lambda}$ is the eigenvalue of the affinor $L=g_{2} g_{1}^{-1}$ and $\Lambda^{ij}_{\lambda}=Q^{ij}_{\lambda}+\frac{1}{2}(g_{\lambda}^{-1})_{lk} P^{li}_\lambda P^{kj}_{\lambda}.$ 
\end{rmk}

\section{Truncated structures}
In Theorems \ref{thm_main}, \ref{thm_2} we proved the invariant nature of some functional parameters appearing in deformations. In this section we prove that the remaining parameters are related to truncated structures.  These are Poisson pencils of the form \eqref{pencil} depending polynomially on the parameter $\epsilon$ (that is the sum in \eqref{pencil} contains finitely many terms). We show that setting to zero the invariant parameters we obtain deformations that 
are Miura equivalent to truncated pencils up to the order three. More precisely we prove that in the cases T3, N3, N5 and N6 with  $\kappa\neq 0,-1, -2$ the additional parameter provides a one-parameter family of truncated structures, while in the cases N4 and N6 with $\kappa=-2$ the two additional parameters provide a two-parameter family of truncated structures.  
  

\begin{thm}
In the cases T3, N3, N5 and N6 with  $\kappa\neq 0,-1, -2$, the second order deformations with  $F_2=0$ can be reduced by a Miura transformation to the form $\Pi_{\lambda}=\omega_{\lambda}+\epsilon^2 \Theta+\mathcal{O}(\epsilon^3)$ where
\begin{equation}\label{truncated1}
\Theta=
\begin{pmatrix}
0 & 0\\
0 & 2f
\end{pmatrix}
\d_x^3
+
\begin{pmatrix}
0 & 0\\
0 & 3 f_x
\end{pmatrix}
\d_x^2
+
\begin{pmatrix}
0 & 0\\
0 & f_{xx}
\end{pmatrix}
\d_x,
\end{equation}
with $f=f(u^1)$. Moreover  the truncated pencil $\omega_{\lambda}+\epsilon^2 \Theta$ is a Poisson pencil.
\end{thm}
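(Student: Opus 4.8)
The plan is to build on the normal form already established in Theorem~\ref{thm_quasi}, specialize it to $F_2=0$, track what the $F_1$-part of the deformation becomes, and finally check the Poisson property of the resulting truncation by a Schouten-bracket argument. First I would recall that, after the Miura reduction of Theorem~\ref{thm_quasi}, the $\epsilon^2$-term of the pencil is ${\rm Lie}_{X_{(F_1,F_2)}}\omega_2$ with $X_{(F_1,F_2)}=\omega_1\,\delta H-\omega_2\,\delta K$ and $\mathcal K=L^T\mathcal H$. Setting $F_2=0$ in each of the four cases collapses $\mathcal H$ to a single nonzero entry $h_{11}$ proportional to $F_1(u^1)$, so that $\mathcal K=L^T\mathcal H$ also has only its $(1,1)$-entry nonvanishing. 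Both Hamiltonians then reduce to $H=\int h_{11}\,u^1_x\log u^1_x\,dx$ and $K=\int k_{11}\,u^1_x\log u^1_x\,dx$ with $h_{11},k_{11}$ functions of $u^1$ alone, and I would compute the evolutionary vector field $X_{(F_1,0)}$ and the operator ${\rm Lie}_{X_{(F_1,0)}}\omega_2$ from these data.

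Second, I would exhibit a residual (polynomial) Miura transformation carrying ${\rm Lie}_{X_{(F_1,0)}}\omega_2$ to the stated sparse normal form $\Theta$, and read off $f=f(u^1)$ as an explicit expression in $F_1$. A guiding check is that the $F_1$-deformation lives, up to Miura equivalence, entirely in the $(2,2)$-slot: for T3 one finds $X^1=0$ and $X^2=\partial_x\!\left(\eta^{12}(\delta H)_1+u^1(\delta K)_1\right)$, so only the second equation is perturbed, and the analogous feature holds for N3, N5 and N6. Since $f$ depends only on $u^1$ and $\Theta^{22}=\partial_x(2f\partial_x+f_x)\partial_x$ is skew-adjoint, $\Theta$ is a legitimate bivector, and comparing with Theorem~\ref{thm_main} confirms $\lambda^i_2=0$, consistent with $F_2=0$.

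Third, for the assertion that the truncation $\omega_\lambda+\epsilon^2\Theta$ is an \emph{exact} Poisson pencil I would argue as follows. Since the reduced $\Pi_\lambda=\omega_\lambda+\epsilon^2\Theta+\mathcal O(\epsilon^3)$ is a second-order deformation and Miura transformations preserve the Schouten bracket, the $\epsilon^2$-coefficient of $[\Pi_\lambda,\Pi_\lambda]$ vanishes, giving $[\omega_\lambda,\Theta]=0$ for every $\lambda$, i.e. $[\omega_1,\Theta]=[\omega_2,\Theta]=0$. It then suffices to prove $[\Theta,\Theta]=0$, after which $[\omega_\lambda+\epsilon^2\Theta,\omega_\lambda+\epsilon^2\Theta]=2\epsilon^2[\omega_\lambda,\Theta]+\epsilon^4[\Theta,\Theta]=0$ identically in $\epsilon$ and $\lambda$. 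But $[\Theta,\Theta]=0$ is automatic from the structure of $\Theta$: its only nonzero block is $\Theta^{22}$, which depends on the fields solely through $u^1$ and its $x$-derivatives, so in each term $\frac{\partial\Theta^{ij}}{\partial u^l_{(s)}}\partial_x^s\Theta^{lk}$ of the Schouten bracket the first factor forces $(i,j)=(2,2)$ and $l=1$, while the second factor requires $\Theta^{1k}\neq0$; since $\Theta^{1k}=0$ for all $k$, and the same obstruction kills the two cyclic terms, the bracket vanishes.

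Finally, the hard part will be the explicit reduction of the second paragraph: the Hamiltonian densities carry logarithms, so $X_{(F_1,0)}$ is not a differential polynomial (it generates a quasi-Miura transformation), and one must verify that the non-polynomial contributions cancel both in ${\rm Lie}_{X_{(F_1,0)}}\omega_2$ and in the reducing map, leaving the polynomial $\Theta$ of \eqref{truncated1}. This cancellation is precisely the quasi-triviality mechanism at work, and it has to be carried out—or at least verified—separately for T3, N3, N5 and N6, since the affinor $L$, and hence $\mathcal K=L^T\mathcal H$, differs from case to case; by contrast the Poisson-pencil verification of the third paragraph is uniform and essentially immediate.
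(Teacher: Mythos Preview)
Your proposal is correct and follows the same overall strategy as the paper: specialize Theorem~\ref{thm_quasi} to $F_2=0$, identify the resulting $\epsilon^2$-term with $\Theta$, and then verify $[\Theta,\Theta]=0$ from the sparsity of $\Theta$. Your Schouten-bracket argument in the third paragraph is exactly the paper's: since $\Theta^{11}=\Theta^{12}=\Theta^{21}=0$ and $\partial\Theta^{22}/\partial u^2_{(s)}=0$, every term in $[\Theta,\Theta]^{ijk}$ vanishes, and together with $[\omega_\lambda,\Theta]=0$ (inherited from the second-order deformation property) this makes $\omega_\lambda+\epsilon^2\Theta$ an exact Poisson pencil.

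Where you diverge from the paper is in overestimating the work needed for the first part. You plan a residual Miura transformation and a case-by-case verification of non-polynomial cancellations. In the paper no additional Miura step is performed: setting $F_2=0$ in the normal form of Theorem~\ref{thm_quasi} and computing ${\rm Lie}_{X_{(F_1,0)}}\omega_2$ already produces \eqref{truncated1}, and the only freedom left is the identification of $f$ with a rescaled $F_1$ (namely $F_1=f/u^1$ for T3, $F_1=-\eta^{12}f/u^1$ for N5, and $F_1=-\eta^{12}\kappa f/((1+\kappa)u^1)$ for N3/N6). Your worry about logarithmic contributions is also unnecessary: the statement and proof of Theorem~\ref{thm_quasi} (see the appendix) produce the vector field $X_{(F_1,F_2)}$ as a \emph{polynomial} object first, obtained by solving $d_{\omega_1}d_{\omega_2}X=0$; the logarithmic Hamiltonians $H,K$ are only an a posteriori encoding of that polynomial vector field, so no cancellation needs to be checked. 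In short, the ``hard part'' you flag collapses to a direct computation of a Lie derivative of $\omega_2$ along a polynomial vector field that, for $F_2=0$, has $X^1=0$ and $X^2=\partial_x^2\!\int\! F_1\,du^1$ (up to the case-dependent rescaling), which immediately gives the $(2,2)$-block operator $2f\partial_x^3+3f_x\partial_x^2+f_{xx}\partial_x$.
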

\proof
The form \eqref{truncated1} can be easily obtained from the results of Theorem \ref{thm_quasi} rescaling the function $F_1$. In particular, we have to set
\begin{itemize}
\item $F_1(u^1)=\dfrac{f(u^1)}{u^1}$, for T3,
\item $F_1(u^1)=-\dfrac{\eta^{12}f(u^1)}{u^1}$, for N5,
\item $F_1(u^1)=-\dfrac{\eta^{12} \kappa f(u^1)}{(1+\kappa) u^1}$, for N3, N6 with  $\kappa\neq 0,-1, -2$.
\end{itemize}

To prove that $\omega_{\lambda}+\epsilon^2 \Theta$ is a Poisson pencil, we have to show that
\begin{multline*}
 \frac{1}{2}[\Theta,\Theta]^{ijk}(x,y,z) = \\
\frac{\partial \Theta^{ij}(x,y)}{\partial u^{l}_{(s)}(x)} \partial_x^s \Theta^{lk}(x,z)
+ \frac{\partial \Theta^{ki}(z,x)}{\partial u^{l}_{(s)}(z)} \partial_z^s \Theta^{lj}(z,y)
+ \frac{\partial \Theta^{jk}(y,z)}{\partial u^{l}_{(s)}(y)} \partial_y^s \Theta^{li}(y,x)
=0.
\end{multline*}
Taking into account that $\Theta^{11}=\Theta^{12}=\Theta^{21}=0$ and $\frac{\partial \Theta^{22}}{\partial u^{2}_{(s)}}=0,$  
we obtain the result.

\endproof

\begin{thm}
In the case N6 with  $\kappa= -2$ the second order deformations with $F_2=F_4=0$ can be reduced by a Miura transformation to the form $\Pi_{\lambda}=\omega_{\lambda}+\epsilon^2 \Theta+\mathcal{O}(\epsilon^3)$ where
\begin{equation}\label{truncated2}
\Theta=
\begin{pmatrix}
0 & 0\\
0 & 2f
\end{pmatrix}
\d_x^3
+
\begin{pmatrix}
0 & 0\\
0 & 3 f_x 
\end{pmatrix}
\d_x^2
+
\begin{pmatrix}
0 & 0\\
0 & f_{xx} + 2 g
\end{pmatrix}
\d_x
+
\begin{pmatrix}
0 & 0\\
0 & g_x
\end{pmatrix},
\end{equation}
with $f=f(u^1)$ and $g=\left(h(u^1)  u^1_x\right)_x + h(u^1) u^1_{xx}$. Moreover the truncated pencil $\omega_{\lambda}+\epsilon^2 \Theta$
 is a Poisson pencil.
\end{thm}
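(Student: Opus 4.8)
The plan is to mirror the argument used for the form \eqref{truncated1}: first specialise the classification of Theorem \ref{thm_quasi} to recover the normal form \eqref{truncated2}, and then verify directly that the resulting truncation is an exact Poisson pencil. Both the displayed reduction and the Poisson property must be established.

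For the reduction I would start from the N6 with $\kappa=-2$ normal form of Theorem \ref{thm_quasi}, namely $\Pi_\lambda=\omega_\lambda+\epsilon^2\,\mathrm{Lie}_X\omega_2+\mathcal{O}(\epsilon^3)$, and impose $F_2=F_4=0$. With this choice all the listed components of $X$ collapse: only $X^1_2=\theta F_1$ and $X^2_2=F_3$ survive, so that $X^1=\theta F_1\,(u^1_x)^2$ and $X^2=F_3\,(u^1_x)^2$ with $\theta=(2\eta^{12}u^2+\eta^{22}u^1)^{-1}$. I would then compute $\Theta=\mathrm{Lie}_X\omega_2$ componentwise for this reduced field. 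The outcome to be checked by a direct calculation is that the $(1,1)$, $(1,2)$ and $(2,1)$ entries vanish identically, that all the $u^2$-dependence carried by $\theta$ cancels, and that only the $(2,2)$ entry survives, with coefficients depending solely on $u^1$ and its $x$-derivatives. A reparametrisation of the two surviving functions $F_1,F_3$ into $f=f(u^1)$ and $h=h(u^1)$ (analogous to the rescaling of $F_1$ employed for \eqref{truncated1}) then recasts this $(2,2)$ entry as $2f\,\partial_x^3+3f_x\,\partial_x^2+(f_{xx}+2g)\,\partial_x+g_x$ with $g=(h\,u^1_x)_x+h\,u^1_{xx}$, which is exactly \eqref{truncated2}.

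For the Poisson property I must show that $\omega_\lambda+\epsilon^2\Theta$ satisfies $[\,\cdot\,,\,\cdot\,]=0$ for all $\epsilon$, not merely modulo $\mathcal{O}(\epsilon^3)$. Since the deformation has no $\epsilon^1$ term, expanding the Schouten bracket gives
\[
[\omega_\lambda+\epsilon^2\Theta,\,\omega_\lambda+\epsilon^2\Theta]
=[\omega_\lambda,\omega_\lambda]+2\epsilon^2[\omega_\lambda,\Theta]+\epsilon^4[\Theta,\Theta].
\]
The first term vanishes because $\omega_\lambda$ is a pencil of hydrodynamic type. The second also vanishes: $\mathrm{Lie}_X\omega_2$ is automatically compatible with $\omega_2$, because the Lie derivative of a Poisson bivector along any vector field preserves the Poisson property, so $[\omega_2,\Theta]=0$; and $[\omega_1,\Theta]=0$ is precisely the assertion that the normal form of Theorem \ref{thm_quasi} is a genuine second-order deformation of the pencil. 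Hence everything reduces to proving $[\Theta,\Theta]=0$. This last step is the same observation as for \eqref{truncated1}: since $\Theta^{11}=\Theta^{12}=\Theta^{21}=0$, in the bracket formula each surviving summand needs $(i,j)=(2,2)$ and $\Theta^{lk}\neq0$; but $\Theta^{22}$ depends only on $u^1$ and its derivatives, so $\partial\Theta^{22}/\partial u^2_{(s)}=0$ forces the repeated index $l=1$, whence the factor $\Theta^{1k}$ kills the term. By the cyclic symmetry of the formula every component of $[\Theta,\Theta]^{ijk}$ vanishes.

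The step I expect to be the main obstacle is the first one: carrying out the Lie-derivative computation and confirming the non-obvious cancellations, in particular that the $u^2$-dependence introduced by $\theta$ in $X^1$ disappears entirely and that the lower-order terms organise themselves into $2g\,\partial_x+g_x$ with $g$ of the stated gradient form. Once the triangular normal form \eqref{truncated2} is in hand, the exactness of the Poisson pencil follows immediately from the structure of $\Theta$.
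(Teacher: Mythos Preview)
Your argument for the Poisson property is fine and matches the paper's approach in the analogous Theorem for \eqref{truncated1}: once $\Theta$ has the displayed triangular shape with $\Theta^{11}=\Theta^{12}=\Theta^{21}=0$ and $\partial\Theta^{22}/\partial u^2_{(s)}=0$, the vanishing of $[\Theta,\Theta]$ follows immediately, and the vanishing of $[\omega_\lambda,\Theta]$ is inherited from the second-order compatibility already encoded in Theorem~\ref{thm_quasi}.

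The gap is in the first part. Your expectation that $\mathrm{Lie}_X\omega_2$, computed directly from the vector field $X^1=\theta F_1(u^1_x)^2$, $X^2=F_3(u^1_x)^2$ of Theorem~\ref{thm_quasi}, already has the clean form \eqref{truncated2} is not correct: the $u^2$-dependence carried by $\theta=(2\eta^{12}u^2+\eta^{22}u^1)^{-1}$ in $X^1$ does \emph{not} cancel, and off-diagonal entries do appear. This is precisely the point where the present case differs from the situation of \eqref{truncated1}, where a mere rescaling of $F_1$ suffices. What the paper does instead is apply a further Miura transformation generated by an explicit vector field $Y=\omega_1\delta H$ (so that $\mathrm{Lie}_Y\omega_1=0$ and no $\lambda$-dependence is introduced at order $\epsilon^2$), with $R=\dfrac{u^1F_1}{2\eta^{12}(2\eta^{12}u^2+\eta^{22}u^1)}$ playing the role of the single free function in $\delta H$. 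This replaces $X$ by a more complicated representative $\tilde X$, and it is $\mathrm{Lie}_{\tilde X}\omega_2$ that equals $\Theta$ after the identification $F_1=-\dfrac{2\eta^{12}f}{u^1}$, $F_3=-\dfrac{h}{u^1}$. Without this additional Miura step your reduction to \eqref{truncated2} cannot be completed, and the subsequent $[\Theta,\Theta]=0$ argument, which relies on the triangular structure, does not get off the ground.
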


\proof
Here we prove only the first part of the theorem. The second part can be obtained as above by straightforward computation.

By Theorem \ref{thm_quasi} we have 
$$\Pi_{\lambda}=\omega_2-\lambda \omega_1+\epsilon^2 {\rm Lie}_X \omega_2+\mathcal{O}(\epsilon^3),$$ 
where the component of the vector field $X$ are given by
$$
X^1= \theta F_1 (u^1_x)^2, \quad X^2= F_3 (u^1_x)^2,
$$
with $\theta=(2 \eta^{12} u^2 + \eta^{22} u^1)^{-1}$.
 The Miura transformation
$$
u^i \to \exp(-\epsilon Y) u^i, \quad i=1,2,
$$
 generated by the vector field $Y$ of components
\begin{eqnarray*}
Y^1&=&-\eta^{12} R u^1_{xx}-\eta^{12} \partial_1  R (u^1_x)^2-\eta^{12} \partial_2 R  u^1_x u^2_x,\\
Y^2&=& -\eta^{22} R  u^1_{xx} -\eta^{22} \partial_1  R (u^1_x)^2 + (\eta^{12} \partial_1  R -\eta^{22} \partial_2  R ) u^1_x u^2_x + \eta^{12} \partial_2  R (u^2_x)^2 + \eta^{12} R u^2_{xx},
\end{eqnarray*}
with $R=\frac{u^1 F_1}{2\eta^{12}(2 \eta^{12} u^2 + \eta^{22} u^1) }$, reduces the pencil to the form  $\omega_2-\lambda \omega_1+\epsilon^2 {\rm Lie}_{\tilde X} \omega_2+\mathcal{O}(\epsilon^3)$, where
\begin{eqnarray*}
\tilde X^1&=&-\frac{\theta u^1 F_1 u^1_{xx}}{2}
-\left(\frac{\theta u^1 F_1'}{2} - \theta^2 (\eta^{12} u^2 +\eta^{22} u^1)F_1\right) (u^1_x)^2
+ \theta^2 \eta^{12} u^1 F_1 u^1_x u^2_x\\
\tilde X^2&=&-\frac{\theta \eta^{22} u^1 F_1 u^1_{xx}}{2 \eta^{12}}+ \frac{\theta u^1 F_1 u^2_{xx}}{2}
+ \left( \frac{\theta u^1 F_1'}{2} + \theta^2 (\eta^{12} u^2 +\eta^{22} u^2) F_1\right) u^1_x u^2_x\\
&&-\left(\frac{\theta \eta^{22} u^1 F_1'}{2 \eta^{12}} + \theta^2 \eta^{22} u^2 F_1 - F_3\right) (u^1_x)^2
- \theta^2 \eta^{12} u^1 F_1 (u^2_x)^2.
\end{eqnarray*}
To conclude it is easy to check that ${\rm Lie}_{\tilde X} \omega_2$ coincides with \eqref{truncated2} ($F_1=-\frac{2 \eta^{12} f }{u^1}$ and $F_3=-\frac{h }{u^1}$).
\endproof

\begin{thm}
In the case N4 with $F_2=F_4=0$ the second order deformations can be reduced by a Miura transformation to the form $\Pi_{\lambda}=\omega_{\lambda}+\epsilon^2 \Theta+\mathcal{O}(\epsilon^3)$ where
\begin{equation}\label{truncated3}
\Theta=
\begin{pmatrix}
0 & 0\\
0 & q^{22}_3
\end{pmatrix}
\d_x^3
+
\begin{pmatrix}
0 & q^{12}_2\\
-q^{12}_2 & q^{22}_2 
\end{pmatrix}
\d_x^2
+
\begin{pmatrix}
q^{11}_1 & q^{12}_1\\
q^{21}_1 & q^{22}_1
\end{pmatrix}
\d_x
+
\begin{pmatrix}
q^{11}_0 & q^{12}_0\\
q^{21}_0 & q^{22}_0
\end{pmatrix},
\end{equation}
with
\begin{eqnarray*}
q^{22}_3&=& 2 f,\\
q^{12}_2&=& 4 \theta \eta^{12} f u^1_x,\\
q^{22}_2&=& 3 f' u^1_x, \\
q^{11}_1&=& -8 (\theta \eta^{12})^2 f (u^1_x)^2,\\
q^{12}_1&=& (2 \theta \eta^{12} f' -2\theta^2 \eta^{12} \eta^{22} f+2 \theta^2 h) (u^1_x)^2,\\
q^{21}_1&=& (-6 \theta \eta^{12} f' -10\theta^2 \eta^{12} \eta^{22} f+2 \theta^2 h) (u^1_x)^2
+16 (\theta \eta^{12})^2 f u^1_x u^2_x -8  \theta \eta^{12} f u^1_{xx},\\
q^{22}_1&=& (f'' + 2 \theta (\eta^{12})^{-1}  h' +6 \theta^2 (\eta^{12})^{-1}\eta^{22}  h) (u^1_x)^2
-8 \theta^2 h u^1_x u^2_x +(f' + 4 \theta(\eta^{12})^{-1} h) u^1_{xx},\\
q^{11}_0&=& -\left( 4 (\theta \eta^{12})^2   f' + 8 \theta^3(\eta^{12})^2 \eta^{22} f   \right) (u^1_x)^3
+ 16 (\theta \eta^{12})^3 f (u^1_x)^2 u^2_x - 8 (\theta \eta^{12})^2 f u^1_x u^1_{xx},\\
q^{12}_0&=&  (2 \theta^2 h' + 4 \theta^3 \eta^{22} h) (u^1_x)^3 - 8 \theta^3\eta^{12} h (u^1_x)^2 u^2_x +4 \theta^2 h u^1_x u^1_{xx} ,\\
q^{21}_0&=&
( - 2 \theta \eta^{12} f'' - 8 \theta^2 \eta^{12} \eta^{22} f' -12 \theta^3 \eta^{12} (\eta^{22})^2 f) (u^1_x)^3\\
&&
+(12 (\theta \eta^{12})^2 f' + 40 \theta^3 (\eta^{12})^2 \eta^{22} f) (u^1_x)^2 u^2_x
+(- 8 \theta \eta^{12} f' - 16 \theta^2 \eta^{12} \eta^{22}  f    ) u^1_x u^1_{xx}\\
&&
-32 (\theta \eta^{12})^3 f u^1_x (u^2_x)^2
+8 (\theta \eta^{12})^2 f u^1_x u^2_{xx}
+16 (\theta \eta^{12})^2 f u^1_{xx} u^2_x
- 4 \theta \eta^{12} f u^1_{xxx}
,\\
q^{22}_0&=&
( \theta (\eta^{12})^{-1} h'' + 4  \theta^2  (\eta^{12})^{-1} \eta^{22} h' + 6  \theta^3  (\eta^{12})^{-1} (\eta^{22})^2 h ) (u^1_x)^3\\
&&
+(- 6 \theta^2 h- 20 \theta^3 \eta^{22} h ) (u^1_x)^2 u^2_x
+(  4  \theta  (\eta^{12})^{-1}  h'  + 8  \theta^2  (\eta^{12})^{-1} \eta^{22} h ) u^1_x u^1_{xx}\\
&&
+16 \theta^3 \eta^{12} h u^1_x (u^2_x)^2
-2 \theta^2 h u^1_x u^2_{xx}
-4 \theta^2 h u^1_{xx} u^2_x
+ \theta (\eta^{12})^{-1} h u^1_{xxx}
,
\end{eqnarray*}
where $f=f(u^1)$, $h=h(u^1)$ and $\theta=(2 \eta^{12} u^2 - \eta^{22} u^1)^{-1}$. Moreover the truncated pencil $\omega_{\lambda}+\epsilon^2 \Theta$ is a Poisson pencil.
\end{thm}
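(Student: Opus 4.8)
The plan is to follow exactly the two-step strategy already used successfully in the preceding N6 ($\kappa=-2$) theorem. First I would appeal to Theorem \ref{thm_quasi}, which tells us that setting $F_2=F_4=0$ in the case N4 collapses the deformation to the special form $\Pi_{\lambda}=\omega_2-\lambda\omega_1+\epsilon^2\,\mathrm{Lie}_X\,\omega_2+\mathcal{O}(\epsilon^3)$, where the vector field $X$ now has only the two surviving components built from $F_1$ and $F_3$: from the listed expressions one reads off $X^1=\theta F_1 (u^1_x)^2$ and $X^2=\theta F_3 (u^1_x)^2$ (with $\theta=(2\eta^{12}u^2-\eta^{22}u^1)^{-1}$, all other $X^i_j$ vanishing once $F_2=F_4=0$). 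The goal is to manufacture an explicit degree-one Miura generator $Y$, analogous to the $Y$ constructed in the N6 case, whose flow $\exp(-\epsilon Y)$ converts $\mathrm{Lie}_X\,\omega_2$ into the purely polynomial-in-$\epsilon$ expression \eqref{truncated3}.

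The key computational step is choosing $Y$. As in the N6 proof I would take $Y$ quadratic in first and second $x$-derivatives of the form $Y^i=Y^i_1 u^1_{xx}+Y^i_2(u^1_x)^2+\dots+Y^i_5 u^2_{xx}$, with coefficients depending on a single auxiliary potential $R=R(u^1,u^2)$ (the analogue of the N6 potential $R=\frac{u^1 F_1}{2\eta^{12}(2\eta^{12}u^2+\eta^{22}u^1)}$, adapted to the N4 metric $g_2=\omega_2$). The change of the pencil under the Miura flow is governed by the obstruction term $\mathrm{Lie}_{[X,Y]}\,\omega_2 + \tfrac12\mathrm{Lie}_Y\mathrm{Lie}_Y\,\omega_2$ combined with the second-order correction $\mathrm{Lie}_{\tilde X}\,\omega_2$ for a modified $\tilde X=X-\omega_2\,(\text{something})$; concretely, at order $\epsilon^2$ the transformed bivector is $\mathrm{Lie}_{\tilde X}\,\omega_2$ with $\tilde X = X - \mathrm{Lie}_Y(\text{coordinate field})$, and I would fix the free functions inside $Y$ so that the resulting $\tilde X$ produces exactly the matrix entries $q^{ij}_s$ displayed in \eqref{truncated3}. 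With the identifications $F_1=2\theta\eta^{12} f$ (or the appropriate rescaling matching the $q^{22}_3=2f$ normalization) and $F_3\sim h$, a direct $\mathrm{Lie}_{\tilde X}\,\omega_2$ computation should reproduce every coefficient $q^{22}_3,\,q^{12}_2,\dots,q^{22}_0$. Because the theorem statement already hands us the target expressions, this amounts to verifying an identity rather than discovering it: one applies $\mathrm{Lie}_{\tilde X}\,\omega_2=X^k\partial_k\omega_2-\omega_2^{kj}\partial_k X^i-\dots$ (the hydrodynamic Lie-derivative formula) and matches $\partial_x$-orders.

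The second part, that $\omega_{\lambda}+\epsilon^2\Theta$ is a genuine Poisson pencil, I would establish exactly as the analogous claims in the two preceding theorems: show $\tfrac12[\Theta,\Theta]^{ijk}=0$ using the Schouten-bracket formula recalled in the statement of Theorem \ref{thm_quasi}, and also verify the cross term $[\omega_\lambda,\Theta]=0$. Since $\Theta$ is $\lambda$-independent and quadratic in $\epsilon$, the bi-Hamiltonian (pencil) condition reduces to the two separate vanishings $[\Theta,\Theta]=0$ and $[\omega_1,\Theta]=[\omega_2,\Theta]=0$; these are finite, if lengthy, polynomial identities in $f,h,\theta$ and the $u$-derivatives. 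I expect the main obstacle to be precisely the bookkeeping in matching $\mathrm{Lie}_{\tilde X}\,\omega_2$ to the full three-$\partial_x$-order matrix \eqref{truncated3}: unlike the comparatively sparse N6 case, here every entry $q^{ij}_s$ is populated, so the hardest part is organizing the computation (for instance by separately matching the $(u^1_x)^3$, $(u^1_x)^2 u^2_x$, $u^1_x u^1_{xx}$, $u^1_{xxx}$, etc.\ monomial types at each $\partial_x$-order) rather than any conceptual difficulty. This being a routine though substantial verification, it may be relegated to the appendix together with the proof of Theorem \ref{thm_quasi}.
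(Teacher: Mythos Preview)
Your proposal follows essentially the same two-step route as the paper: invoke Theorem \ref{thm_quasi} to get the starting vector field $X$ (with only the $F_1$, $F_3$ terms surviving), apply an explicit degree-$2$ Miura generator $Y$ built from a potential $R\propto u^1 F_1\theta/\eta^{12}$, check that $\mathrm{Lie}_{\tilde X}\omega_2=\Theta$ after relabeling $(F_1,F_3)\leftrightarrow(f,h)$, and then verify $[\Theta,\Theta]=0$ by direct computation.

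Two small points to clean up. First, your description of how the Miura transformation acts at order $\epsilon^2$ is over-complicated: since $Y$ has degree $2$, it enters only once at that order and contributes simply $-\mathrm{Lie}_Y\omega_\lambda$; there are no $\mathrm{Lie}_{[X,Y]}\omega_2$ or $\tfrac12\mathrm{Lie}_Y^2\omega_2$ terms at $\epsilon^2$. Concretely, the paper takes $Y=\omega_1\,\delta H$ for a suitable local functional (so that $\mathrm{Lie}_Y\omega_1=0$ automatically), and then $\tilde X$ differs from $X$ by the addition of $Y$ modulo $d_{\omega_2}$-exact terms. Second, the cross conditions $[\omega_1,\Theta]=[\omega_2,\Theta]=0$ are automatic, since $\Theta$ is obtained by a Miura transformation from a valid second-order deformation; the only genuinely new check for the truncated pencil is $[\Theta,\Theta]=0$, which the paper also leaves as a ``cumbersome computation''.
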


\proof
By Theorem \ref{thm_quasi} we have $\Pi_{\lambda}=\omega_2-\lambda \omega_1+\epsilon^2 {\rm Lie}_X \omega_2+\mathcal{O}(\epsilon^3)$, where the components of the vector field $X$ are given by
$$
X^1= -\theta F_1 (u^1_x)^2, \quad X^2= -\theta F_3 (u^1_x)^2,
$$
with $\theta=(2 \eta^{12} u^2 - \eta^{22} u^1)^{-1}$.
 The Miura transformation 
$$
u^i \to \exp(-\epsilon Y) u^i, \quad i=1,2,
$$
generated by the vector field $Y$ of components
\begin{eqnarray*}
Y^1&=&-\eta^{12} R u^1_{xx}-\eta^{12} \partial_1  R (u^1_x)^2-\eta^{12} \partial_2 R  u^1_x u^2_x,\\
Y^2&=& -\eta^{22} R  u^1_{xx} -\eta^{22} \partial_1  R (u^1_x)^2 + (\eta^{12} \partial_1  R -\eta^{22} \partial_2  R ) u^1_x u^2_x + \eta^{12} \partial_2  R (u^2_x)^2 + \eta^{12} R u^2_{xx},
\end{eqnarray*}
with $R=-\frac{u^1 F_1}{2\eta^{12}(2 \eta^{12} u^2 - \eta^{22} u^1) }$, reduces the pencil to the form 
$$\omega_2-\lambda \omega_1+\epsilon^2 {\rm Lie}_{\tilde X} \omega_2+\mathcal{O}(\epsilon^3),$$ 
where
\begin{eqnarray*}
X^1&=&\frac{\theta u^1 F_1 u^1_{xx}}{2} + \left( \frac{\theta u^1 F_1'}{2}- \theta^2 (\eta^{12} u^2 -\eta^{22} u^2) F_1\right) (u^1_x)^2 -\theta^2\eta^{12} u^1 F_1 u^1_x u^2_x,\\
X^2&=&\frac{\theta \eta^{22} u^1 F_1 u^1_{xx}}{2 \eta^{12}}- \frac{\theta u^1 F_1 u^2_{xx}}{2}
- \left( \frac{\theta u^1 F_1'}{2} + \theta^2 (\eta^{12} u^2 +\eta^{22} u^2) F_1\right) u^1_x u^2_x\\
&&+\left(\frac{\theta \eta^{22} u^1 F_1'}{2 \eta^{12}} + \theta^2 \eta^{22} u^2 F_1 -\theta F_3\right) (u^1_x)^2
+ \theta^2 \eta^{12} u^1 F_1 (u^2_x)^2,
\end{eqnarray*}
To conclude the first part of the theorem we observe that it is easy to check that ${\rm Lie}_{\tilde X} \omega_2=\Theta$ 
 ($F_1=\frac{2 \eta^{12} f }{u^1}$ and $F_3=-\frac{h }{\eta^{12} u^1}$). 
The second part is a cumbersome computation.

\endproof

\begin{rmk}
Truncated Poisson pencils of the form
\begin{equation}\label{SS}
\Pi^{ij}_{\lambda}=\omega^{ij}_{\lambda}+\epsilon\sum_{l=0}^{2}(A^{ij}_{2;1,l}-\lambda A^{ij}_{1;1,l})
\d_x^{(2-l)}+\epsilon^2\sum_{l=0}^{3}(A^{ij}_{2;2,l}-\lambda A^{ij}_{1;2,l})\d_x^{(3-l)}\\
\end{equation}
where $\omega_\lambda$ is a Poisson pencil of hydrodynamic type associated with a Balinski\v{\i}-Novikov algebra appear in  \cite{SS}. In this case the coefficients 
$$A^{ij}_{2;1,0},\,A^{ij}_{1;1,0},\,A^{ij}_{2;2,0},\,A^{ij}_{1;2,0}$$ 
are related   with second and third order cocycles of the Balinski\v{\i}-Novikov algebra. In order to reduce deformations of the form
 \eqref{SS} to the canonical form $\Pi_{\lambda}=\omega_{\lambda}+\epsilon^2 \Theta+\mathcal{O}(\epsilon^3)$ one has to peform a Miura transformation producing (in general) infinitely many terms in the right hand side of \eqref{SS}. For this reason (in general) Strachan-Szablikowski truncated pencils correspond in our framework to non truncated pencils.
\end{rmk}

\section{Lifts of Poisson structures}

Given a differentiable manifold $M$, there is a natural way for lifting tensor fields and affine connections from $M$ to its tangent bundle $TM$, viewed as a manifold itself.
Such a lift is named \textit{complete lift} and has been extensively studied by Yano and Kobayashi \cite{YK, YKII, YKIII}.
 In this section we apply this construction to Poisson tensors defined on a suitable loop space. 

\subsection{Complete lift}
Let us recall the definition and some properties of complete lift, referring to original papers mentioned above for more details.

Given local coordinates $u^1,\dots,u^n$ on $M$, let $u^1,\dots,u^n,v^1,\dots,v^n$ be the induced bundle coordinates on $TM$ so that any tangent vector on $M$ has the form $v^i \frac{\partial}{\partial u^i}$. The complete lift of a function $f$, a one form $\alpha = \alpha_i du^i$, and a vector field $X = X^i \frac{\partial}{\partial u^i}$ is defined respectively by
\begin{equation}\label{liftbasic}
\hat f = v^j \frac{\partial f}{\partial u^j}, \qquad
\hat\alpha = v^j \frac{\partial \alpha_i}{\partial u^j} du^i + \alpha_i dv^i , \qquad
\hat X = X^i \frac{\partial}{\partial u^i} + v^j \frac{\partial X^i}{\partial u^j}\frac{\partial}{\partial v^i}.
\end{equation}
It follows readily from these local expressions that $\alpha(X)$ lifts to $\hat \alpha (\hat X)$, and a commutator $[X,Y]$ lifts to $[\hat X,\hat Y]$.

Lifted vector fields (resp. one-forms) span the tangent (resp. cotangent) space of $TM$ at any point which does not belong to the zero section $\{v=0\}$.
As a consequence, one can define the complete lift $\hat K$ of any given tensor field $K$ just by imposing that any contraction with a vector field $X$ or a one-form $\alpha$ on $M$ lifts to the contraction of $\hat K$ with $\hat X$ or $\hat \alpha$. 
Then one check that exterior derivative and Lie derivative are invariant with respect to the complete lift, meaning that $d\xi$ lifts to $d\hat \xi$ for any differential form $\xi$ and that a Lie derivative $L_XK$ lifts to $L_{\hat X}\hat K$.

It may be useful to have at hand explicit expressions for some special classes of tensors.
In particular, the complete lift of a bilinear form $g = g_{ij} du^i \otimes du^j$ turns out to be
\begin{equation}\label{liftg}
\hat g = v^k \frac{\partial g_{ij}}{\partial u^k} du^i \otimes du^j 
+ g_{ij} du^i \otimes dv^j 
+ g_{ij} dv^i \otimes du^j,
\end{equation}
and a trilinear form $T = T_{ijk} du^i \otimes du^j \otimes du^k$ lifts to
\begin{equation*}\label{liftT}
\hat T = v^h \frac{\partial T_{ijk}}{\partial u^h} du^i \otimes du^j \otimes du^k 
+ T_{ijk}  du^i \otimes du^j \otimes dv^k 
+ T_{ijk} du^i \otimes dv^j \otimes du^k 
+ T_{ijk} dv^i \otimes du^j \otimes du^k.
\end{equation*}
Moreover, an endomorphism of the tangent bundle $A=A^i_j \frac{\partial}{\partial u^i} \otimes du^j$ lifts to
\begin{equation}\label{liftA}
\hat A = A^i_j \frac{\partial}{\partial u^i} \otimes du^j 
+ v^k \frac{\partial A^i_j}{\partial u^k} \frac{\partial}{\partial v^i} \otimes du^j 
+ A^i_j \frac{\partial}{\partial v^i} \otimes dv^j,
\end{equation}
and the lift of a bilinear product on vector fields $\cdot = c^i_{jk} \frac{\partial}{\partial u^i} \otimes du^j \otimes du^k$ is
\begin{multline}\label{liftdot}
\hat \cdot = c^i_{jk} \frac{\partial}{\partial u^i} \otimes du^j \otimes du^k 
+ v^h\frac{\partial c^i_{jk}}{\partial u^h} \frac{\partial}{\partial v^i} \otimes du^j \otimes du^k \\
+ c^i_{jk} \frac{\partial}{\partial v^i} \otimes du^j \otimes du^k
+ c^i_{jk} \frac{\partial}{\partial v^i} \otimes dv^j \otimes du^k.
\end{multline}
Finally, any bivector $P = P^{ij} \frac{\partial}{\partial u^i} \otimes \frac{\partial}{\partial u^j}$ lifts to
\begin{equation}\label{liftbiv}
\hat P = P^{ij} \frac{\partial}{\partial u^i} \otimes \frac{\partial}{\partial v^j} 
+P^{ij} \frac{\partial}{\partial v^i} \otimes \frac{\partial}{\partial u^j} 
+ v^k \frac{\partial P^{ij}}{\partial u^k} \frac{\partial}{\partial v^i} \otimes \frac{\partial}{\partial v^j}.
\end{equation}

Let now $\nabla \frac{\partial}{\partial u^k} = \Gamma^i_{jk} \frac{\partial}{\partial u^i}\otimes du^j$ be an affine connection on $M$. 
Its complete lift $\hat \nabla$ is an affine connection on $TM$ defined by requiring that for all vector fields $X$ on $M$ the endomorphism $\nabla X$ lifts to $\hat \nabla \hat X$.
Using that $\frac{\partial}{\partial u^k}$ and $u^l\frac{\partial}{\partial u^k}$ lift to $\frac{\partial}{\partial u^k}$ and $u^l\frac{\partial}{\partial u^k}+v^l\frac{\partial}{\partial v^k}$ respectively,  one can check that
\begin{eqnarray}
\hat \nabla \frac{\partial}{\partial u^k} \label{liftGammau}
&=& \Gamma^i_{jk} \frac{\partial}{\partial u^i}\otimes du^j + v^h \frac{\partial\Gamma^i_{jk}}{\partial u^h}  \frac{\partial}{\partial v^i}\otimes du^j + \Gamma^i_{jk} \frac{\partial}{\partial v^i}\otimes dv^j, \\
\hat \nabla \frac{\partial}{\partial v^k} \label{liftGammav}
&=& \Gamma^i_{jk} \frac{\partial}{\partial v^i}\otimes du^j.
\end{eqnarray}
Readily from definition one deduces that for any tensor field $K$ on $M$ the complete lift of $\nabla K$ equals $\hat \nabla \hat K$.
In particular, any flat tensor $(\nabla K = 0)$ lifts to a flat tensor $(\hat \nabla \hat K=0)$.
Moreover it holds the following \cite[Proposition 7.1]{YK}:
\begin{prop}\label{liftTR}
The torsion and the curvature of $\hat \nabla$ are the complete lift of the torsion and the curvature of $\nabla$.
\end{prop}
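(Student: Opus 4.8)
The plan is to avoid the coordinate computation based on \eqref{liftGammau}--\eqref{liftGammav} and instead exploit the intrinsic characterization of the complete lift together with the invariance properties already recorded in this section. Recall that the torsion and curvature of a connection $D$ are the tensors
$$
\mathrm{Tor}_D(X,Y)=D_XY-D_YX-[X,Y],\qquad \mathrm{Cur}_D(X,Y)Z=D_XD_YZ-D_YD_XZ-D_{[X,Y]}Z.
$$
I would show that the torsion and curvature of $\hat\nabla$ coincide with $\widehat{\mathrm{Tor}_\nabla}$ and $\widehat{\mathrm{Cur}_\nabla}$ by verifying that both sides have the same contractions with lifted vector fields, and then invoking the fact that lifted vector fields span the tangent space of $TM$ off the zero section; since all the tensors involved are smooth, agreement on this dense open set forces agreement everywhere.

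First I would assemble three facts already available: (i) $[X,Y]$ lifts to $[\hat X,\hat Y]$; (ii) for every vector field $X$ the endomorphism $\nabla X$ lifts to $\hat\nabla\hat X$, so that contracting with a lifted field gives $\hat\nabla_{\hat Y}\hat X=\widehat{\nabla_YX}$; and (iii) more generally $\nabla K$ lifts to $\hat\nabla\hat K$ for every tensor $K$. From (i) and (ii) the torsion of $\hat\nabla$ evaluated on lifted fields is
$$
\mathrm{Tor}_{\hat\nabla}(\hat X,\hat Y)=\hat\nabla_{\hat X}\hat Y-\hat\nabla_{\hat Y}\hat X-[\hat X,\hat Y]=\widehat{\nabla_XY}-\widehat{\nabla_YX}-\widehat{[X,Y]}=\widehat{\mathrm{Tor}_\nabla(X,Y)}.
$$
By the defining contraction property of the complete lift, the $(1,2)$-tensor $\widehat{\mathrm{Tor}_\nabla}$ is the unique tensor whose value on $(\hat X,\hat Y)$ is $\widehat{\mathrm{Tor}_\nabla(X,Y)}$ (this uses once more that lifted one-forms span the cotangent bundle off the zero section). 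Hence $\mathrm{Tor}_{\hat\nabla}$ and $\widehat{\mathrm{Tor}_\nabla}$ agree on all lifted fields, and therefore everywhere.

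For the curvature the same mechanism applies, but property (ii) must be iterated. The key observation is that $\hat\nabla_{\hat Y}\hat Z=\widehat{\nabla_YZ}$ is itself the complete lift of the vector field $\nabla_YZ$ on $M$; applying (ii) to this intermediate field gives $\hat\nabla_{\hat X}\bigl(\widehat{\nabla_YZ}\bigr)=\widehat{\nabla_X\nabla_YZ}$, while (i) yields $\hat\nabla_{[\hat X,\hat Y]}\hat Z=\hat\nabla_{\widehat{[X,Y]}}\hat Z=\widehat{\nabla_{[X,Y]}Z}$. Summing the three terms gives $\mathrm{Cur}_{\hat\nabla}(\hat X,\hat Y)\hat Z=\widehat{\mathrm{Cur}_\nabla(X,Y)Z}$, which identifies $\mathrm{Cur}_{\hat\nabla}$ with $\widehat{\mathrm{Cur}_\nabla}$ by the same spanning-and-continuity argument. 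I expect the only delicate point to be exactly this iteration: one must confirm that $\nabla_YZ$ is a genuine vector field on $M$ whose lift may legitimately be differentiated again, i.e. that the second covariant derivative of $\hat\nabla$ commutes with the lift — which is precisely property (ii)/(iii) applied to the intermediate tensor $\nabla Z$. A direct check through \eqref{liftGammau}--\eqref{liftGammav}, matching $\Gamma^i_{jk}-\Gamma^i_{kj}$ and the Riemann symbols against the lift formulas for $(1,2)$- and $(1,3)$-tensors, is available but considerably more cumbersome and merely reproduces the same identity.
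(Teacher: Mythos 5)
Your argument is correct, but note that the paper does not actually prove Proposition \ref{liftTR}: it is quoted verbatim from Yano--Kobayashi (their Proposition 7.1), so there is no in-paper computation to compare against. Your intrinsic proof is a legitimate self-contained alternative, and it uses exactly the ingredients this section makes available: the defining property $\hat\nabla_{\hat Y}\hat X=\widehat{\nabla_Y X}$ of the lifted connection, the compatibility of the lift with Lie brackets, linearity of the lift, and the fact that lifted vector fields span the tangent spaces of $TM$ off the zero section, so that two smooth tensors agreeing on all tuples of lifted fields agree on the dense open complement of the zero section and hence everywhere. The iteration step you single out, $\hat\nabla_{\hat X}\bigl(\widehat{\nabla_Y Z}\bigr)=\widehat{\nabla_X\nabla_Y Z}$, is indeed the only point requiring care, and it is handled correctly: $\nabla_Y Z$ is an honest vector field on $M$, so the defining property of $\hat\nabla$ applies to it verbatim. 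The one thing worth making explicit is that the prescription ``$\nabla X$ lifts to $\hat\nabla\hat X$ for all $X$'' determines $\hat\nabla$ in the first place (off the zero section lifted fields generate the module of vector fields locally, and the Leibniz rule then fixes $\hat\nabla_W Y$ for arbitrary $W,Y$; continuity extends this to the zero section), which is implicitly assumed both by the paper and by your argument. Compared with the coordinate route through \eqref{liftGammau}--\eqref{liftGammav}, your approach avoids matching Christoffel and Riemann symbols component by component, at the price of invoking the density argument twice; either is acceptable, and yours generalizes immediately to other natural tensors built from $\nabla$.
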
 

\begin{rmk}
Since the lift it is well defined for tensors and connections we can apply it to the geometric structures defining a Frobenius manifolds.
 As a result one obtain a \emph{lifted Frobenius structure}. We discuss this construction in more detail in the Appendix \ref{liftFrob}. 
\end{rmk}

\subsection{Lift of Poisson structures of hydrodynamic type}

The class of structures that can be lifted to the tangent bundle by means of complete lift includes symplectic forms and more generally Poisson tensors. 
The latter has been studied in some detail by Mitric and Vaisman \cite{MV}.
Since the Schouten bracket is defined in terms of Lie derivative, if follows that it is invariant by complete lift as well.
As a consequence, the complete lift of a bi-Hamiltonian structure $P_\lambda = P + \lambda Q$, where $\lambda \in \mathbf R$ and $P,Q$ are Poisson tensors on $M$ satisfying $[P,Q]=0$, is a bi-Hamiltonian structure $\hat P_\lambda = \hat P + \lambda \hat Q$.

Recall that, in local coordinates $u^i$ on $M$ and $x$ on $S^1$ the Poisson tensor $P$ at $\gamma=u(x)$ is represented by $\frac{\partial}{\partial u^i} \otimes P^{ij} \frac{\partial}{\partial u^j}$ where
\begin{equation}\label{componentofP}
P^{ij} = g^{ij} \partial_x + b^{ij}_k u^k_x,\qquad i,j=1,...,n.
\end{equation}
Here $g^{ij}$ is the inverse of the matrix $g_{ij}$ which represents $g$ locally, and $b^{ij}_k = -g^{ih} \Gamma^j_{hk}$, being $\Gamma^j_{hk}$ the Christoffel symbols of $g$. It is clear that $P$ can be lifted to $\mathcal L(TM)$ defining $\hat{P}$ as
$$
\hat{P}^{\alpha\beta}=\hat{g}^{\alpha\beta}\d_x+\hat{b}^{\alpha\beta}_{\gamma}u^{\gamma}_x,\qquad \alpha,\beta=1,...,2n, 
$$ 
where $\hat{g}$ is the lift of the contravariant metric, $\hat{b}^{\alpha\beta}_{\gamma}$ are the contravariant Christoffel symbols of the lifted Levi-Civita connection and we set $u^{n+i}=v^i$. Indeed one has only to check that $\hat \nabla$ is the Levi-Civita connection of the lifted metric $\hat g$.
But this follows by uniqueness of Levi-Civita connection together with the fact that $\hat \nabla \hat g = 0$ for $\nabla g=0$, and that $\hat \nabla$ is torsion free by Proposition \ref{liftTR} and by torsion-freeness of $\nabla$.  
Therefore $\hat g$ defines a Poisson structure of hydrodynamic type $\hat P$ on $\mathcal L(TM)$. 

\begin{rmk}
It is easy to check that the lift $\hat{P}$ is uniquely defined by the requirement (the analogous property in the finite dimensional case has been observed in \cite{MV})
\begin{equation}\label{main}
\{H_\xi,H_\eta\}_{\hat{P}}=\int_{S^1} \langle v,\{\xi,\eta\}_P\rangle\,dx
\end{equation}
where $H_{\xi}=\int_{S^1}\langle \xi, v\rangle\,dx$ and $\{\cdot,\cdot\}_P$ is the Poisson bracket on 1-forms \cite{GD,MM} defined by
 $g$ \cite{ALham}:
\beq\label{poissoncoordinatesf}
\{\xi, \eta\}_j=g^{kl}\left[\p_x^{s+1}(\eta)_l\f{\p (\xi)_j}{\p u^k_{(s)}}-\p_x^{s+1}(\xi)_l\f{\p (\eta)_j}{\p u^k_{(s)}}\right].
\eeq
\end{rmk}

\begin{prop}
In local coordinates $u^i,v^i$ on $TM$ one has
\begin{multline}\label{liftpoisson}
\hat P = \frac{\partial}{\partial v^i} \otimes (g^{ij} \partial_x + b^{ij}_ku^k_x) \frac{\partial}{\partial u^j} 
+ \frac{\partial}{\partial u^i} \otimes (g^{ij} \partial_x + b^{ij}_k u^k_x) \frac{\partial}{\partial v^j} \\
+ \frac{\partial}{\partial v^i} \otimes \left(v^h(b^{ij}_h+b^{ji}_h) \partial_x 
+ v^h \frac{\partial b^{ij}_k}{\partial u^h}u^k_x 
+ b^{ij}_k v^k_x \right) \frac{\partial}{\partial v^j}
\end{multline}
\end{prop}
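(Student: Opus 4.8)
The plan is to verify the explicit formula \eqref{liftpoisson} for $\hat{P}$ by computing the complete lift of the bivector $P$ directly from its defining local expression \eqref{componentofP}, and then matching the result against the abstract characterization of $\hat{P}$ as the Poisson structure of hydrodynamic type associated to the lifted metric $\hat{g}$ and connection $\hat{\nabla}$. First I would write the hydrodynamic Poisson bivector $P$ on $\mathcal{L}(M)$ as a first-order differential-operator-valued bivector with leading symbol the contravariant metric $g^{ij}$ and lower-order term $b^{ij}_k u^k_x$. The key point is that the complete-lift machinery developed earlier in the excerpt — in particular formula \eqref{liftbiv} for the lift of a bivector, formula \eqref{liftg} for the lift of the metric, and formulas \eqref{liftGammau}--\eqref{liftGammav} for the lift of the connection — applies componentwise, and the only genuinely new ingredient relative to the finite-dimensional case is the presence of $\partial_x$ and of the $x$-derivatives $u^k_x$.

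The main computation splits into two natural pieces corresponding to the two terms of \eqref{componentofP}. For the leading term $g^{ij}\partial_x$, I would invoke \eqref{liftbiv} with $P^{ij}$ replaced by $g^{ij}$, which immediately produces the three blocks $\frac{\partial}{\partial v^i}\otimes g^{ij}\partial_x\frac{\partial}{\partial u^j}$, $\frac{\partial}{\partial u^i}\otimes g^{ij}\partial_x\frac{\partial}{\partial v^j}$, and $v^h\frac{\partial g^{ij}}{\partial u^h}\frac{\partial}{\partial v^i}\otimes\partial_x\frac{\partial}{\partial v^j}$; here I would use that the contravariant metric $\hat{g}$ on $TM$ has precisely the upper-off-diagonal and lower-right structure dictated by \eqref{liftg} (transported to the contravariant side via inversion). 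For the lower-order term $b^{ij}_k u^k_x$, I would treat $b^{ij}_k$ as the components of a $(2,1)$-tensor and apply the lift rule; the lifted coefficients split into a part proportional to $b^{ij}_k$ (acting in the mixed $u$–$v$ blocks) and a part $v^h\frac{\partial b^{ij}_k}{\partial u^h}$ living in the lower-right $v$–$v$ block, together with the symmetrized combination $v^h(b^{ij}_h+b^{ji}_h)$ that appears as the coefficient of $\partial_x$ in the $\frac{\partial}{\partial v^i}\otimes\frac{\partial}{\partial v^j}$ block. Assembling these blocks and relabelling $u^{n+i}=v^i$ gives exactly \eqref{liftpoisson}.

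The step I expect to be the main obstacle is the appearance of the symmetric combination $v^h(b^{ij}_h+b^{ji}_h)$ as the leading coefficient in the $v$–$v$ block, rather than the naive $v^h b^{ij}_h$ one might expect from a purely formal application of \eqref{liftbiv}. This symmetrization is forced by the requirement that $\hat{g}^{\alpha\beta}$ be the genuine contravariant metric dual of the lifted Levi-Civita connection $\hat{\nabla}$: the identity $b^{ij}_k=-g^{ih}\Gamma^j_{hk}$ together with the flatness and metric-compatibility of $\nabla$ (hence of $\hat{\nabla}$, by Proposition \ref{liftTR} and by $\hat{\nabla}\hat{g}=0$) controls exactly how the $x$-derivative of the metric redistributes among the blocks. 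Concretely, I would verify that the coefficient of $\partial_x$ in the $v$–$v$ block of $\hat{P}$ equals $\hat{g}^{(n+i)(n+j)}$ computed from \eqref{liftg}, and that this forces the symmetric sum. Once this identification is in place the remaining blocks match term by term, and the alternative route via the characterization \eqref{main} provides an independent consistency check: pairing $\hat{P}$ with lifted one-forms $H_\xi$ and using \eqref{poissoncoordinatesf} must reproduce $\int_{S^1}\langle v,\{\xi,\eta\}_P\rangle\,dx$, which singles out \eqref{liftpoisson} uniquely.
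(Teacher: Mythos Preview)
Your proposal is correct, and the overall logical shape---compute the lifted contravariant metric $\hat g^{\alpha\beta}$ and the lifted contravariant Christoffel symbols $\hat b^{\alpha\beta}_\gamma$ block by block, then assemble $\hat P=\hat g\,\partial_x+\hat b\,u_x$---is sound. The one loose phrase is ``treat $b^{ij}_k$ as the components of a $(2,1)$-tensor and apply the lift rule'': the $b^{ij}_k$ are not tensorial, so this step must really mean computing $\hat b^{\alpha\beta}_\gamma=-\hat g^{\alpha\delta}\hat\Gamma^\beta_{\delta\gamma}$ from the explicit $\hat\Gamma$ in \eqref{liftGammau}--\eqref{liftGammav}. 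You handle the only subtle point correctly: the symmetric combination $v^h(b^{ij}_h+b^{ji}_h)$ in the $v$--$v$ leading block is exactly $v^h\partial_h g^{ij}$, which is what the bivector lift \eqref{liftbiv} applied to the contravariant metric gives, and which agrees with the inverse of \eqref{liftg}.

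The paper's own argument is organized differently. Rather than working block by block, it introduces the metric duals $W^j=g^{ij}\partial_{u^i}$ of the coordinate one-forms, observes that $du^j$ lifts to $dv^j$ so that the $\hat g$-dual $V^j$ of $dv^j$ is precisely the complete lift $\hat W^j$, and then exploits the functoriality $\widehat{\nabla W^j}=\hat\nabla\hat W^j$ together with \eqref{liftA} to read off $\hat\nabla V^j$ in one stroke; the dual $U^j$ of $du^j$ and $\hat\nabla U^j$ are handled directly from \eqref{liftGammav}. This route avoids ever multiplying $\hat g^{\alpha\delta}$ against $\hat\Gamma^\beta_{\delta\gamma}$ and makes the identity $\partial_k g^{ij}=b^{ij}_k+b^{ji}_k$ appear naturally in the expression for $\hat W^j$. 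Your approach is more pedestrian but equally valid; the paper's buys some conceptual economy by leveraging the compatibility of complete lift with covariant differentiation rather than recomputing everything at the level of Christoffel symbols.
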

\begin{proof}
Thanks to \eqref{componentofP} we have to determine the coefficients $g^{ij}$ and $b^{ij}_k$ for the lifted metric $\hat g$. 
To this end, let $W^j$ be the metric dual of the coordinate one-form $du^j$ on $M$.
This means that $W^j$ is the unique vector field on $M$ such that $g(W^j,\cdot) = du^j$, and clearly one has
\begin{equation}
W^j = g^{ij}\frac{\partial}{\partial u^i}.
\end{equation}
Moreover, well known properties of Christoffel symbols yield 
\begin{equation}\label{nablaW}
\nabla W^j = b^{ij}_k \frac{\partial}{\partial u^i} \otimes du^k.
\end{equation}
Therefore one can write
\begin{equation}
P = W^j \otimes \partial_x \frac{\partial}{\partial u^j} + \nabla_{\dot \gamma} W^j \otimes \frac{\partial}{\partial u^j},
\end{equation}
wehere $\dot \gamma = u^k_x \frac{\partial}{\partial u^k}$.

Let $U^j$ and $V^j$ be the metric dual of $du^j$ and $dv^j$ with respect to the lifted metric $\hat g$ on $TM$.
One can readily check by \eqref{liftg} that 
\begin{equation}\label{expressionUj}
U^j = g^{ij} \frac{\partial}{\partial v^i}.
\end{equation}
On the other hand, by \eqref{liftbasic} the lift of $du^j$ turns out to be $dv^j$.
Therefore $V^j = \hat W^j$, so that
\begin{equation}\label{expressionVj}
V^j = g^{ij}\frac{\partial}{\partial u^i} + v^k(b^{ij}_k + b^{ji}_k)\frac{\partial}{\partial v^i},
\end{equation}
where we used the identity
\begin{equation}\label{derivginv}
\frac{\partial g^{ij}}{\partial u^k} = b^{ij}_k + b^{ji}_k.
\end{equation}
In particular $\hat \nabla V^j = \hat \nabla \hat W^j $, whence by definition of lifted connection and equations \eqref{nablaW}, \eqref{liftA} it follows
\begin{equation}\label{hatnablaVj}
\hat \nabla V^j = b^{ij}_k \frac{\partial}{\partial u^i} \otimes du^k 
+ v^h \frac{\partial b^{ij}_k}{\partial u^h} \frac{\partial}{\partial v^i} \otimes du^k 
+ b^{ij}_k \frac{\partial}{\partial v^i} \otimes dv^k.
\end{equation}
On the other hand, by \eqref{liftGammav} one calculates
\begin{equation}
\hat \nabla U^j
= \frac{\partial g^{ij}}{\partial u^k} \frac{\partial}{\partial v^i} \otimes du^k
+ g^{ij} \Gamma^h_{ki} \frac{\partial}{\partial v^h}\otimes du^k,
\end{equation}
whence, thanks to the identity \eqref{derivginv}, one concludes
\begin{equation}\label{hatnablaUj}
\hat \nabla U^j
= b^{ij}_k \frac{\partial}{\partial v^i} \otimes du^k.
\end{equation}
The statement then follows by simple calculations from equations \eqref{expressionUj}, \eqref{expressionVj}, \eqref{hatnablaVj}, \eqref{hatnablaUj} and the identity
\begin{equation}
\hat P = 
U^j \otimes \partial_x \frac{\partial}{\partial u^j} + \hat \nabla_{\dot \gamma} U^j \otimes \frac{\partial}{\partial u^j} 
+ V^j \otimes \partial_x \frac{\partial}{\partial v^j} + \nabla_{\dot \gamma} V^j \otimes \frac{\partial}{\partial v^j},
\end{equation}
where $\dot \gamma = u^k_x \frac{\partial}{\partial u^k} + v^k_x \frac{\partial}{\partial v^k}$ for any loop $\gamma = (u(x),v(x))$ in $TM$.
\end{proof}

\subsection{Lift of bivectors in the loop space}
In matrix notation the lift \eqref{liftpoisson} takes the form
\begin{equation}\label{liftP}
\hat{P}
=
\begin{pmatrix}
0 & P^{ij} \cr
P^{ij} & \sum_{k,t}v^k_{(t)}\frac{\d P^{ij}}{\d u^k_{(t)}}
\end{pmatrix},
\end{equation}
whence it is clear that one can lift to $\mathcal L(TM)$ any given Poisson structure (non-necessarily of hydrodynamic type) on the loop space $\mathcal L(M)$. The proof of this fact is contained in the book \cite{Ku} in the framework of linearization of Hamiltonian objects a.k.a. formal or universal linearization  (see for instance \cite{K,KKVV}) or tangent covering (see for instance \cite{KV}).
 We provide here a different direct proof which rests just on the Schouten bracket formula given in \cite{DZ}.

\begin{thm}\label{thm::liftSchouten}
Suppose that 
\begin{equation*}
P^{ij}_{x,y}=P^{ij}_k(x-y,u, u_x, \ldots, u_{k+1})=\sum_{m=0}^{k+1} A^{ij}_m (u, u_x, \ldots, u_{k+1}) \delta^{(k+1-m)}(x-y).
\end{equation*}
and
\begin{equation*}
Q^{ij}_{x,y}=Q^{ij}_k(x-y,u, u_x, \ldots, u_{k+1})=\sum_{m=0}^{k+1} B^{ij}_m (u, u_x, \ldots, u_{k+1}) \delta^{(k+1-m)}(x-y).
\end{equation*}
have vanishing Schouten bracket 
\begin{eqnarray*}
 [P,Q]^{ijk}_{x,y,z} &=&
\frac{\partial P^{ij}_{x,y}}{\partial u^{l}_{(s)}(x)} \partial_x^s Q^{lk}_{x,z}
+ \frac{\partial Q^{ij}_{x,y}}{\partial u^{l}_{(s)}(x)} \partial_x^s P^{lk}_{x,z}
+\frac{\partial P^{ki}_{z,x}}{\partial u^{l}_{(s)}(z)} \partial_z^s Q^{lj}_{z,y}+\\
&&+ \frac{\partial Q^{ki}_{z,x}}{\partial u^{l}_{(s)}(z)} \partial_z^s P^{lj}_{z,y}
+\frac{\partial P^{jk}_{y,z}}{\partial u^{l}_{(s)}(y)} \partial_y^s Q^{li}_{y,x}
+ \frac{\partial Q^{jk}_{y,z}}{\partial u^{l}_{(s)}(y)} \partial_y^s P^{li}_{y,x}=0,
\end{eqnarray*}
then also the lifted structures
$$
\hat{P}=
\begin{pmatrix}
0 & P^{ij} \cr
P^{ij} & \sum_{k,t}v^k_{(t)}\frac{\d P^{ij}}{\d u^k_{(t)}}
\end{pmatrix},\qquad
\hat{Q}=
\begin{pmatrix}
0 & Q^{ij} \cr
Q^{ij} & \sum_{k,t}v^k_{(t)}\frac{\d Q^{ij}}{\d u^k_{(t)}}
\end{pmatrix}
$$
have vanishing Schouten bracket.
\end{thm}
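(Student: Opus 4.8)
The plan is to prove the statement by a direct computation with the explicit Schouten bracket formula recalled in the hypotheses, exploiting the block structure of the lifts together with a single auxiliary derivation. Throughout I write $w^\alpha$, $\alpha=1,\dots,2n$, for the coordinates on $TM$, with $w^i=u^i$ and $w^{n+i}=v^i$, and I introduce the \emph{vertical derivation}
$$
D=\sum_{k,t}v^k_{(t)}\frac{\partial}{\partial u^k_{(t)}},
$$
so that the lifted components read $\hat P^{ij}=0$, $\hat P^{i,n+j}=\hat P^{n+i,j}=P^{ij}$ and $\hat P^{n+i,n+j}=D(P^{ij})$, and likewise for $\hat Q$. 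Conceptually this block structure is nothing but the first–order linearization $u\mapsto u+\varepsilon v$ (with $\varepsilon^2=0$), which already explains heuristically why the bracket should be preserved; but I would prove it directly. The whole argument rests on three elementary facts about $D$: it is a derivation of the differential-polynomial algebra; it commutes with $\partial_x$ and with each $\partial/\partial u^l_{(s)}$ (both checked in one line, using $\partial v^k_{(t)}/\partial u^l_{(s)}=0$ and $\partial_x v^k_{(s)}=v^k_{(s+1)}$); and its $v$-derivative inverts it, in the sense that $\partial (D P^{ij})/\partial v^l_{(s)}=\partial P^{ij}/\partial u^l_{(s)}$ while $\partial(DP^{ij})/\partial u^l_{(s)}=D(\partial P^{ij}/\partial u^l_{(s)})$. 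Since $P^{ij}$ and $Q^{ij}$ are pulled back from the base they do not depend on the $v$'s, so $\partial P^{ij}/\partial v^l_{(s)}=0$.

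Next I would organize the computation of $[\hat P,\hat Q]^{\alpha\beta\gamma}$ according to how many of the three external indices $\alpha,\beta,\gamma$ are of $v$-type. By the symmetry of the bracket formula under simultaneous cyclic permutation of the pairs $(\alpha,x)$, $(\beta,y)$, $(\gamma,z)$, only this number matters and not the positions, leaving four cases. When no external index is of $v$-type, every term carries a factor $\partial\hat P^{\,\alpha\beta}/\partial w^l$ or $\partial\hat Q^{\,\alpha\beta}/\partial w^l$ with $\alpha,\beta\le n$, hence $\hat P^{\alpha\beta}=\hat Q^{\alpha\beta}=0$ and the bracket vanishes identically. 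When exactly one external index is of $v$-type, in each term the non-vanishing upper block forces the contracted index $l$ to be of $u$-type (because $\partial P/\partial v=0$), and then the remaining factor $\hat Q^{l\,\cdot}$, respectively $\hat P^{l\,\cdot}$, carries two $u$-indices and is therefore $0$; so this component also vanishes identically.

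The two substantial cases are two and three $v$-indices. With exactly two $v$-type external indices, say $\alpha=n+a$, $\beta=n+b$, $\gamma=c$, the derivative identities select, term by term, exactly the contributions in which $l$ is of $v$-type in the first cyclic block and of $u$-type in the other two, and using $\partial(DP^{ab})/\partial v^m_{(s)}=\partial P^{ab}/\partial u^m_{(s)}$ one checks that the six surviving terms reproduce verbatim the six terms of the base bracket, so that $[\hat P,\hat Q]^{n+a,n+b,c}=[P,Q]^{abc}=0$. With all three external indices of $v$-type, $\alpha=n+a$, $\beta=n+b$, $\gamma=n+c$, each term splits into an $l$-of-$u$-type piece and an $l$-of-$v$-type piece; invoking that $D$ is a derivation commuting with $\partial_x$, these two pieces recombine into $D$ applied to the corresponding base term, whence $[\hat P,\hat Q]^{n+a,n+b,n+c}=D\!\left([P,Q]^{abc}\right)=D(0)=0$.

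The computation is essentially mechanical once the block structure and the three properties of $D$ are in place. The only point requiring genuine care — and the step I expect to be the main obstacle in writing the proof cleanly — is the bookkeeping of the contracted index $l$ together with the action of $\partial_x$ in the lifted loop space: one must keep straight that in the surviving terms $\partial_x$ acts on the base quantities $Q^{mc}$ as the base $x$-derivative, whereas on the $v$-dependent blocks $DQ^{mc}$ it has to be commuted through $D$ by means of $\partial_x(DQ^{mc})=D(\partial_x Q^{mc})$. Checking this commutation is precisely what makes the recombination in the last two cases close up, while the rest is a routine matching of the six surviving terms against the six terms of the base identity $[P,Q]^{abc}=0$.
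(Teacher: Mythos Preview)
Your proposal is correct and follows essentially the same approach as the paper: the same four-way case split according to the number of $v$-type external indices, the same block identities $\partial(DP^{ab})/\partial v^m_{(s)}=\partial P^{ab}/\partial u^m_{(s)}$ and $\partial(DP^{ab})/\partial u^l_{(s)}=D(\partial P^{ab}/\partial u^l_{(s)})$, and the same reduction of the three-$v$ case via the commutation $[D,\partial_x]=0$ to $D\bigl([P,Q]^{abc}\bigr)=0$. The only cosmetic difference is that you name and use the derivation $D$ explicitly throughout, whereas the paper writes out the sums; in the one-$v$ case you should also say that two of the six terms vanish because the differentiated block $\hat P^{jk}$ (resp.\ $\hat Q^{jk}$) is already zero, not because of the $u$-type forcing you describe for the other four.
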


\begin{proof}
Throughout in this proof $u^{n+i}$ will denote $v^i$ for all $i=1,\dots,n$.
Moreover we fix the convention that latin indices $i,j,k$ run from $1$ through $n$, and greek indices $\alpha,\beta,\gamma$ run from $1$ through $2n$.
By straightforward computation we obtain
\begin{itemize}
\item For $\alpha=i,\beta=j,\gamma=k$:
\begin{eqnarray*}
 [\hat{P},\hat{Q}]^{\alpha\beta\gamma}_{x,y,z} &=&
\frac{\partial \hat{P}^{ij}_{x,y}}{\partial u^{\lambda}_{(s)}(x)} \partial_x^s \hat{Q}^{\lambda k}_{x,z}
+ \frac{\partial \hat{Q}^{ij}_{x,y}}{\partial u^{\lambda}_{(s)}(x)} \partial_x^s \hat{P}^{\lambda k}_{x,z}
+\frac{\partial \hat{P}^{ki}_{z,x}}{\partial u^{\lambda}_{(s)}(z)} \partial_z^s \hat{Q}^{\lambda j}_{z,y}+\\
&&+ \frac{\partial \hat{Q}^{ki}_{z,x}}{\partial u^{\lambda}_{(s)}(z)} \partial_z^s \hat{P}^{\lambda j}_{z,y}
+\frac{\partial \hat{P}^{jk}_{y,z}}{\partial u^{\lambda}_{(s)}(y)} \partial_y^s \hat{Q}^{\lambda i}_{y,x}
+ \frac{\partial \hat{Q}^{jk}_{y,z}}{\partial u^{\lambda}_{(s)}(y)} \partial_y^s \hat{P}^{\lambda i}_{y,x}=0,
\end{eqnarray*}
since $\hat{P}^{ij}_{x,y}= \hat{Q}^{ij}_{x,y}= \hat{P}^{ki}_{z,x}= \hat{Q}^{ki}_{z,x}=\hat{P}^{jk}_{y,z}=\hat{Q}^{jk}_{y,z}=0$.
\item  For $\alpha=n+i,\beta=j,\gamma=k$:
\begin{eqnarray*}
 [\hat{P},\hat{Q}]^{\alpha\beta\gamma}_{x,y,z} &=&
\frac{\partial \hat{P}^{n+i,j}_{x,y}}{\partial u^{\lambda}_{(s)}(x)} \partial_x^s \hat{Q}^{\lambda k}_{x,z}
+ \frac{\partial \hat{Q}^{n+i,j}_{x,y}}{\partial u^{\lambda}_{(s)}(x)} \partial_x^s \hat{P}^{\lambda k}_{x,z}
+\frac{\partial \hat{P}^{k,n+i}_{z,x}}{\partial u^{\lambda}_{(s)}(z)} \partial_z^s \hat{Q}^{\lambda j}_{z,y}+\\
&&+ \frac{\partial \hat{Q}^{k,n+i}_{z,x}}{\partial u^{\lambda}_{(s)}(z)} \partial_z^s \hat{P}^{\lambda j}_{z,y}
+\frac{\partial \hat{P}^{jk}_{y,z}}{\partial u^{\lambda}_{(s)}(y)} \partial_y^s \hat{Q}^{\lambda,n+i}_{y,x}
+ \frac{\partial \hat{Q}^{jk}_{y,z}}{\partial u^{\lambda}_{(s)}(y)} \partial_y^s \hat{P}^{\lambda,n+i}_{y,x}=\\
&&\frac{\partial P^{ij}_{x,y}}{\partial u^{n+l}_{(s)}(x)} \partial_x^s Q^{lk}_{x,z}
+ \frac{\partial Q^{ij}_{x,y}}{\partial u^{n+l}_{(s)}(x)} \partial_x^s P^{lk}_{x,z}
+\frac{\partial P^{ki}_{z,x}}{\partial u^{n+l}_{(s)}(z)} \partial_z^s Q^{lj}_{z,y}+\\
&&+ \frac{\partial Q^{ki}_{z,x}}{\partial u^{n+l}_{(s)}(z)} \partial_z^s P^{lj}_{z,y}
+\frac{\partial \hat{P}^{jk}_{y,z}}{\partial u^{l}_{(s)}(y)} \partial_y^s Q^{li}_{y,x}
+ \frac{\partial \hat{Q}^{jk}_{y,z}}{\partial u^{l}_{(s)}(y)} \partial_y^s P^{li}_{y,x}
=0,
\end{eqnarray*}
since $\hat{P}^{jk}_{y,z}=\hat{Q}^{jk}_{y,z}=0$ and $ P^{ij}_{x,y},\, Q^{ij}_{x,y},\, P^{ki}_{x,y},\, Q^{ki}_{x,y}$
 do not depend on coordinates on the fibers. Similarly one can prove the vanishing of the Schouten bracket for
 $\alpha=i,\beta=n+j,\gamma=k$ and $\alpha=i,\beta=j,\gamma=n+k$.
\item  For $\alpha=n+i,\beta=n+j,\gamma=k$:
\begin{eqnarray*}
 [\hat{P},\hat{Q}]^{\alpha\beta\gamma}_{x,y,z} &=&
\frac{\partial \hat{P}^{n+i,n+j}_{x,y}}{\partial u^{\lambda}_{(s)}(x)} \partial_x^s \hat{Q}^{\lambda k}_{x,z}
+ \frac{\partial \hat{Q}^{n+i,n+j}_{x,y}}{\partial u^{\lambda}_{(s)}(x)} \partial_x^s \hat{P}^{\lambda k}_{x,z}
+\frac{\partial \hat{P}^{k,n+i}_{z,x}}{\partial u^{\lambda}_{(s)}(z)} \partial_z^s \hat{Q}^{\lambda,n+j}_{z,y}+\\
&&+ \frac{\partial \hat{Q}^{k,n+i}_{z,x}}{\partial u^{\lambda}_{(s)}(z)} \partial_z^s \hat{P}^{\lambda,n+j}_{z,y}
+\frac{\partial \hat{P}^{n+j,k}_{y,z}}{\partial u^{\lambda}_{(s)}(y)} \partial_y^s \hat{Q}^{\lambda,n+i}_{y,x}
+ \frac{\partial \hat{Q}^{n+j,k}_{y,z}}{\partial u^{\lambda}_{(s)}(y)} \partial_y^s \hat{P}^{\lambda,n+i}_{y,x}=\\
&&\frac{\partial \hat{P}^{n+i,n+j}_{x,y}}{\partial u^{n+l}_{(s)}(x)} \partial_x^s Q^{lk}_{x,z}
+ \frac{\partial \hat{Q}^{n+i,n+j}_{x,y}}{\partial u^{n+l}_{(s)}(x)} \partial_x^s P^{lk}_{x,z}
+\frac{\partial P^{ki}_{z,x}}{\partial u^{l}_{(s)}(z)} \partial_z^s Q^{lj}_{z,y}+\\
&&+ \frac{\partial Q^{ki}_{z,x}}{\partial u^{l}_{(s)}(z)} \partial_z^s P^{lj}_{z,y}
+\frac{\partial P^{jk}_{y,z}}{\partial u^{l}_{(s)}(y)} \partial_y^s Q^{li}_{y,x}
+ \frac{\partial Q^{jk}_{y,z}}{\partial u^{l}_{(s)}(y)} \partial_y^s P^{li}_{y,x}.
\end{eqnarray*}
Using the identities 
\beq\label{ident}
\frac{\partial \hat{P}^{n+i,n+j}_{x,y}}{\partial u^{n+l}_{(s)}(x)}=\frac{\partial P^{ij}_{x,y}}{\partial u^{l}_{(s)}(x)},\qquad 
\frac{\partial \hat{Q}^{n+i,n+j}_{x,y}}{\partial u^{n+l}_{(s)}(x)}=\frac{\partial Q^{ij}_{x,y}}{\partial u^{l}_{(s)}(x)},
\eeq
 we finally get
$$ [\hat{P},\hat{Q}]^{n+i,n+j,k}_{x,y,z}=[P,Q]^{ijk}_{x,y,z}=0.$$
 Similarly one can prove the vanishing of the Schouten bracket for
 $\alpha=i,\beta=n+j,\gamma=n+k$ and $\alpha=n+i,\beta=j,\gamma=n+k$.
\item For $\alpha=n+i,\beta=n+j,\gamma=n+k$:
\begin{eqnarray*}
 [\hat{P},\hat{Q}]^{\alpha\beta\gamma}_{x,y,z} &=&
\frac{\partial \hat{P}^{n+i,n+j}_{x,y}}{\partial u^{\lambda}_{(s)}(x)} \partial_x^s \hat{Q}^{\lambda,n+k}_{x,z}
+ \frac{\partial \hat{Q}^{n+i,n+j}_{x,y}}{\partial u^{\lambda}_{(s)}(x)} \partial_x^s \hat{P}^{\lambda,n+k}_{x,z}
+\frac{\partial \hat{P}^{n+k,n+i}_{z,x}}{\partial u^{\lambda}_{(s)}(z)} \partial_z^s \hat{Q}^{\lambda,n+j}_{z,y}+\\
&&+ \frac{\partial \hat{Q}^{n+k,n+i}_{z,x}}{\partial u^{\lambda}_{(s)}(z)} \partial_z^s \hat{P}^{\lambda,n+j}_{z,y}
+\frac{\partial \hat{P}^{n+j,n+k}_{y,z}}{\partial u^{\lambda}_{(s)}(y)} \partial_y^s \hat{Q}^{\lambda,n+i}_{y,x}
+ \frac{\partial \hat{Q}^{n+j,n+k}_{y,z}}{\partial u^{\lambda}_{(s)}(y)} \partial_y^s \hat{P}^{\lambda,n+i}_{y,x}=\\
&&\frac{\partial \hat{P}^{n+i,n+j}_{x,y}}{\partial u^{l}_{(s)}(x)} \partial_x^s Q^{lk}_{x,z}
+ \frac{\partial \hat{Q}^{n+i,n+j}_{x,y}}{\partial u^{l}_{(s)}(x)} \partial_x^s P^{lk}_{x,z}
+\frac{\partial \hat{P}^{n+k,n+i}_{z,x}}{\partial u^{l}_{(s)}(z)} \partial_z^s Q^{lj}_{z,y}+\\
&&+ \frac{\partial \hat{Q}^{n+k,n+i}_{z,x}}{\partial u^{l}_{(s)}(z)} \partial_z^s P^{lj}_{z,y}
+\frac{\partial \hat{P}^{n+j,n+k}_{y,z}}{\partial u^{l}_{(s)}(y)} \partial_y^s Q^{li}_{y,x}
+ \frac{\partial \hat{Q}^{n+j,n+k}_{y,z}}{\partial u^{l}_{(s)}(y)} \partial_y^s P^{li}_{y,x}+\\
&&\frac{\partial \hat{P}^{n+i,n+j}_{x,y}}{\partial u^{n+l}_{(s)}(x)} \partial_x^s \hat{Q}^{n+l,n+k}_{x,z}
+ \frac{\partial \hat{Q}^{n+i,n+j}_{x,y}}{\partial u^{n+l}_{(s)}(x)} \partial_x^s \hat{P}^{n+l,n+k}_{x,z}
+\frac{\partial \hat{P}^{n+k,n+i}_{z,x}}{\partial u^{n+l}_{(s)}(z)} \partial_z^s \hat{Q}^{n+l,n+j}_{z,y}+\\
&&+ \frac{\partial \hat{Q}^{n+k,n+i}_{z,x}}{\partial u^{n+l}_{(s)}(z)} \partial_z^s \hat{P}^{n+l,n+j}_{z,y}
+\frac{\partial \hat{P}^{n+j,n+k}_{y,z}}{\partial u^{n+l}_{(s)}(y)} \partial_y^s \hat{Q}^{n+l,n+i}_{y,x}
+ \frac{\partial \hat{Q}^{n+j,n+k}_{y,z}}{\partial u^{n+l}_{(s)}(y)} \partial_y^s \hat{P}^{n+l,n+i}_{y,x}
\end{eqnarray*}
Using the identities \eqref{ident} and the fact that the operator $\d_x$ and the operator $\sum_{k,t}u^{n+k}_{(t)}\frac{\d}{\d u^k_{(t)}(x)}$ commute, as it is immediate to check using the identity 
$$\d_x\frac{\d}{\d u^k_{(t)}(x)}=\frac{\d}{\d u^k_{(t)}(x)}\d_x-\frac{\d}{\d u^k_{(t-1)}(x)},$$
we obtain
\begin{eqnarray*}
&& [\hat{P},\hat{Q}]^{n+i,n+j,n+k}_{x,y,z} =\\
&&\sum_{k,t}\left(u^{n+k}_{(t)}(x)\frac{\d}{\d u^k_{(t)}(x)}+u^{n+k}_{(t)}(y)\frac{\d}{\d u^k_{(t)}(y)}+
 u^{n+k}_{(t)}(z)\frac{\d}{\d u^k_{(t)}(z)}\right)[P,Q]^{ijk}_{x,y,z}=0
\end{eqnarray*}
since $[P,Q]^{ijk}_{x,y,z}=0$  by hypothesis.
\end{itemize}
\end{proof}

\begin{rmk}
Notice that the lift of bivectors \eqref{liftP} is obtained from \eqref{liftbiv} just replacing $\sum_j v^j\f{\d}{\d u^j}$
 with $\sum_{j,k} v^j_{(k)}\f{\d}{\d u^j_{(k)}}$. 
The lift of general tensor fields can be defined in exactly the same way. For instance the lift of functionals, one forms  and vector fields can be defined as
\begin{equation*}
\hat F = \int v^j \frac{\delta F}{\delta u^j}\,dx, \qquad
\hat\alpha = \sum_{j,k}v^j_{(k)}\frac{\partial \alpha_i}{\partial u^j_{(k)}} \delta u^i + \alpha_i \delta v^i , \qquad
\hat X = X^i \frac{\partial}{\partial u^i} + \sum_{j,k}v^j_{(k)} \frac{\partial X^i}{\partial u^j_{(k)}}\frac{\partial}{\partial v^i}.
\end{equation*}
As in the finite dimensional case the lift  $\hat K$ of higher order tensor fields $K$ can be defined requiring that any contraction with a vector field $X$ or a one-form $\alpha$ on the loop space lifts to the contraction of $\hat K$ with $\hat X$ or $\hat \alpha$. As a consequence
 of this general rule the lift of a Hamiltonian vector field coincides with the Hamiltonian vector field obtained lifting the Poisson bivector and the Hamiltonian functional:  $\widehat{P\delta H}=\hat{P}\delta\hat{H}$. In the Appendix C we check this fact. 
 Finally we point out that the linearization of Hamiltonian objects mentioned above is nothing but the
  Yano-Kobayashi complete lift  in the infinite-dimensional setting.
\end{rmk}

\subsection{Lift of deformations}
We have seen in the introduction that deformations of $n$-component semisimple Poisson pencils of hydrodynamic type depend on $n$ arbitrary functions of a single variable. Applying the previous construction to this case we get a $n$-parameter family
 of deformations of the lifted Poisson pencil of hydrodynamic type. Due to obvious identity
$${\rm det}\hat\pi^{ij}=\pm\left({\rm det}\pi^{ij}\right)^2$$
any invariant coefficient comes with double multiplicity. This example suggests that deformations of non semisimple structures corresponding to those invariant parameters are unobstructed.

\subsubsection{Example}
In the scalar case all second order deformations are given by \cite{L}
\begin{equation}\label{scalar_1}
\Pi_{\lambda}=2u\d_x+u_x -\lambda \d_x+ \epsilon^2 (2 s \partial_x^3 + 3 s_x \partial_x^2 + s_{xx} \partial_x) +\mathcal{O}(\epsilon^3),
\end{equation}
where $c$ is a constant and $s(u)$ is an arbitrary function of $u$. Applying the lift we obtain a one-parameter family of deformations of a 2-component Poisson pencil of hydrodynamic type.

Here we want to show this lift is equivalent, up to Miura transformations, to the case N3 (that is, N6 with $\kappa=1$) with $F_1(u^1)=\eta^{22}=0$. Let us consider second order deformations of $N3$ obtained in Theorem \ref{thm_quasi}, and set $\eta^{22}=0$ (otherwise
 $g_1$ would not be the lift of the scalar constant metric $\eta=1$), $\eta^{12}=1$, $F_1(u^1)=0$ and $F_2(u^1)=-\frac{f(u^1)}{u^1}$. 

The Miura transformation
$$
u^i \to \exp(-\epsilon Y) u^i, \quad i=1,2,
$$
 generated by the vector field $Y$ of components
$$
Y^1=\frac{f'}{3} u^1_{xx} + \frac{f''}{3} (u^1_x)^2,\quad
Y^2=  -\frac{f''}{3}   u^1_x u^2_x  -\frac{f'}{3}  u^2_{xx},
$$
 reduces the pencil to the form 
\begin{equation*}\label{pencil_lift}
\hat\Pi_{\lambda}=
\begin{pmatrix}
0 & \Pi_{\lambda}\\
\Pi_{\lambda} & \sum_{t} v_{(t)} \frac{\partial  \Pi_{\lambda}}{\partial u_{(t)}}
\end{pmatrix},
\end{equation*}
where $\Pi_{\lambda}$ coincides with \eqref{scalar_1}  setting $u^1=u$ and $f(u^1)=s(u)$.


\appendix

\section{Appendix: Computations of deformations}\label{appA}
In this appendix we give a sketch of the proof of Theorem \ref{thm_quasi}, providing the computations of deformations in detail. First of all we observe that the pencil $\Pi^{ij}_{\lambda}$ can be always reduced to the form
\beq\label{pencil_appendix}
\Pi_{\lambda}=\omega_{\lambda}+\epsilon Q_1+\epsilon^2 Q_2+\epsilon^3Q_3+...\\
\eeq
by a suitable Miura transformation. The proof is due to Getzler and it is based on the study of Poisson-Lichnerowicz cohomology groups
 \cite{G} (an alternative proof can be found in \cite{DMS,DZ,LZ2}) :
\begin{equation*}\label{eq33bis.eq}
H^j(\mathcal{L}(\mathbb{R}^n), \omega):=\frac{\ker\{d_{\omega}: \Lambda^j_{\text{loc}}\rightarrow \Lambda^{j+1}_{\text{loc}}\}}
{\mathrm{im}\{d_{\omega}: \Lambda^{j-1}_{\text{loc}} \rightarrow \Lambda^j_{\text{loc}}\}}
\end{equation*}
for Poisson bivector of hydrodynamic type $\omega$. The differential $d_{\omega}$ is defined as
$$d_{\omega}:=[\omega,\, \cdot \ ]$$
where the square bracket is the Schouten bracket. Getzler also proved the triviality of cohomology for any positive integer $j$ (in particular the triviality
 of deformations is related to the vanishing of the second cohomology group).
 
\subsection{First order deformations} The pencil \eqref{pencil_appendix} is a deformation of $\omega_{\lambda}$ if it satisfies the Jacobi identity for every $\lambda$, that is
$$
[Q,Q]=[\omega_1,Q]=0.
$$
where $Q=\omega_2+\epsilon Q_1+\epsilon^2 Q_2+\epsilon^3Q_3+...$. This implies in particular
$$[\omega_2, Q_1]=[\omega_1,Q_1]=0.$$
In other words $Q_1$ is a cocycle for both the differentials $d_{\omega_1}$ and $d_{\omega2}$. Using the triviality of
 $H^1(\mathcal{L}(\mathbb{R}^n), \omega)$ and $H^2(\mathcal{L}(\mathbb{R}^n), \omega)$ we obtain $Q_1=d_{\omega_2}X=
{\rm Lie}_X\omega_2$ for a suitable vector field of degree 1 
$$
X^i=X^i_1(u^1,u^2)u^1_x + X^i_2(u^1,u^2)u^2_x, \quad i=1,2,
$$
satisfying
$$
d_{\omega_1} d_{\omega_2} X=0.
$$
It is not difficult to prove that  among the solutions of the above equation those corresponding to trivial deformations have the form
 $X=\omega_1\delta H + \omega_2\delta K$, where the hamiltonian denisties are differential polynonials of degree 0, namely $H=\int h(u^1,u^2)\,dx$ and $K=\int k(u^1,u^2)\,dx$. It turns out that in our case all first order defomations are trivial. All details below, case by case.

\subsubsection{T3. First order deformations}
Let us point out that in this case the vanishing of the coefficient $\eta^{22}$ implies that the affinor $L^i_j$ assumes diagonal form, while for $\eta^{22}\neq0$ it corresponds to one $2\times 2$ Jordan block case (as well as all other cases we are dealing with).
Recall that we are assuming $\eta^{12}\neq0$. The vector field $X$ solution of $d_{\omega_1} d_{\omega_2}X=0$ is given in components by
\begin{gather*}
X^1_1=X^1_1, \quad X^1_2=X^1_2, \quad
X^2_1= \frac{\eta^{22}}{\eta^{12}} \partial_1 (X^1_1 u^1)+\int \left(\partial_1 X^1_1 -\frac{\eta^{22} u^1 }{\eta^{12}}\partial_1^2 X^2_1\right) \ du^2+F,\\
X^2_2=X^1_1+\frac{\eta^{22}}{\eta^{12}} \left( X^1_2+ u^1(\partial_2 X^1_1-\partial_1 X^1_2 ) \right),
\end{gather*}
where $F=F(u^1)$.
The components $Y^i$ of the vector field $Y=\omega_1\delta H +\omega_2 \delta K$ are given by $Y^i=Y^i_1 u^1_x+Y^i_2 u^2_x$, where
$$
\begin{array}{ll}
Y^1_1 = \eta^{12} \partial_1\partial_2 H -u^1\partial_1\partial_2 K, &
Y^1_2 = \eta^{12} \partial_2^2 H -u^1\partial_2^2 K,\\
Y^2_1 = \partial_1 (\eta^{12} \partial_1 H +\eta^{22} \partial_2 H- u^1\partial_1 K), &
Y^2_2 = \partial_2 (\eta^{12} \partial_1 H +\eta^{22} \partial_2 H- u^1\partial_1 K),
\end{array}
$$
Choosing $H$ and $K$ such that $X^1_i=Y^1_i$ for $i=1,2$, one can easily see that
$$
X^2_1=Y^2_1+F, \quad X^2_2=Y^2_2.
$$
Finally, the function $F$ can be removed using the vector field $Y$ such that $H=0$ and $K$ such that $-\partial_1(u^1\partial_1 K)=F$. Thus, first order deformations are trivial.

\subsubsection{N5.  First order deformations}
Here $\eta^{12}\neq0$. Solving $d_{\omega_1} d_{\omega_2}X=0$ for $\deg(X)=1$ we get
\begin{gather*}
X^1_2=  \partial_1 F, \quad  X^2_1= \partial_2 F,\\
X^2_1= \int \left( \partial_1 X^2_2 +\frac{\eta^{22} \partial_2 F + \eta^{12} \partial_1 F - \eta^{12} X^2_2}{2 \eta^{12} (u^1+u^2)- \eta^{22} u^1} \right)  \ du^2 + G,
\quad X^2_2=X^2_2,
\end{gather*}
where $F=F(u^1,u^2)$ and $G=G(u^1)$.

The components $Y^i$ of the vector field $Y=\omega_1\delta H + \omega_2 \delta K$ are given by
\begin{eqnarray*}
Y^1_1&=& \partial_1 (\eta^{12} \partial_2 H+ u^1 \partial_2 K),\\
Y^1_2&=& \partial_2 (\eta^{12} \partial_2 H+ u^1 \partial_2 K),\\
Y^2_1&=& \eta^{12} \partial_1^2 H + \eta^{22} \partial_1 \partial_2 H + u^1 \partial_1^2 K + 2 (u^1+u^2) \partial_1 \partial_2 K + \partial_2 K,\\
Y^2_2&=& \eta^{12} \partial_1 \partial_2 H + \eta^{22} \partial_2^2 H + u^1 \partial_1 \partial_2 K + 2 (u^1+u^2) \partial_2^2 K +  \partial_2 K.
\end{eqnarray*}
Choosing $H$ and $K$ such that $F=\eta^{12} \partial_2 H+ u^1 \partial_2 K, \ X^2_2=Y^2_2$,
we obtain
$$
X^1_1=X^1_2=X^2_2=0, \quad X^2_1=G.
$$
Taking $H=0$ and $K$ such that $\partial_2 K=0$ and $ u^1 \partial_1^2 K=G$, we can also remove $G$. Thus, deformations of degree 1 are trivial.

\subsubsection{N3, N4 and N6.  First order deformations}
This case is more involved. Let us assume $\kappa\neq-1$, otherwise the metric $g_2$ would be degenerate. Here $\eta^{12}\neq0$.


\medskip
\noindent
Imposing $d_{\omega_1} d_{\omega_2}X=0$ for $\deg(X)=1$ we get
\begin{gather*}
X^1_1=\partial_1 G+R, \quad X^1_2=\partial_2 G, \quad X^2_1= \partial_1 F, \quad X^2_2=\partial_2 F,\\
R=\theta^{\frac{\kappa}{2}}
\int  \kappa \left( \eta^{22} \partial_2 G +\eta^{12} \partial_1 G -\eta^{12} \partial_2 F
 \right) \theta^{-1-\frac{\kappa}{2}} \ du^2
+\theta^{\frac{\kappa}{2}} S,
\end{gather*}
where $F=F(u^1,u^2), G=G(u^1,u^2), S=S(u^1)$ and $\theta=2 \eta^{12} u^2 - (1+\kappa) \eta^{22} u^1$.
The components $Y^i$ of the vector field $Y=\omega_1\delta H + \omega_2 \delta K$ are given by
\begin{eqnarray*}
Y^1_1&=&
\partial_1 (\eta^{12} \partial_2 H+(1+\kappa) u^1 \partial_2 K) - \kappa \partial_2 K,\\
Y^1_2&=&
\partial_2 (\eta^{12} \partial_2 H+(1+\kappa) u^1 \partial_2 K),\\
Y^2_1
&=& \partial_1 (\eta^{12} \partial_1 H + \eta^{22} \partial_2 H + 2 u^2 \partial_2 K + (1+\kappa) u^1 \partial_1 K- K),\\
Y^2_2
&=& \partial_2 (\eta^{12} \partial_1 H + \eta^{22}  \partial_2 H + 2 u^2 \partial_2 K + (1+\kappa) u^1 \partial_1 K- K).
\end{eqnarray*}
Choosing $H$ and $K$ such that
$$
\eta^{12} \partial_2 H+(1+\kappa) u^1 \partial_2 K=F,
$$
$$
\eta^{12} \partial_1 H + \eta^{22}  \partial_2 H + 2 u^2 \partial_2 K + (1+\kappa) u^1 \partial_1 K- K=G,
$$
we get
$$
X^1_1=\theta^{\frac{\kappa}{2}} S, \quad X^1_2=X^2_1=X^2_2=0.
$$
Finally, taking a suitable choose of $H$ and $K$, we can also remove $S$. In particular, we have
\begin{itemize}
\item for $\kappa\neq 0,-2$
$$
H=\frac{(1+\kappa)u^1 \theta^{1+\frac{\kappa}{2}} S }{(\eta^{12})^2 \kappa (\kappa+2)},
\quad
K=-\frac{ \theta^{1+\frac{\kappa}{2}} S} {\eta^{12} \kappa (\kappa+2)},
$$
\item for $\kappa=0$
$$
H=\frac{(2 \eta^{12} u^2 -\eta^{22} u^1)(\log(2 \eta^{12} u^2 -\eta^{22} u^1) -1) u^1 S }{4 (\eta^{12})^2},
$$
$$
K=  \frac{u^2 \int S \ du^1}{ u^1}    -  \frac{(2 \eta^{12} u^2 -\eta^{22} u^1)(\log(2 \eta^{12} u^2 -\eta^{22} u^1) -1) S   }{4 \eta^{12}}  -\int \!\!\! \int \frac{\eta^{22} \partial_1 (u^1 S)}{2 \eta^{12} u^1} \ du^1 \ du^1 
$$
\item for  $\kappa=-2$
$$
H=\frac{\log (2 \eta^{12}u^2 +\eta^{22} u^1) u^1 S  }{4 (\eta^{12})^2},
\quad
K=\frac{\log (2 \eta^{12}u^2 +\eta^{22} u^1) S  }{4 \eta^{12}}
+
\frac{\int S \ du^1}{2 \eta^{12} u^1}.
$$
\end{itemize}
Thus, first-order deformations are trivial.

\subsection{Second order deformations}\label{app_2}
We have seen that in all cases $Q_1$ can be eliminated by a Miura transformation.
 For this reason, without loss of generality, we can assume the pencil has the 
 form
$$
\Pi_{\lambda}=\omega_{\lambda}+\epsilon^2 Q_2+\epsilon^3Q_3+...\\
$$
Using the same arguments applied to first order deformations we can easily prove that 
\begin{itemize}
\item general second order deformations  can be always written as $Q_2=d_{\omega_2}X$ for a suitable vector field of degree 2 
$$
X^i=X^i_1(u^1,u^2)u^1_{xx} + X^i_2(u^1,u^2) (u^1_x)^2+X^i_3(u^1,u^2) u^1_x u^2_x + X^i_4(u^1,u^2) (u^2_x)^2+X^i_5(u^1,u^2) u^2_{xx},  
$$
satisfying
$$
d_{\omega_1} d_{\omega_2} X=0.
$$
\item trivial second order deformations are those corresponding to vector fields of the form $\omega_1\delta H + \omega_2\delta K$, where the hamiltonian
 functionals $H$ and $K$ have hamiltonian densities of degree 1, namely
$$
H=\int\left[h_1(u^1,u^2) u^1_x+h_2(u^1,u^2) u^2_x\right] dx, \quad K=\int \left[k_1(u^1,u^2) u^1_x+k_2(u^1,u^2) u^2_x\right]dx.
$$
\end{itemize}

Before to go into the details of the computations, let us observe that 
$$
\delta H =
\begin{pmatrix}
\dfrac{\delta H}{\delta u^1}\\[10pt]
\dfrac{\delta H}{\delta u^2}
\end{pmatrix}
=
\begin{pmatrix}
\dfrac{\partial H}{\partial u^1}-\dfrac{d}{dx} \dfrac{\partial H}{\partial u^1_x}\\[10pt]
\dfrac{\partial H}{\partial u^2}-\dfrac{d}{dx} \dfrac{\partial H}{\partial u^2_x}
\end{pmatrix}
=\begin{pmatrix}
R(u^1,u^2) u^2_x\\
-R(u^1,u^2) u^1_x
\end{pmatrix},
$$
for $R(u^1,u^2)=\partial_1 H_2(u^1,u^2)-\partial_2 H_1(u^1,u^2)$ and similarly
$$
\delta K =
\begin{pmatrix}
\dfrac{\delta K}{\delta u^1}\\[10pt]
\dfrac{\delta K}{\delta u^2}
\end{pmatrix}
=
\begin{pmatrix}
\dfrac{\partial K}{\partial u^1}-\dfrac{d}{dx} \dfrac{\partial K}{\partial u^1_x}\\[10pt]
\dfrac{\partial K}{\partial u^2}-\dfrac{d}{dx} \dfrac{\partial K}{\partial u^2_x}
\end{pmatrix}=
\begin{pmatrix}
S(u^1,u^2) u^2_x\\
-S(u^1,u^2) u^1_x
\end{pmatrix},
$$
for $S(u^1,u^2)=\partial_1 K_2(u^1,u^2)-\partial_2 K_1(u^1,u^2)$. 
\newline
\newline
 We now proceed as follows:
\begin{enumerate}
\item We solve the equation $d_{\omega_1} d_{\omega_2} X=0$, which leads to a solution depending on two functions of two variables and at most four functions of one variable.
\item Up to Miura-type transformations, that is, using the freedom given by the functions $R$ and $S$, we can eliminate the two functions of two variables.
\item In the cases T3, N3, N5 and N6 with $\kappa \neq -1, -2$, we still use a Miura-type transformation to reduce the deformation to a more suitable form (see step 4). 
\item The last step is quite straightforward. We firstly take a generic Hamiltonian vector field of the form $X=\omega_1\delta H-\omega_2\delta K$ with
$$
H=\int \sum_{i,j} \left( h_{ij} u^i_x \log u^j_x  \right) \ dx,\qquad K=\int \sum_{i,j} \left( k_{ij} u^i_x \log u^j_x  \right) \ dx,
$$
where the coefficients $h_{ij}$ and $k_{ij}$ are arbitrary functions of $(u^1,u^2)$. Then, comparing $X$ with the vector field obtained above (step 3), we get the values of $h_{ij}$ and $k_{ij}$ which correspond to the final expression written in Theorem \ref{thm_quasi}. 
\end{enumerate}

Let us discuss in detail each case. In what follows, all the functions $X^i_j, R, S$, $i=1,2$, $j=1,\ldots,5$, will depend on $(u^1,u^2)$, unless stated  otherwise.

\subsubsection{T3. Second order deformations}
Let us assume $\eta^{22} \neq 0$. The solution of  $d_{\omega_1} d_{\omega_2}X=0$ for $\deg(X)=2$ is given by
\begin{eqnarray*}
X^1_1&=&X^1_1,\\
X^1_2&=&X^1_2,\\
X^1_3&=& \frac{2}{3} \partial_2 X^1_1-\frac{1}{3} \partial_2 X^2_5,\\
X^1_4&=& 0, \\
X^1_5&=& 0,\\
X^2_1&=&\frac{\eta^{22} u^1}{\eta^{12}}\left(X^1_2-\frac{4}{3} \partial_1X^1_1 \right)-\frac{\eta^{22}}{3\eta^{12}}\left( \partial_1 (u^1 X^2_5)+2 X^2_5\right) +F_1,\\
X^2_2&=& \partial_1 X^2_1,\\
X^2_3&=& \partial_2 X^2_1+\partial_1 X^2_5,\\
X^2_4&=&\partial_2  X^2_5,\\
X^2_5&= &F_2 e^{\frac{-\eta^{12} u^2}{\eta^{22} u^1}}-X^1_1.
\end{eqnarray*}
where $F_1,F_2$ depend on $u^1$. The components $Y^i$ of the vector field $Y=\omega_1\delta H + \omega_2\delta K$ are
\begin{eqnarray*}
Y^1_1&=& -\eta^{12} R + u^1 S,\\
Y^1_2&=&-\eta^{12} \partial_1 R +u^1 \partial_1 S,\\
Y^1_3&=&-\eta^{12} \partial_2 R +u^1 \partial_2 S,\\
Y^1_4&=&0,\\
Y^1_5&=&0,\\
Y^2_1&=& -\eta^{22} R, \\
Y^2_2&=& -\eta^{22} \partial_1 R,\\
Y^2_3&=&\eta^{12} \partial_1 R -\eta^{22} \partial_2 R -u^1 \partial_1 S - S,\\
Y^2_4&=&\eta^{12} \partial_2 R -u^1 \partial_2 S,\\
Y^2_5&=& \eta^{12} R - u^1 S.
\end{eqnarray*}
Choosing $R$ and $S$ such that $X^1_i=Y^1_i$ for $i=1,2$, we finally obtain
\begin{eqnarray*}
X^1_1&=&0\\
X^1_2&=&0\\
X^1_3&=&-\frac{1}{3} \partial_2 X^2_5,\\
X^1_4&=& 0, \\
X^1_5&=& 0,\\
X^2_1&=&-\frac{\eta^{22}}{3\eta^{12}}\left( \partial_1 (u^1 X^2_5)+2 X^2_5\right)+F_1,\\
X^2_2&=& \partial_1 X^2_1,\\
X^2_3&=& \partial_2 X^2_1+\partial_1 X^2_5,\\
X^2_4&=&\partial_2  X^2_5,\\
X^2_5&= &F_2 e^{\frac{-\eta^{12} u^2}{\eta^{22} u^1}}.
\end{eqnarray*}
Thus, these coefficients depend on two functions $F_1$, $F_2$ in the variable $u^1$.

In the case $\eta^{22}=0$, the computation is easier.
The condition $d_{\omega_1} d_{\omega_2}X=0$ implies
\begin{eqnarray*}
X^1_1&=&X^1_1\\
X^1_2&=&X^1_2\\
X^1_3&=& \partial_2 X^1_1,\\
X^1_4&=& 0, \\
X^1_5&=& 0,\\
X^2_1&=& F,\\
X^2_2&=&  \partial_1 F,\\
X^2_3&=& -\partial_1 X^1_1,\\
X^2_4&=&-\partial_2  X^1_1,\\
X^2_5&= &-X^1_1.
\end{eqnarray*}
where $F$ depends on $u^1$. Also in this case the freedom in $R$ and $S$ allows us to reduce $X^1_1$ and $X^1_2$ to zero, obtaining

$$
X^1=0,
\quad
X^2=F u^1_{xx}+\partial_1 F (u^1_x)^2 = (F u^1_x)_x
$$
The second component of the vector field can be written as
$$
X_2=\d_x^2 \int F \ du^1,
$$
and setting $ f= Fu^1$ yields
$$
Q_2=
\begin{pmatrix}
0 & 0\\
0& f_{xx} \delta' + 3 f_x \delta'' +2 f \delta'''
\end{pmatrix}.
$$

Finally, in order to get the form we need to compute $h_{ij}$ (step 3), 
we perform the  canonical Miura transformation generated by the local Hamiltonian 
$$
H=-\int_{S^1}\left(\frac{\eta^{22} (u^1)^2F_2'}{3 (\eta^{12})^2} +\frac{u^2 F_2}{3 \eta^{12}}\right) e^{-\frac{\eta^{12} u^2}{\eta^{22} u^1}} \  u^1_x\,dx.
$$

\begin{rmk}
Let us point out that this solution can be obtained from the general case in the limit $\eta^{22} \rightarrow 0$. 
\end{rmk}


\subsubsection{N5. Second order deformations}
The condition $d_{\omega_1} d_{\omega_2}X=0$ for $\deg(X)=2$ implies
\begin{eqnarray*}
X^1_1&=& X^1_1,\\
X^1_2&=& \partial_1 X^1_1,\\
X^1_3&=& \partial_2 X^1_1,\\
X^1_4&=& 0,\\
X^1_5&=& 0,\\
X^2_1&=& X^2_1,\\
X^2_2&=& \partial_1 X^2_1 + \frac{2}{3} \theta^{1/2} \partial_1 F_2+ \frac{5 \eta^{12} - 2 \eta^{22}}{3} \theta^{3/2} F_2+ \theta (\eta^{22}  X^1_1- \eta^{12}  X^2_1+ F_1),\\
X^2_3&=& \partial_2 X^2_1 -\partial_1 X^1_1 + \theta^{1/2} \partial_1 F_2-\frac{4\eta^{12} - 3 \eta^{22}}{6}\theta^{3/2} F_2,\\
X^2_4&=& -\eta^{12} \theta^{3/2} F_2-\partial_2 X^1_1,\\
X^2_5&=& \theta^{1/2} F_2 - X^1_1,
\end{eqnarray*}
where $F_i$, for $i=1,2$, are functions depending on $u^1$ and $\theta=(2 \eta^{12} (u^1+u^2) -\eta^{22} u^1)^{-1}$.
The components $Y^i$ of the vector field $Y=\omega_1\delta H +\omega_2\delta K$ are
\begin{eqnarray*}
Y^1_1&=& -(\eta^{12} R + u^1 S),\\
Y^1_2&=& -\partial_1(\eta^{12} R + u^1 S),\\
Y^1_3&=& -\partial_2(\eta^{12} R + u^1 S),\\
Y^1_4&=& 0,\\
Y^1_5&=& 0,\\
Y^2_1&=& - (\eta^{22} R + 2(u^1+u^2)S),\\
Y^2_2&=&- (\eta^{22} \partial_1 R + 2 (u^1+u^2)\partial_1 S +S),\\
Y^2_3&=&
\partial_1 (\eta^{12} R + u^1 S) - \partial_2(\eta^{22} R + 2(u^1+u^2)S),\\
Y^2_4&=& \partial_2 (\eta^{12} R + u^1 S),\\
Y^2_5&=& \eta^{12} R + u^1 S.
\end{eqnarray*}
Choosing $R, S$ such that $X^i_1=Y^i_1$ for $i=1,2$, we can reduce $X^1$ to zero and the coefficients of $X^2$ respectively to
\begin{eqnarray*}
X^2_1&=& 0,\\
X^2_2&=& \frac{2}{3} \partial_1 (\theta^{1/2} F_2) -\frac{7}{3}\partial_2(\theta^{1/2} F_2)-\eta^{22} \theta^{3/2} F_2+ \theta F_1,\\
X^2_3&=& \partial_1 (\theta^{1/2} F_2)-\frac{1}{3}\partial_2(\theta^{1/2} F_2),\\
X^2_4&=& \partial_2 (\theta^{1/2} F_2),\\
X^2_5&=& \theta^{1/2} F_2.
\end{eqnarray*}
Thus, the deformations of degree 2 depend on two functions of $u^1$.

To reduce the deformation in the form written in Theorem \ref{thm_quasi} (step 3) we perform the canonical Miura transformation generated by
$$
H= \int_{S^1}\frac{u^1}{(\eta^{12})^2}\left(\frac{(3 \eta^{22} -8\eta^{12})\theta^{1/2}F_2}{6} +\theta^{-1/2}F_2' +\frac{ \log (\theta^{-1}) F_1 }{2}\right)u^1_x\,dx.
$$


\subsubsection{N3, N4 and N6. Second order deformations}

The vector fields $Y=P\delta H + Q \delta K$ are given by
\begin{eqnarray*}
Y^1_1&=& -(\eta^{12} R + (1+\kappa) u^1 S),\\
Y^1_2&=& -\partial_1 (\eta^{12}  R + (1+\kappa) u^1 S) +\kappa S,\\
Y^1_3&=& -\partial_2 (\eta^{12} R + (1+\kappa) u^1 S),\\
Y^1_4&=& 0,\\
Y^1_5&=& 0,\\
Y^2_1&=& -(\eta^{22} R + 2 u^2 S),\\
Y^2_2&=& -\partial_1 (\eta^{22}  R + 2 u^2 S),\\
Y^2_3
&=& \partial_1 (\eta^{12}  R + (1+\kappa) u^1 S)-\partial_2 (\eta^{22}  R + 2 u^2 S),\\
Y^2_4&=& \partial_2 (\eta^{12} R + (1+\kappa) u^1 S),\\
Y^2_5&=& \eta^{12} R + (1+\kappa) u^1 S.
\end{eqnarray*}

In studynig the solutions of the equation $d_{\omega_1} d_{\omega_2}X=0$ we have to distinguish 3 cases: $\kappa=0,\,\kappa=-2,\,\kappa\ne 0,2$. This is due to the fact that conditions coming from this equation include the following:
$$
\kappa (\kappa+2) X^1_5(u^1,u^2)=0.
$$

\medskip
\noindent
\emph{Case 1:} $\kappa=0$.
The condition $d_{\omega_1} d_{\omega_2}X=0$ for $\deg(X)=2$ leads to
\begin{eqnarray*}
X^1_1&=& X^1_1,\\
X^1_2&=& \partial_1 X^1_1 + \theta F_1, \\
X^1_3&=& \theta \partial_1 F_2- \eta^{22} \theta^{2} F_2 +\partial_2 X^1_1,\\
X^1_4&=& 2 \eta^{12} \theta^2 F_2, \\
X^1_5&=& \theta F_2, \\
X^2_1&=& X^2_1,\\
X^2_2&=&\partial_1 X^2_1 + \theta F_3, \\
X^2_3&=&  - \frac{\partial_1^2 F_2}{\eta^{12}}+\theta^{\frac{1}{2}} \partial_1 F_4 - \frac{\eta^{22}}{2} \theta^{\frac{3}{2}} F_4- \partial_1 X^1_1 + \partial_2 X^2_1,\\
X^2_4&=&  \eta^{12} \theta^{\frac{3}{2}} F_4 -\partial_2 X^1_1,\\
X^2_5&=& - \frac{\partial_1 F_2}{\eta^{12} } + \theta^{\frac{1}{2}} F_4- X^1_1,
\end{eqnarray*}
where $F_i$ for $i=1,\ldots,4$ are arbitrary functions depending on $u^1$, and $\theta=(\eta^{22}  u^1-2 \eta^{12}  u^2)^{-1}$.
Choosing $R$ and $S$ such that $X^i_1=Y^i_1$ for $i=1,2$, we can reduce both $X^i_1$, $i=1,2$, to zero, obtaining
\begin{eqnarray*}
X^1_1&=& 0,\\
X^1_2&=& \theta F_1, \\
X^1_3&=& \partial_1 (\theta F_2) \\
X^1_4&=& \partial_2 (\theta F_2), \\
X^1_5&=& \theta F_2, \\
X^2_1&=& 0,\\
X^2_2&=& \theta F_3, \\
X^2_3&=&  \partial_1\left(\theta^{\frac{1}{2}} F_4-\frac{\partial_1 F_2}{\eta^{12} }\right),\\
X^2_4&=&  \partial_2\left(\theta^{\frac{1}{2}} F_4-\frac{\partial_1 F_2}{\eta^{12} }\right),\\
X^2_5&=&  \theta^{\frac{1}{2}} F_4- \frac{\partial_1 F_2}{\eta^{12} }.
\end{eqnarray*}
In this case, the deformations of degree 2 depend on four functions on $u^1$.

\medskip
\noindent
\emph{Case 2:} $\kappa=-2$.
The condition $d_{\omega_1}d_{\omega_2}X=0$ for $\deg(X)=2$ implies
\begin{eqnarray*}
X^1_1&=& X^1_1,\\
X^1_2&=& \partial_1 X^1_1 + 2 \eta^{22}  \theta^{\frac{5}{2}} F_4 + \frac{4 (\eta^{22} )^2 \theta^4 F_2-2 \eta^{22}  \theta^3 \partial_1 F_2}{\eta^{12}}\\
&&    + 2 \eta^{12}  \theta X^2_1-2\eta^{22}  \theta X^1_1 + \theta F_1, \\
X^1_3&=& \partial_2 X^1_1 - \theta^3 \partial_1 F_2 + 3 \eta^{22}  \theta^4 F_2 + 2 \eta^{12}  \theta^{\frac{5}{2}} F_4,\\
X^1_4&=& -4 \eta^{12}  \theta^4 F_2,\\
X^1_5&=& \theta^3 F_2,\\
X^2_1&=& X^2_1,\\
X^2_2&=& \partial X^2_1 + F_3,\\
X^2_3&=& \partial_2 X^2_1 - \partial_1 X^1_1 + \frac{4 \eta^{22}  \theta^3 \partial_1 F_2- \theta^2 \partial_1^2 F_2- 6(\eta^{22} )^2  \theta^4 F_2}{\eta^{12}} \\
&& +\theta^{\frac{3}{2}} \partial_1 F_4-\frac{3}{2} \eta^{22}  \theta^{\frac{5}{2}} F_4,\\
X^2_4&=& 4 \theta^{3} \partial_1 F_2 -12 \eta^{22}  \theta^4 F_2-3 \eta^{12}  \theta^{\frac{5}{2}} F_4-\partial_2 X^1_1,\\
X^2_5&=& \frac{2 \eta^{22}  \theta^3 F_2 -\theta^2 \partial_1 F_2}{\eta^{12}} - X^1_1 + \theta^{\frac{3}{2}} F_4, 
\end{eqnarray*}
here $\theta=(2 \eta^{12}  u^2 + \eta^{22}  u^1)^{-1}$ and $F_i=F_i(u^1)$, for $i=1,\ldots,4$.
Choosing $R,S$ such that $X^i_1=Y^i_1$ for $i=1,2$, we can reduce $X^i_1$ to zero, obtaining
\begin{eqnarray*}
X^1_1&=& 0,\\
X^1_2&=& 2 \eta^{22} \theta \left(\theta^{\frac{3}{2}} F_4 -\frac{\partial_1 (\theta^2 F_2)}{\eta^{12}} \right)+ \theta F_1,\\
X^1_3&=& 2  \eta^{12} \theta^{\frac{5}{2}} F_4 - \partial_1 (\theta^3 F_2),\\
X^1_4&=& -4  \eta^{12} \theta^4 F_2,\\
X^1_5&=& \theta^3 F_2,\\
X^2_1&=& 0,\\
X^2_2&=& F_3,\\
X^2_3&=&\partial_1 ( \theta^{\frac{3}{2}} F_4)-\frac{\partial_1^2(\theta^2 F_2)}{\eta^{12}},\\
X^2_4&=& 4 \partial_1(\theta^3 F_2)+ \partial_2 ( \theta^{\frac{3}{2}} F_4),\\
X^2_5&=& \theta^{\frac{3}{2}} F_4 -\frac{ \partial_1(\theta^2 F_2)}{\eta^{12}}.
\end{eqnarray*}
Also in this case, the deformations depend on four functions on $u^1$.

\medskip
\noindent
\emph{Case 3:} $\kappa\neq 0, -1, -2$.
The condition $d_{\omega_1} d_{\omega_2}X=0$ for $\deg(X)=2$ implies
\begin{eqnarray*}
X^1_1&=& X^1_1,\\
X^1_2&=& \partial_1 X^1_1 + \frac{\kappa(\kappa+2)}{3(\kappa+1)^2}\theta^{\frac{\kappa-1}{2}} \partial_1 F_2 -\frac{\kappa(\kappa^2+7\kappa+4)\eta^{22}}{6(\kappa+1)} \theta^{\frac{\kappa-3}{2}} F_2 \\
&&+\theta^{-1} ( \kappa( \eta^{22} X^1_1 -\eta^{12} X^2_1) + F_1 ),\\
X^1_3&=&\partial_2 X^1_1-\frac{ \kappa(\kappa-1) \eta^{12}}{3(\kappa+1)}  \theta^{\frac{\kappa-3}{2}} F_2 ,\\
X^1_4&=& 0,\\
X^1_5&=& 0,\\
X^2_1&=& X^2_1,\\
X^2_2&=& \partial_2 X^2_1,\\
X^2_3&=& \partial_2 X^2_1 - \partial_1 X^1_1+ \theta^{\frac{\kappa-1}{2}} \partial_1 F_2 - \frac{1}{2} \eta^{22}(\kappa-1)(\kappa+1)\theta^{\frac{\kappa-3}{2}} F_2,\\
X^2_4&=&  (\kappa-1) \eta^{12} \theta^{\frac{\kappa-3}{2}} F_2-\partial_1 X^1_1,\\
X^2_5&=& \theta^{\frac{\kappa-1}{2}}F_2 - X^1_1,
\end{eqnarray*}
here $\theta= 2 \eta^{12} u^2 - (\kappa+1) \eta^{22} u^1$ and $F_i$ for $i=1,2$ are arbitrary functions depending on $u^1$.
Choosing $R,S$ such that $X^i_1=Y^i_1$ for $i=1,2$ we can remove $X^i_1$, obtaining
\begin{eqnarray*}
X^1_1&=& 0,\\
X^1_2&=& \frac{\kappa(\kappa+2)}{3(\kappa+1)^2}\theta^{\frac{\kappa-1}{2}} \partial_1 F_2  -\frac{\kappa(\kappa^2+7\kappa+4)\eta^{22}}{6(\kappa+1)} \theta^{\frac{\kappa-3}{2}} F_2 + \theta^{-1} F_1, \\
X^1_3&=& -\frac{ \kappa}{3(\kappa+1)}  \partial_2 ( \theta^{\frac{\kappa-1}{2}}F_2) ,\\
X^1_4&=& 0,\\
X^1_5&=& 0,\\
X^2_1&=& 0,\\
X^2_2&=& 0,\\
X^2_3&=& \partial_1 (\theta^{\frac{\kappa-1}{2}}F_2),\\
X^2_4&=& \partial_2 ( \theta^{\frac{\kappa-1}{2}}F_2),\\
X^2_5&=& \theta^{\frac{\kappa-1}{2}}F_2.
\end{eqnarray*}
In this last case, the deformations depend on two functions of $u^1$. The canonical Miura transformation reducing the pencil
 to the form described in the step 3 is generated by the Hamiltonian functional
\begin{gather*}
H=\int_{S^1}\left(-\frac{\theta^{\frac{\kappa -1}{2}}  (   4 \kappa \eta^{12} (\kappa-1) u^2 + \eta^{22} (\kappa+1) (2 \kappa^3 + 7 \kappa^2 +12 \kappa + 3) u^1 ) F_2   }{6 (\eta^{11})^2  (\kappa+1)^2(\kappa-1)}\right.\\
\left.+\frac{\theta^{\frac{\kappa +1}{2}}   (2 \kappa +3) u^1F_2'}{3 (\eta^{11})^2  (\kappa+1)^2}
+\frac{ \log \theta (\kappa+1) u^1 F_1}{2(\eta^{11})^2  \kappa }\right)
u^1_x\,dx.
\end{gather*}

\section{Appendix. Lift of Frobenius structures}\label{liftFrob}

Recall that a Frobenius manifold is a smooth manifold $M$ equipped with a pseudo-metric $g$ with Levi-Civita connection $\nabla$, a symmetric bilinear tensorial product on vector fields $\cdot$, and two vector fields $e, E$ such that
\begin{itemize}
\item $\nabla^\lambda_X Y = \nabla_XY + \lambda X \cdot Y$ defines a flat affine connection $\nabla^\lambda$ for all $\lambda \in \mathbf R$,
\item $\nabla e = 0$, $[e,E]=e$, and $e\cdot X = X$ for all vector fields $X$,
\item $\nabla(\nabla E) = 0$, $L_E\cdot=\cdot$, and $L_Eg=k g$ for some constant $k$.  
\end{itemize}
\begin{thm}\label{liftFrobenius}
Let $(M,g,\cdot,e,E)$ be a Frobenius manifold. Then the lifted tensors $\hat g, \hat \cdot, \hat e, \hat E$ define a structure of Frobenius manifold on $TM$. The Frobenius potential of the lifted structure is given by the lift of the Frobenius potential  $\hat F =v^i \frac{\partial F}{\partial u^i}.$

\end{thm}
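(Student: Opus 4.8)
For the proof of Theorem~\ref{liftFrobenius} the plan is to verify, one by one, the three defining conditions of a Frobenius manifold for the quadruple $(\hat g,\hat\cdot,\hat e,\hat E)$, exploiting systematically the compatibility of the complete lift with the geometric operations involved. As preliminaries I would record that $\hat g$ is a non-degenerate pseudo-metric, which is immediate from \eqref{liftg} since its matrix is block-antidiagonal with the invertible block $g$; that $\hat\nabla$ is the Levi-Civita connection of $\hat g$, as already established in the discussion preceding \eqref{liftpoisson}; and that $\hat\cdot$ is symmetric, since the symmetry $c^i_{jk}=c^i_{kj}$ is preserved by the explicit formula \eqref{liftdot}.

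For the first axiom I would prove the identity $\widehat{\nabla^\lambda}=\hat\nabla+\lambda\,\hat\cdot$ relating the lift of the deformed connection to the deformed lifted connection. This follows because the lift is linear and intertwines $\nabla$ with $\hat\nabla$ and $\cdot$ with $\hat\cdot$: viewing $\nabla^\lambda X$ as the endomorphism $Y\mapsto\nabla_Y X+\lambda\,Y\cdot X$, the first summand lifts to $\hat\nabla\hat X$ and the second to $\hat Y\mapsto\hat Y\,\hat\cdot\,\hat X$ by the very definition of $\hat\cdot$. For each fixed $\lambda$, $\nabla^\lambda$ is a flat affine connection on $M$, so Proposition~\ref{liftTR} applied to $\nabla^\lambda$ says its curvature lifts to the curvature of $\widehat{\nabla^\lambda}$; since the former vanishes, $\widehat{\nabla^\lambda}=\hat\nabla+\lambda\,\hat\cdot$ is flat for all $\lambda$, which is exactly the first axiom.

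The second and third axioms reduce to direct transcriptions under the lift. From $\nabla e=0$ and the fact that flat tensors lift to flat tensors I get $\hat\nabla\hat e=0$; the commutator identity $[e,E]=e$ lifts to $[\hat e,\hat E]=\hat e$; and $e\cdot X=X$ lifts to $\hat e\,\hat\cdot\,\hat X=\hat X$ on all lifted fields, which, since these span the tangent spaces off the zero section, forces $\hat e$ to be the unit of $\hat\cdot$. Likewise $\nabla(\nabla E)=0$ gives $\hat\nabla\hat\nabla\hat E=0$ upon applying the identity $\widehat{\nabla K}=\hat\nabla\hat K$ twice, while the invariance of the Lie derivative under the lift yields $L_{\hat E}\hat\cdot=\hat\cdot$ and $L_{\hat E}\hat g=k\,\hat g$ with the same constant $k$.

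It remains to identify the Frobenius potential. Here I would pass to flat coordinates $t^\alpha$ for $\nabla$ on $M$, in which $g$ becomes the constant metric $\eta_{\alpha\beta}$ and the lowered structure constants satisfy $c_{\alpha\beta\gamma}=\partial_\alpha\partial_\beta\partial_\gamma F$. The induced coordinates $(t^\alpha,\hat t^\alpha)$ on $TM$, with $\hat t^\alpha=v^i\,\partial t^\alpha/\partial u^i$, are flat for $\hat\nabla$, since all lifted Christoffel symbols vanish by \eqref{liftGammau}--\eqref{liftGammav}, and in them $\hat g$ takes the constant block-antidiagonal form with blocks $\eta$. Writing $\hat F=\sum_\alpha\hat t^\alpha\,\partial_\alpha F$ and differentiating three times, I would check that the only nonzero third derivatives, those with at most one fiber index, reproduce exactly the lowered structure constants $\hat c_{ABC}$ of $\hat\cdot$ read off from \eqref{liftdot}, while all third derivatives with two or more fiber indices vanish in agreement with $\hat\cdot$; this confirms that $\hat F$ is the potential. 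The main obstacle is precisely this last bookkeeping in lifted flat coordinates, matching the four types of third derivatives of $\hat F$ against the components of $\hat\cdot$, together with the careful identification $\widehat{\nabla^\lambda}=\hat\nabla+\lambda\,\hat\cdot$ underpinning the flatness axiom; everything else is an immediate application of the compatibility of the lift with $\nabla$, curvature, commutators and Lie derivatives.
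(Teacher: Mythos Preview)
Your proposal is correct and follows essentially the same route as the paper: non-degeneracy of $\hat g$ from the block form, identification of $\hat\nabla$ as the Levi-Civita connection of $\hat g$, the identity $\widehat{\nabla^\lambda}=\hat\nabla+\lambda\,\hat\cdot$, and then Proposition~\ref{liftTR} for flatness, with the remaining axioms deduced from the invariance of commutators, covariant derivatives and Lie derivatives under the lift. The only difference is that you are more thorough: the paper dismisses the second and third axioms with a single sentence and does not prove the claim about the potential at all, whereas you supply the flat-coordinate computation matching the third derivatives of $\hat F$ against the lowered structure constants of $\hat\cdot$.
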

\begin{proof}
From \eqref{liftg} one readily sees that $\hat g$ is symmetric and non-degenerate as soon as $g$ is.
If $\nabla$ is the Levi-Civita connection of $g$, then the lift $\hat \nabla$ is the Levi-Civita connection of $\hat g$.
This follows by uniqueness of Levi-Civita connection once one noticed that $\hat \nabla \hat g = 0$ and that $\hat \nabla$ is torsion free.
To see this notice that $\hat \nabla \hat g = 0$ for $\nabla g=0$, and that $\hat \nabla$ is torsion free by Proposition \ref{liftTR} and by torsion-freeness of $\nabla$. 

From \eqref{liftdot} is clear that $\hat \cdot$ is symmetric for $\cdot$ is.
Moreover, by definition of complete lift for connections it follows that $\hat \nabla^\lambda_XY = \hat \nabla_XY + \lambda X \hat \cdot Y$ for all $\lambda \in \mathbf R$, where now $X,Y$ are arbitrary tensor fields on $TM$. 
Thanks to Proposition \eqref{liftTR}, then $\hat \nabla^\lambda$ is flat.
All other conditions follows directly from definition of complete lift, and invariance of Lie derivative under complete lift.
\end{proof}

At this point  recall that a Frobenius manifold is said to be \textit{massive} if the algebra structure induced by the product $\cdot$ on any tangent space to $M$ is semisimple. More explicitly this means that there is no tangent vector $X$ on $M$ such that $X\cdot\ldots\cdot X=0$ for some finite product.
One may wonder whether semisemplicity assumption is preserved by complete lift or not.
In fact it is not, nor is possible to get a massive Frobenius manifold by complete lift of any Frobenius structure on $M$.
The reason is that any vector $Y$ which is tangent to the fibers of $TM$ is an idempotent for the algebra structure induced by $\hat \cdot$. 
Indeed any such vector has the local expression $Y^i \frac{\partial}{\partial y^i}$, whence it follows that $Y \hat \cdot Y=0$ thanks to \eqref{liftdot}.

\begin{rmk}
Given a Frobenius manifold  $(M,g,\cdot,e,E)$ one can define a hierarchy of quasilinear systems of PDEs of the form
$$u^i_{t_{p,\alpha}}=P^{ij}\frac{\delta H_{p,\alpha}}{\delta u^j},\qquad i=1,...,n,\,\,p=1,...,n,\,\,\alpha=0,1,2,3,...$$
where $P^{ij}$ is Hamiltonian operator of hydrodynamic type associated with the invariant metric $g$ and $ H_{p,\alpha}$ 
 are suitable local functionals in involution 
$$\{ H_{p,\alpha}, H_{q,\beta}\}_P=\int_{S^1}\frac{\delta H_{p,\alpha}}{\delta u^i}\left(g^{ij}\d_x+b^{ij}_ku^k_x\right)\frac{\delta H_{q,\beta}}{\delta u^j}\,dx=0$$
with respect to the associated Poisson bracket $\{,\}_P$. 
It is easy to check that the flows of the lifted hierarchy
$$u^i_{t_{p,\alpha}}=\hat{P}^{ij}\frac{\delta \hat{H}_{p,\alpha}}{\delta u^j},\qquad i=1,...,2n,\,\,p=1,...,n,\,\,\alpha=0,1,2,3,...$$
coincide with "half " of the flows of the principal hierarchy of the lifted Frobenius structure. The involutivity of the lifted
 Hamiltonian functionals 
$$\hat{H}_{p,\alpha}=\int_{S^1}v^{s}\d_s h_{p,\alpha}\,dx$$
follows from the identity  \eqref{main}. 
Indeed, due to this identity  any family of 1-forms in involution with respect to $\{\cdot,\cdot\}_P$ defines a family
 of Hamiltonians in involution with respect to $\{\cdot,\cdot\}_{\hat{P}}$. If the 1-forms are exact the Hamiltonians on the tangent
 bundle are the lift of the Hamiltonians on the base manifold. 

\end{rmk}

\section{Appendix. Lift of Hamiltonian vector fields}
Given a  Hamiltonian vector field $P\delta H$ with $\int_{S^1}h(u,u_x,...)\,dx$, 
we want to compare its complete lift 
$$
\widehat{P\delta H}=P\f{\delta H}{\delta u}\f{\d}{\d u}+\sum_kv_{(k)}\f{\d (P\f{\delta H}{\delta u})}{\d u_{(k)}}\f{\d}{\d v}
$$
with the vector field
$$
\hat{P}\delta\hat{H}=P\f{\delta H}{\delta u}\f{\d}{\d u}+\left(
P\f{\delta \hat{H}}{\delta u}+  \sum_{t}v_{(t)}\frac{\d P}{\d u_{(t)}} \f{\delta \hat{H}}{\delta v}\right)\f{\d}{\d v}
$$
where $\hat{H}[u,v]=\int_{S^1}v\f{\delta H}{\delta u}\,dx$. Since the components along $\f{\d}{\d u}$  coincide we have to show that
$$P\f{\delta \hat{H}}{\delta u}+  \sum_{t}v_{(t)}\frac{\d P}{\d u_{(t)}} \f{\delta \hat{H}}{\delta v}=\sum_kv_{(k)}\f{\d (P\f{\delta H}{\delta u})}{\d u_{(k)}}.$$
We observe that  
$$\f{\delta \hat{H}}{\delta v}=\f{\delta H}{\delta u},\qquad \f{\delta \hat{H}}{\delta u}=\f{\delta}{\delta u}\left(\sum_k\int_{S^1}v_{(k)}\f{\d h}{\d u_{(k)}}\,dx\right),$$
where the second identity has been obtained integrating by parts.
Using these facts and taking into account that the operators $\d_x$ and $\sum_{k}v_{(k)}\frac{\d}{\d u_{(k)}}$ commute, we get
\begin{eqnarray*}
&&P\f{\delta \hat{H}}{\delta u}+  \sum_{k}v_{(k)}\frac{\d P}{\d u_{(k)}} \f{\delta \hat{H}}{\delta v}=\\
&&P\f{\delta}{\delta u}\left(\sum_k\int_{S^1}v_{(k)}\f{\d h}{\d u_{(k)}}\,dx\right)+\sum_{k}v_{(k)}\frac{\d P}{\d u_{(k)}} \f{\delta H}{\delta u}=\\
&&P\sum_{h,k}(-1)^h\d_x^h\left(v_{(k)}\f{\d^2 h}{\d u_{(k)}\d u_{(h)}}\right)+\sum_{k}v_{(k)}\frac{\d P}{\d u_{(k)}} \f{\delta H}{\delta u}=\\
&&P\sum_kv_{(k)}\frac{\d}{\d u_{(k)}}\left[\sum_h(-1)^h\d_x^h\left(\f{\d h}{\d u_{(k)}}\right)\right]+\sum_{k}v_{(k)}\frac{\d P}{\d u_{(k)}} \f{\delta H}{\delta u}=\\
&&\sum_kv_{(k)}\f{\d (P\f{\delta H}{\delta u})}{\d u_{(k)}}.
\end{eqnarray*}
In the non scalar case the proof works in exactly the same way.

\section*{Ackowledgements} 
We would like to  thank Jenya Ferapontov and Raffaele Vitolo for useful discussions and  Joseph Krasil'shchik and Alik Verbovetsky
 for pointing out the reference \cite{Ku}. 
P.L.  is partially supported  by the Italian MIUR Research Project \emph{Teorie geometriche e analitiche dei sistemi Hamiltoniani in dimensioni finite e infinite} and by GNFM  Progetto Giovani 2014
 \emph{Aspetti geometrici e analitici dei sistemi integrabili}. 

\addcontentsline{toc}{section}{References}

\end{document}